\newcommand{\remove}[1]{}
\newcommand{\defproblem}[3]{
  \vspace{3mm}
\noindent\fbox{
  \begin{minipage}{.95\textwidth}
  \begin{tabular*}{\textwidth}{@{\extracolsep{\fill}}lr} #1  \\ \end{tabular*}
  {\bf{Input:}} #2  \\
  {\bf{Goal:}} #3
  \end{minipage}
  }
  \vspace{2mm}
  }
\newcommand{\cA}{{\mathcal A}}
\newcommand{\BB}{{\mathcal B}}
\newcommand{\FF}{\ensuremath{\mathcal{F}}\xspace}
\newcommand{\OO}{\mathcal{O}}
\newcommand{\PP}{\ensuremath{\mathcal{P}}\xspace}
\newcommand{\RR}{\ensuremath{\mathcal{R}}\xspace}
\newcommand{\nn}{{\mathbb N}}
\newcommand{\rmf}{\textsf{sig}}
\newcommand{\rmg}{\textsf{exts}}
\newcommand{\nka}{${\sf NP \subseteq coNP/poly}$}
\newcommand{\sol}{\sf Sol}
\newcommand{\kpath}{\textsc{Long-Path}}
\newcommand{\enumkpath}{\sc Enum Long-Path}
\newcommand{\enumkpathAll}{\sc Enum Long-Path At Least}
\newcommand{\enumkcycle}{\sc Enum Long-Cycle}
\newcommand{\enumkcycleAll}{\sc Enum Long-Cycle At Least}
\newcommand{\dcut}{\sc $d$-Cut}
\newcommand{\vc}{{\sf vc}} 
\newcommand{\diss}{{\sf diss}} 
\newcommand{\pos}{{\sf pos}} 
\newcommand{\order}{{\sf order}} 
\newcommand{\orderr}{{\sf order1}} 
\newcommand{\orderrr}{{\sf order2}} 
\newcommand{\pOne}{{\sf p1}} 
\newcommand{\pTwo}{{\sf p2}} 
\newcommand{\coc}{{\sf coc}} 
\newcommand{\cvd}{{\sf dtc}} 
\newcommand{\mcp}{{\sf maxcp}}
\newcommand{\vi}{{\sf vi}} 
\newcommand{\ccc}{{\sf cc}}
\newcommand{\sv}[1]{}
\newcommand{\Oh}{\mathcal{O}}
\newcommand{\pdpsk}{pd-ps kernel}
\tikzset{
        stars/.style={star,inner sep=2pt}
      }
\newtheoremstyle{freethm}% <name>
 {3pt}% <Space above>
 {3pt}% <Space below>
 {\itshape}% <Body font>
 {}% <Indent amount>
 {\bfseries}% <Theorem head font>
 {}% <Punctuation after theorem head>
 {5pt}% <Space after theorem headi>
 {\thmname{#1}\thmnumber{ #2}\thmnote{ (#3)}.} % Head spec
\theoremstyle{freethm}
\newtheorem{observation}{Observation}
\Crefname{observation}{Observation}{Observations}
\Crefname{figure}{Fig.}{Figs.}
\newif\iflong
\begin{document}
\title{Polynomial-Size Enumeration Kernelizations for Long Path Enumeration }
\titlerunning{Polynomial-Size Enumeration Kernelizations for Long Path Enumeration}
% If the paper title is too long for the running head, you can set
% an abbreviated paper title here
%
\author{Christian Komusiewicz\inst{1}\orcidID{0000-0003-0829-7032} \and 
Diptapriyo Majumdar \inst{2}\thanks{Supported by the Science and Engineering Research Board (SERB) grant SRG/2023/001592.}\orcidID{0000-0003-2677-4648} \and
Frank Sommer\inst{3}\thanks{Supported by the Alexander von Humboldt Foundation.}\orcidID{0000-0003-4034-525X}}
\authorrunning{Christian Komusiewicz, and Diptapriyo Majumdar, and Frank Sommer}
% First names are abbreviated in the running head.
% If there are more than two authors, 'et al.' is used.
%
\institute{University of Jena, Germany
\email{c.komusiewicz@uni-jena.de} \and
Indraprastha Institute of Information Technology Delhi, New Delhi, India \email{diptapriyo@iiitd.ac.in} \and
TU Wien, Austria  \email{fsommer@ac.tuwien.ac.at}}
\maketitle              % typeset the header of the contribution
\begin{abstract}

%%enum-k-path-abstract
Enumeration kernelization for parameterized enumeration problems was defined by Creignou et al. [Theory Comput. Syst. 2017] and was later refined by Golovach et al. [J. Comput. Syst. Sci. 2022] to polynomial-delay enumeration kernelization.
We consider  {\enumkpath}, the enumeration variant of the {\kpath} problem, from the perspective of enumeration kernelization.
Formally, given an undirected graph $G$ and an integer $k$, the objective of {\enumkpath} is to enumerate all paths of $G$ having exactly $k$ vertices.
We consider the structural parameters vertex cover number, dissociation number, and distance to clique and provide polynomial-delay enumeration kernels of polynomial size for each of these parameters.

%%% Local Variables: 
%%% mode: latex
%%% TeX-master: "Enum-k-Path-Main.tex"
%%% End: 

\keywords{Enumeration Problems \and Parameterized Enumeration \and Long Path \and Expansion Lemma \and Polynomial-Delay Enumeration Kernelization}
%\todo[inline]{add keywords}
\end{abstract}

\section{Introduction}
\label{sec:intro}

Kernelization is one of the most important contributions of parameterized complexity to the toolbox for handling hard computational problems~\cite{CyganFKLMPPS15,DowneyF13,FominLSZ19}. The idea in kernelization is to develop polynomial-time algorithms that shrink any instance~$x$ of a parameterized problem~$L$ with parameterization~$\kappa$ to an equivalent instance~$x'$, the size of which is upper-bounded by~$g(\kappa(x))$ for some function~$g$. 
The crucial innovation of kernelization is that the size bound~$g$ allows for a theory of the effectiveness of polynomial-time preprocessing, a goal that seems virtually unreachable in a nonparameterized setting. Accordingly, the aim is to obtain a kernelization with~$g$ growing as slowly as possible.

Originally limited to decision problems, several extensions of kernelization  adapt the notion to optimization~\cite{LPRS17}, counting~\cite{JansenS23,LokshtanovMSbhZ24}, and enumeration problems~\cite{CreignouMMSV17,Damaschke06,GolovachKKL22,gomes2024matchingmulticutalgorithmscomplexity}.
For enumeration problems, the focus of this work, a first attempt at a suitable kernelization definition, was to ask that~$x'$ should not only be equivalent to~$x$ but furthermore that~$x'$ contains \emph{all} solutions of~$x$~\cite{Damaschke06}. This definition has several drawbacks: First,~it is restricted to subset-minimization problems. Second, the parameter is necessarily at least as large as the solution size. Finally, one may hope for kernels only in the case that the total number of solutions is bounded by a function of~$\kappa(x)$. These drawbacks prompted Creignou et al.~\cite{CreignouMMSV17} to give a different definition of enumeration kernelization that asks for two algorithms: The first algorithm is the kernelization; it shrinks the input instance in polynomial time to the \emph{kernel}, an instance whose size is bounded in the parameter. The second algorithm is the \emph{solution-lifting} algorithm which allows to enumerate all solutions of the input instance from the solutions of the kernel. The crucial restriction for the solution-lifting algorithm is that it should be an FPT-delay algorithm, that is, the time spent between outputting consecutive solutions should be bounded by~$f(\kappa(x))\cdot |x|^{\OO(1)}$ for some function~$f$. 
This definition gives the desired property that an enumeration problem has an FPT-delay algorithm if and only if it has an enumeration kernel~\cite{CreignouMMSV17}. 
As discussed by Golovach et al.~\cite{GolovachKKL22}, however, any enumeration problem with an FPT-delay algorithm automatically admits an enumeration kernel of \emph{constant size}. 
The crux of kernelization---providing a measure~$g$ of the effectiveness of data reduction---is undermined by this fact. 
This led Golovach et al.~\cite{GolovachKKL22} to define enumeration kernels whose solution-lifting algorithms are only allowed to have \emph{polynomial} delay. 
This particular running time bound has several desirable consequences~\cite{GolovachKKL22}: 
First, a problem has an enumeration kernel of constant size if and only if it admits a polynomial-delay enumeration algorithm. 
This now directly corresponds to kernelization of decision problems, where a problem admits a kernelization of constant size if and only if it is polynomial-time solvable. Second, a parameterized enumeration problem admits an FPT-delay algorithm if and only if it admits a polynomial-delay enumeration kernel. Hence, the stricter delay requirement for the solution-lifting algorithm still suffices to capture all problems with FPT-delay. The size of this automatically implied kernel is, however, exponential in the parameter. This now makes it interesting to ask for polynomial-delay enumeration kernelization of {\em polynomial size} (henceforth \emph{\pdpsk}). A further informal argument in favor of polynomial-delay enumeration kernels is that the limitation to polynomial delay often results in conceptually simple solution-lifting algorithms. 
Thus, solutions that are not in the kernel are similar to the kernel solutions and the kernel essentially captures all interesting solution types.

Summarizing, a \pdpsk{} provides {(i)} a compact representation of all types of solutions to the input instance, and {(ii)} a guarantee that for every solution of the output instance, the collection of solutions that are not ``contained'' in the kernel can be enumerated easily, that is, in polynomial delay. A further advantage of polynomial-delay enumeration kernels, which we discuss in more detail in \Cref{sec:prelims}, is that they provide a means to bound the total running time that an enumeration algorithm spends on outputs without polynomial delay. In particular, one can observe that the smaller the kernel, the better this running time bound will be. 

By the above, enumeration kernels with polynomial-delay solution-lifting algorithms are a desirable type of algorithms for hard parameterized enumeration problems and obtaining a small kernel size is an interesting goal in the design of these algorithms. A priori it is unclear, however,  whether there are many problems for which we can find such kernelizations: The only currently known \pdpsk{}s are given for enumeration variants of the \textsc{Matching Cut} and {\dcut} problem~\cite{GolovachKKL22,MajumdarR23}, two graph problems that have received some interest but are certainly not among the most famous ones.
Thus, it is open whether there are further examples of \pdpsk{}s, ideally for well-studied hard problems. In this work, we thus consider the following well-known problem as a candidate for application of the framework.

\defproblem{{\enumkpath}}{An undirected graph $G = (V, E)$ and an integer $k$.}{Enumerate all $k$-paths, that is, all paths having exactly $k$~distinct vertices in $G$.}

The corresponding decision version of the {\enumkpath} problem is the NP-complete {\kpath} problem which asks if an undirected graph $G$ has a $k$-path.  
 {\kpath} is well-studied from the perspective of parameterized complexity \cite{BjorklundHKK17,FellowsLMMRS09,FominLPS16,FominLPS17,GoyalMPZ15} and kernelization \cite{BodlaenderDFH09,BodlaenderJK13,Jansen17,JansenPW19}.
In particular, {\kpath} admits a polynomial kernel when parameterized by the vertex cover number, the distance to cluster, and the max-leaf number of the input graph~\cite{BodlaenderJK13}. 
In contrast, {\kpath} admits no polynomial kernel when parameterized by the solution size~$k$~\cite{BodlaenderDFH09} or by the (vertex deletion) distance to outerplanar graphs~\cite{BodlaenderJK13}.

\paragraph{Our Contributions.} The two above-mentioned kernelization hardness results for \kpath{} directly exclude \pdpsk{}s for parameterization by solution size~$k$ or by distance to outerplanar graphs. We thus consider {\enumkpath} parameterized by two parameters that are lower-bounded by the distance to outerplanar graphs: the vertex cover number (${\vc}$) and the dissociation number (${\diss}$) of the input graph.
We also consider the distance to clique (${\cvd}$) of the input graph which is incomparable to ${\vc}$ and $\diss$.
\iflong
We refer to \Cref{fig:hierarchy-parameters} for an illustration of the parameter relations and our results.
\begin{figure}[t]
\centering
	\includegraphics[scale=0.3]{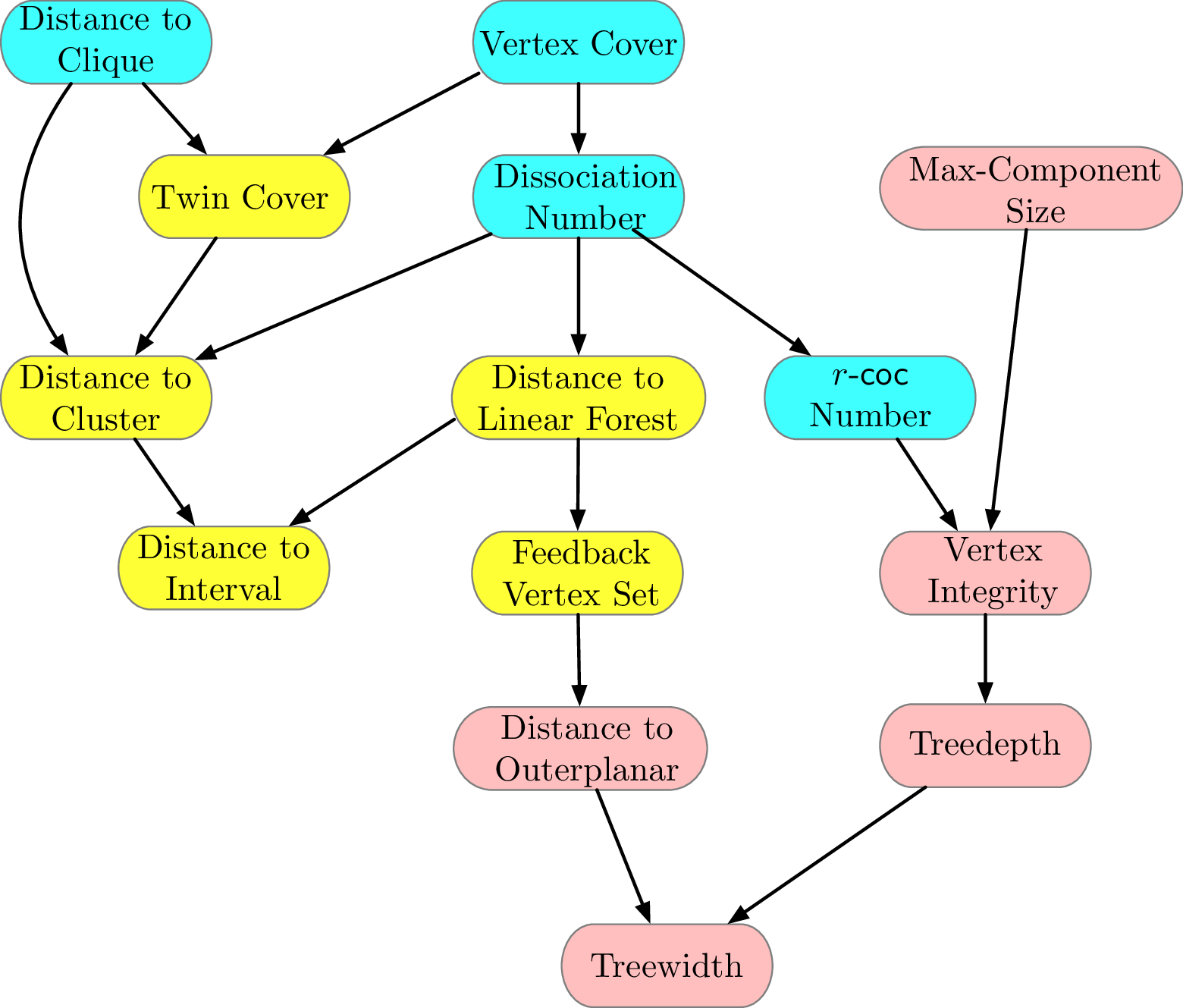}
	\caption{A hierarchy of parameters. A cyan box indicates polynomial sized enumeration kernels. A pink box indicates the non-existence of polynomial sized enumeration kernels. A yellow box indicates open status.
	An edge from a parameter~$\alpha$ to a parameter~$\beta$
below~$\alpha$ means that there is a function~$f$ such that~$\beta \le f(\alpha)$ in every graph.}
\label{fig:hierarchy-parameters}	
\end{figure}
\fi
Our first main result is the following.
%
%Moreover, {\kpath} and {\kcycle} do not admit polynomial kernels when parameterized by the treewidth of the input graphs \cite{BodlaenderJK13}. 
\begin{restatable}{theorem}{thmOne}
\label{thm:k-path-vc-result}
{\enumkpath} parameterized by ${\vc}$ admits an $\OO(n\cdot m\cdot k^2)$-delay enumeration kernel with $\OO({\vc}^2)$ vertices.	
\end{restatable}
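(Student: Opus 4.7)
The plan is to follow the classic marking paradigm for vertex-cover kernels, adapted to the enumeration setting. I start by computing a vertex cover $S$ of $G$ with $|S| = O(\vc)$ (using, e.g., the 2-approximation or an FPT algorithm), and set $I := V(G)\setminus S$. Since $I$ is independent, every $k$-path of $G$ contains at least $\lfloor (k-1)/2 \rfloor$ vertices of $S$, so when $k > 2|S|+1$ no $k$-path exists and I output a trivial no-instance kernel. Otherwise $k \le 2|S|+1 = O(\vc)$, which I will use crucially to bound the number of $I$-vertices on any $k$-path by $\lceil (k+1)/2 \rceil = O(\vc)$.

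For the kernel, I apply a marking rule to $I$. For every unordered pair $\{a,b\} \subseteq S$, set $I_{a,b} := N(a) \cap N(b) \cap I$ and mark a bounded-size set $M_{a,b} \subseteq I_{a,b}$; analogously, for every $a \in S$ mark a bounded-size set $M_a \subseteq N(a) \cap I$ to handle endpoint $I$-vertices whose single $S$-neighbour on a path is $a$. The kernel is $G' := G[S \cup M]$ with $M := \bigl(\bigcup_{\{a,b\}} M_{a,b}\bigr) \cup \bigl(\bigcup_{a} M_{a}\bigr)$. Since there are $\binom{|S|}{2} = O(\vc^2)$ pairs and $|S| = O(\vc)$ singletons, marking only a constant number of vertices per pair and per singleton gives $|V(G')| = O(\vc^2)$.

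For the solution-lifting algorithm, I assign to each $k$-path $P$ of $G$ a \emph{signature}: the ordered sequence of $S$-vertices of $P$ together with, for each $I$-position, the pair $\{a,b\} \subseteq S$ (or the singleton $\{a\}$ if $P$ terminates at that position) consisting of its $S$-neighbours on $P$. Given a kernel $k$-path $P'$, the lifting algorithm computes $\sigma(P')$ and then enumerates, by a backtracking procedure operating on the original graph $G$, all $k$-paths $P$ of $G$ with $\sigma(P) = \sigma(P')$: at each of the $O(k)$ positions dictated by $\sigma(P')$ it iterates over the admissible $I$-vertices in $G$ (not only in $G'$), rejecting those already used. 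A canonical-form rule based on a fixed total order on $V(G)$ ensures that every $P$ is produced by exactly one kernel-path. Each backtracking step inspects at most $O(nm)$ adjacencies; together with the $O(k)$ depth this yields the stated $O(n\cdot m\cdot k^2)$ delay.

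The main technical obstacle is calibrating the marking so that the kernel has only $O(\vc^2)$ vertices while simultaneously guaranteeing correctness, i.e., that $M$ realizes every signature that is realizable in $G$. This amounts to showing that for every realizable signature $\sigma$, the sets $\{M_{a,b}\}$ (and $\{M_a\}$) indexed by the $I$-positions of $\sigma$ admit a system of distinct representatives. A direct application of Hall's theorem, combined with the bound on the number of $I$-positions and a judicious choice of marking sizes---possibly via the \qExpLem{} listed among the paper's macros---should yield the required condition while keeping the per-pair marking constant. The remaining issues (endpoint corner cases, verifying $S$-$S$ edges required by the signature, and deduplication across kernel-paths) are routine once this marking invariant is established.
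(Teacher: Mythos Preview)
Your high-level plan matches the paper's: compute a vertex cover $S$, mark $\OO(\vc^2)$ vertices of $I$ via the expansion lemma on the auxiliary bipartite graph with parts $\binom{S}{2}$ and $I$, define signatures, and lift. The real gap is the solution-lifting algorithm. Enumerating all $k$-paths of $G$ with a fixed signature amounts to enumerating all systems of distinct representatives for the family $(I_{a_i,b_i})_i$---equivalently, all matchings saturating one side of a bipartite graph indexed by the $I$-positions. Plain backtracking that only ``rejects those already used'' does \emph{not} give polynomial delay: an early choice can block all completions, and you may traverse exponentially many dead branches before the next output. Your delay claim ``$\OO(nm)$ per step times $\OO(k)$ depth gives $\OO(n m k^2)$'' is unjustified on both counts (and does not even multiply out to the stated bound).

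The paper closes this gap by reducing to polynomial-delay enumeration of \emph{maximum matchings} (Kobayashi--Kurita--Wasa, $\OO(\nu\cdot m)$ delay) in a tailored auxiliary graph $H^*$: for the lexicographically smallest kernel path with a given signature, it iterates over the first position $i$ and non-kernel vertex $u$ to be placed there, builds $H^*$ so that matchings saturating one side correspond exactly to the desired $k$-paths, and runs the black-box enumerator. The outer loop over $(i,u)$ contributes a factor $\OO(nk)$; combined with $\nu\le k$ this yields $\OO(nmk^2)$. Two further points where your sketch is too loose: the paper's signature also fixes the ``rare'' $I$-vertices $\RR=(I_2\setminus\widehat{B})\cup\{\text{unique pendants}\}$, and precisely this interaction with the $3$-expansion is what makes the replacement argument in the correctness lemma (every $G$-signature is realised in $G'$) go through; and the deduplication rule is made concrete by having only the lex-smallest kernel path per signature enumerate the non-kernel equivalents, while every other kernel path outputs only itself.
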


The quadratic bound on the number of vertices is the same as for the known kernel for the decision problem~\cite{BodlaenderJK13}. Thus, \Cref{thm:k-path-vc-result} generalizes the best decision kernel to the enumeration problem. 
To prove \Cref{thm:k-path-vc-result} we make use of (i) the ``new expansion lemma''~\cite{FominLLSTZ19,FominLSZ19} and (ii) the enumeration of maximal matchings of a given cardinality with polynomial delay~\cite{KobayashiKW21,KobayashiKW22}.
To the best of our knowledge, this is the first application of the expansion lemma and its variants~\cite{FominLLSTZ19,FominLSZ19,Thomasse10} in the context of enumeration algorithms.

\iflong 
More specifically, to compute the kernel, we use the new expansion lemma on an auxiliary bipartite graph where one partite set corresponds to pairs of vertices from the vertex cover and the other partite set corresponds to vertices of the independent set having degree at least two.
In the kernel, we preserve (1) all vertices from the vertex cover, (2) the vertices of the independent set of degree at least two identified by the new expansion lemma, and (3) a small number of degree-1 vertices of the independent set.
We then define a \emph{signature} of a $k$-path~$P$. 
Roughly speaking the signature is a mapping of all vertices of the vertex cover of~$P$ to their position in~$P$.
The signature is then used to define the notion of \emph{equivalent paths}.
Since in the kernel there might be many equivalent $k$-paths, in the solution lifting algorithm we distinguish whether a $k$-path~$P$ is minimal or not.
If~$P$ is not minimal, we only output~$P$, and otherwise, if~$P$ is minimal, we enumerate all equivalent $k$-paths containing at least one non-kernel vertex.
This ensures that each $k$-path of the input graph is output exactly once.
The expansion computed in the kernelization is essential for guaranteeing the delay of the solution-lifting algorithm: 
whenever we replace an independent set of the kernel with a non-kernel one, the expansion ensures that this sequence can be extended to a $k$-path of~$G$.
\fi

Our next result is a polynomial-delay enumeration kernel when the parameter is ${\diss}$, the dissociation number of the graph. This number is the vertex deletion distance to graphs where each connected component has at most two vertices.   

%Formally, we prove the following result.

\begin{restatable}{theorem}{thmTwo}
\label{theorem:k-path-diss-enum-kernel}
{\enumkpath} parameterized by ${\diss}$ admits an $\OO({\diss}\cdot n)$-delay enumeration kernelization with $\OO({\diss}^3)$ vertices.
\end{restatable}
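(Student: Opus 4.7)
}
The plan is to lift the strategy of \Cref{thm:k-path-vc-result} from vertex cover to dissociation. Start by computing a dissociation set $D\subseteq V(G)$ with $|D|=\OO({\diss})$ in polynomial time using a known constant-factor approximation; then $G-D$ decomposes into a set $I$ of isolated vertices and a matching $M$. The key structural observation is that between any two consecutive $D$-vertices on a $k$-path, and at its two endpoints, at most two non-$D$ vertices can appear (they must induce an edge of $M$). Hence any $k$-path visits at most $3|D|+2$ vertices, so if $k>3{\diss}+2$ we output nothing, and from now on $k\in\OO({\diss})$.

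The reduction classifies possible \emph{segments} of a path relative to $D$. For each ordered pair $(d_1,d_2)\in D\times D$, the three segment types are: (a) the direct edge $d_1d_2$, (b) a length-$2$ subpath $d_1xd_2$ through some $x\in I$ adjacent to both, or (c) a length-$3$ subpath $d_1uvd_2$ through some edge $\{u,v\}\in M$ with $d_1u,vd_2\in E(G)$. Analogous \emph{end-segments} from $d\in D$ ending in $I$ or in $M$ must also be accounted for. For each segment type and each of the $\OO({\diss}^2)$ ordered pairs I build an auxiliary bipartite graph whose one side collects the candidate gadgets (the $x$'s, or the $M$-edges) and whose other side consists of $q={\diss}$ copies of the slot, and apply the new expansion lemma from \Cref{thm:k-path-vc-result}'s toolbox. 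This retains a set of $\OO({\diss})$ gadgets per pair that admits a $q$-expansion into the deleted gadgets. Adding the $\OO({\diss})$ vertices needed for end-segments per $d\in D$, the total number of kept vertices is $\OO({\diss}^3)$, as required.

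For the solution-lifting algorithm, define the \emph{signature} of a $k$-path $P$ as the ordered sequence of its $D$-vertices together with, for each segment between consecutive $D$-vertices and for each end-segment, the type label (a/b/c) from above. Two $k$-paths are \emph{equivalent} if they share the same signature; fix a canonical representative per equivalence class, e.g.\ the lexicographically smallest one under a fixed ordering of the non-$D$ gadgets restricted to the kernel. Given a $k$-path $P$ produced by the enumeration oracle on the kernel, first test whether $P$ is the canonical representative of its equivalence class within the kernel; if not, discard it to avoid duplicates. Otherwise, enumerate all equivalent $k$-paths of $G$ by iterating over the segments and swapping each used gadget with any compatible alternative from the full graph. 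The expansion guarantees that every signature realizable in $G$ is also realizable in the kernel, so canonical kernel representatives cover all equivalence classes of $k$-paths of $G$, and every $k$-path of $G$ is produced exactly once.

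The main obstacle will be making simultaneous gadget swaps across several segments consistent: changing the $M$-edge or $I$-vertex used in one segment might reuse a vertex already consumed by another segment, breaking simplicity of the path. I will resolve this as in \Cref{thm:k-path-vc-result} using the $q$-expansion property, which provides a Hall-type matching from used gadgets to free alternatives and thus a safe schedule for consecutive swaps. Counting the cost, scanning candidate gadgets for one segment takes $\OO(n)$ time, and there are $\OO({\diss})$ segments, yielding the claimed $\OO({\diss}\cdot n)$ delay per output.
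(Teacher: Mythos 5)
Your high-level plan (bound $k$ by $\OO(\diss)$, mark a bounded number of gadgets per slot, define a signature, pick a canonical representative per class, enumerate non-kernel paths from the canonical one) matches the paper's strategy. However, you diverge in the central technical choice, and the divergence hides a weakness. The paper explicitly considers and then \emph{rejects} transplanting the expansion-lemma machinery of \Cref{thm:k-path-vc-result} to the dissociation setting, precisely because a segment between two modulator vertices can consist of one vertex from a size-1 component, one vertex from a size-2 component, or two vertices from a size-2 component, which would require running the expansion lemma separately for each interaction type and interleaving the resulting guarantees. It instead uses a direct threshold: for each (ordered) pair $(u,w)$ of modulator vertices and each segment type $t\in\{0,1,2\}$, if there are at most $p=2|X|+3$ realizing components, all of them are marked (``rare''); otherwise $p$ of them are marked (``frequent''). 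The threshold $p$ is chosen so that any $k$-path can block at most $p-1$ components in a slot, which yields the ``prolongation property'' needed for both the equivalence-class argument and the polynomial delay of the solution-lifting.

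Your expansion-lemma replacement is not just a stylistic difference -- as described it does not actually work. Building a bipartite graph whose one side is ``$q={\diss}$ copies of the slot'' and whose other side is the gadgets for that one slot makes every vertex of the slot side have the same neighborhood, so the expansion lemma reduces to selecting some number of gadgets; the lemma's actual power (handling competing demands across many slots with overlapping candidate sets) is never used, and in particular property~(i) of \Cref{prop:new-q-exp-lemma} becomes vacuous. Applied as written with $q$ on the slot side and the lemma's $q$-expansion, you either keep $q\cdot q$ gadgets per pair (blowing the bound to $\OO(\diss^4)$) or, if you instead intend a $3$-expansion over $\diss$ copies, you have simply re-derived a threshold of size $\OO(\diss)$ in a roundabout way. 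Either way you should replace this by an explicit threshold as the paper does. Two further gaps: first, your signature records only the segment \emph{type}, not the identity and position of the rare gadgets; the paper's signature maps every vertex of $X\cup\RR$ to its index, and this finer signature is what guarantees that every equivalence class is realized in the kernel and that the lexicographic-minimality test in \Cref{lemma:algo-suitable-check} is a local, position-by-position greedy check rather than a global search. Second, ``a Hall-type matching from used gadgets to free alternatives and thus a safe schedule for consecutive swaps'' does not by itself give polynomial delay or exactly-once output; the paper achieves these via the prolongation property combined with an explicit two-phase backtracking enumeration (\Cref{lemma:solution-lifting-suitable-path}), and you would need a comparable concrete argument that every partial choice of gadgets extends to a full $k$-path of $G$ and that the backtracking never dead-ends.
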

\iflong
Ideally, we again want to use the new expansion lemma in the kernelization algorithm.
We face a new issue, however: 
The remaining vertices~$I$ of the input graph~$G$ which are not in the dissociation set~$X$ now form components of size~1 and~2.
Furthermore, in a $k$-path~$P$ between two vertices of~$X$ there can be one or two vertices from~$I$. Moreover, vertices contained in size-2 components of~$I$ can also be used as single vertex between two vertices of~$X$ in~$P$.
Consequently, we would need to employ the new expansion lemma for components of~$I$ of size~1 and size~2 individually.
This complicates the application of the expansion lemma. 
Instead, we can achieve a polynomial kernel of cubic size much easier:
We distinguish \emph{rare} and \emph{frequent} vertices of~$I$.
Roughly speaking, a vertex~$u$ is \emph{rare} if there are only a small number of vertices with the same neighborhood with respect to~$X$, and otherwise~$u$ is frequent.
In the kernel we preserve all rare vertices and sufficiently many frequent vertices.
Our solution lifting algorithm then uses similar ideas as the one for~${\vc}$.
\fi
Our approach even gives a kernelization for the more general~$r$-${\coc}$ parameter, which is the vertex deletion distance to graphs whose connected components have size at most~$r$.
\begin{restatable}{corollary}{corTwo}
\label{cor:k-path-r-coc-enum-kernel}
{\enumkpath} parameterized by $r$-${\coc}$ for constant~$r$ admits an~$\OO(r$-${\coc}\cdot n\cdot 2^r)$-delay enumeration kernelization with $\OO((r$-${\coc})^3)$ vertices that can  be computed  in  $\OO(2^r\cdot  r\cdot (r$-${\coc})^2\cdot n)$~time.
\end{restatable}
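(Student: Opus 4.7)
The plan is to lift the strategy of \Cref{theorem:k-path-diss-enum-kernel} from single vertices of the independent part to connected components of size at most~$r$. Let $X$ be an $r$-${\coc}$ set of~$G$ of size~$c$, so that every connected component of $G[V \setminus X]$ has at most~$r$ vertices; such a set is computable in $\OO(rn)$ time by a BFS over $G - X$. The overall goal is to mark a polynomial-size vertex set $K \supseteq X$ whose induced subgraph contains, for every $k$-path of~$G$, an equivalent kernel representative, and then to enumerate the remaining $k$-paths via a solution-lifting routine.

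First, I would classify each connected component~$C$ of $G[V \setminus X]$ by its collection of \emph{slot types}. A slot type for~$C$ is a tuple $(S, \pi, x, y)$ where $S \subseteq V(C)$, $\pi$ is an ordering of~$S$ inducing a path in $G[S]$, and $x, y \in X \cup \{\bot\}$ are the attachment vertices, where $x$ is either adjacent in~$G$ to the first vertex of~$\pi$ or equals~$\bot$ (meaning that side of~$S$ carries an endpoint of the $k$-path), and symmetrically for~$y$. Since $|V(C)| \leq r$, the total number of slot types across all components is bounded by $\OO(2^r \cdot r! \cdot c^2)$. In the kernel, for each slot type, I would retain up to $c+1$ witnessing components. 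Since a single $k$-path visits at most $c+1$ maximal gaps between consecutive vertices of~$X$ and therefore uses at most $c+1$ components of $G[V \setminus X]$, keeping $c+1$ representatives per slot type ensures that whenever a $k$-path~$P$ requires a given slot type, at least one retained representative is not already occupied elsewhere on~$P$ and can substitute. The kernel thus contains $\OO(2^r \cdot r! \cdot c^3)$ components and hence $\OO(r \cdot 2^r \cdot r! \cdot c^3) = \OO(c^3)$ vertices for constant~$r$.

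Next, I would design the solution-lifting algorithm in direct analogy with the ${\diss}$ case. The signature of a $k$-path~$P$ records the ordered sequence of its vertices from~$X$ together with, for each of its gaps, the slot type filling that gap. Two $k$-paths are \emph{equivalent} if they share the same signature. For each enumerated kernel solution~$P'$, iterate over its at most $c+1$ gaps, and for each gap enumerate all components of $G[V \setminus X]$ with the matching slot type; deduplicate by emitting equivalent $k$-paths only when $P'$ is the lexicographically smallest kernel member of its equivalence class, exactly as for~${\vc}$ and~${\diss}$. This yields the claimed $\OO(r\text{-}{\coc} \cdot n \cdot 2^r)$ delay, because per gap we scan the $n$ components against their $\OO(2^r)$ internal ordered configurations.

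The main obstacle will be a careful accounting argument showing that the enumeration is both \emph{complete} and \emph{non-redundant}: every $k$-path of~$G$ has an equivalent kernel path, and the lifting emits each $k$-path of~$G$ exactly once. Completeness uses the $(c+1)$-representative marking, while non-redundancy requires the minimality convention on~$P'$, mirroring the reasoning of \Cref{theorem:k-path-diss-enum-kernel}. Finally, the kernel-construction time of $\OO(2^r \cdot r \cdot c^2 \cdot n)$ is obtained by enumerating the $\OO(n)$ components, computing each vertex's $X$-neighborhood in $\OO(rc)$ time, and then, per component, scanning the $\OO(2^r \cdot r!)$ ordered subsets against the $\OO(c^2)$ endpoint pairs in $\OO(r)$ time each; for constant~$r$ the $r!$ factor is absorbed into the constant.
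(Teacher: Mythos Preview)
Your strategy is the same as the paper's---mark enough representative components of $G-X$, define path equivalence via a signature, and lift only from the lexicographically minimal kernel representative---but two points keep the proposal from going through as written.

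First, your notion of ``slot type'' is internally inconsistent. You define it as $(S,\pi,x,y)$ with $S\subseteq V(C)$, which is tied to a specific component~$C$, yet immediately speak of ``retaining up to $c+1$ witnessing components'' per slot type and bound the number of types by $\OO(2^r\,r!\,c^2)$. If $S$ is component-specific, each type has exactly one witness and the signature determines the whole path. What you actually need (and what the paper uses) is the abstract type $(t,x,y)$ with $t\in[r]$---the paper's ``$t$-frequent'' notion---of which there are only $\OO(r\,c^2)$; the $2^r$ (or $r!$) factor belongs to checking, per component, whether it realizes a given type, not to counting types.

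Second, the paper keeps the rare/frequent split and puts all \emph{rare} vertices into the signature, so that solution lifting varies only the frequent gaps, where the prolongation property guarantees a free witness after any partial replacement. You drop this split and record only abstract slot types in the signature, which makes the equivalence classes coarser and forces the lifting to vary every gap. That can be made correct, but your bound of $c+1$ witnesses per type is not justified: a single component of size~$r$ may be split across several gaps (vertex~$a$ of~$C$ in one segment, vertex~$b$ of~$C$ in another), so one used component can block a witness for several types simultaneously. This is precisely why the paper inflates the threshold to $p=r\cdot|X|+1$ rather than $|X|+1$, and why the prolongation argument there goes through by simple counting; with only $c+1$ witnesses and no rare/frequent separation you would need a matching-style argument that you have not given.
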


Ideally, we would like to provide kernels for even smaller parameters such as the vertex integrity or the treedepth~of~$G$. For these parameters, however, already decision kernels are unlikely due to the following folklore result. Here, $\mcp$ denotes the size of a largest connected component. This parameter is at least as large as vertex integrity and treedepth. 
\begin{proposition}
\label{prop:lower-bound-max-cp}
{\enumkpath} has no \pdpsk{} when parameterized by ${\mcp}$ unless {\nka}.
\end{proposition}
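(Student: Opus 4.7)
The plan is to reduce to a lower bound on \emph{decision} kernels for \kpath{} and appeal to the standard OR-composition framework.

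The first step is to invoke the observation (implicit in the definitions of Creignou et al.\ and Golovach et al.) that a \pdpsk{} for an enumeration problem yields a polynomial-size kernel for its decision version under the same parameter: if $(K, k')$ is the output of the enumeration kernelization on $(G, k)$, then the solution-lifting algorithm enumerates all $k$-paths of $G$ using the $k'$-paths of $K$, so the former set is empty iff the latter is. Hence $(K, k')$ is already a polynomial-size decision kernel for \kpath{}, and it suffices to rule out a polynomial decision kernel for \kpath{} parameterized by ${\mcp}$ under {\nka}.

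The second step is a classical OR-composition, modeled after the lower bound of Bodlaender, Downey, Fellows, and Hermelin for the parameter~$k$. Given $t$ instances $(G_1, k), \ldots, (G_t, k)$ of \kpath{} with a common integer~$k$, I would form the disjoint union $G := G_1 \sqcup \cdots \sqcup G_t$ and output $(G, k)$. Correctness is immediate: any $k$-path of $G$ lies in a single connected component and hence in some $G_i$, so $(G, k)$ is a YES-instance of \kpath{} iff at least one $(G_i, k)$ is. For the parameter bound, observe that ${\mcp}(G) = \max_i {\mcp}(G_i)$, so the output parameter is at most the maximum input parameter; in particular, it is polynomially bounded in the maximum input instance size. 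Combined with the NP-hardness of \kpath{}, the cross-composition framework of Bodlaender-Jansen-Kratsch (via Fortnow-Santhanam) then yields that \kpath{} parameterized by ${\mcp}$ admits no polynomial kernel unless {\nka}, which establishes the proposition.

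No significant obstacle is expected: the first step is essentially unwinding the definition of a \pdpsk{}, and the second step is a textbook OR-composition by disjoint union that happens to respect the parameter ${\mcp}$ trivially, since disjoint union cannot merge connected components.
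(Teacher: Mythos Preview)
Your proposal is correct and follows essentially the same approach as the paper: the paper's proof is a two-sentence sketch observing that the disjoint union of instances gives a trivial OR-cross-composition for \kpath{} parameterized by ${\mcp}$, and that this lower bound transfers to {\enumkpath} (which is exactly your first step, that a \pdpsk{} yields a polynomial decision kernel). You have simply spelled out both steps in more detail than the paper does.
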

\iflong
\begin{proof}
The disjoint union of $2^t$~instances of \textsc{Long-Path} yields a trivial or-cross composition showing that \textsc{Long-Path} has no polynomial size compression for ${\mcp}$ unless {\nka}.
This result transfers to {\enumkpath}.
\end{proof}
\fi

To obtain  \Cref{thm:k-path-vc-result,theorem:k-path-diss-enum-kernel}, we exploited that the length of a longest path can be upper-bounded by $\OO({\vc})$ and $\OO({\diss})$, respectively.
\iflong
This property does not hold for ${\cvd}$, the vertex deletion distance to cliques since the length of a longest path exceeds ${\cvd}$ if the clique is larger than ${\cvd}$.

To resolve the issue of the unbounded path length, our kernelization works in 2 stages:
first the clique and parameter~$k$ are reduced simultaneously and second only clique vertices are removed.
Afterwards, we employ a marking scheme on the remaining clique similar to our marking scheme for~${\diss}$.
For the solution lifting algorithm, we also need to be more careful since the parameter~$k'$ might be much smaller than~$k$.
To ensure that each $k$-path is enumerated exactly once, we use two variants of the solution lifting algorithm. This gives our final main result, a polynomial-delay enumeration kernelization for the parameter ${\cvd}$.
\else
To show that this is not a necessary condition for the existence of  \pdpsk{}s, we consider parameterization by ${\cvd}$, the vertex deletion distance to cliques which can be smaller than~$k$. We obtain the following.
\fi

\begin{restatable}{theorem}{thmThree}
\label{thm-k-path-cvd}
{\enumkpath} parameterized by $\cvd$ admits an $\OO(n\cdot \cvd(G))$-delay enumeration kernelization with $\OO({\cvd}^3)$~vertices.	
\end{restatable}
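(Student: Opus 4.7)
The plan is to exploit the two-layer structure: $G-X$ is a disjoint union of cliques $\CC = \{C_1,\ldots,C_t\}$ for a cluster vertex deletion set $X$ of size at most $\cvd$, which can be computed (approximately) in FPT time. The crucial structural observation is that any $k$-path $P$ of $G$ is partitioned by its vertices in $X$ into at most $\cvd + 1$ contiguous \emph{clique-segments}, each lying in a single $C_i$. The new difficulty compared to the $\vc$ and $\diss$ cases is that a $k$-path in $G$ can have length much larger than any polynomial in $\cvd$ when some clique is huge, so one cannot first bound $k$ and then only mark clique vertices.

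I would design the kernelization in two stages. Stage~1 simultaneously reduces $k$ and the sizes of the largest cliques. Fix a threshold $\tau = \Theta(\cvd^2)$. As long as $k > \tau$, a pigeonhole on the at-most-$(\cvd+1)$ clique-segments shows that every $k$-path contains a segment of length $\Omega(\cvd)$ in some clique of size $\Omega(\cvd)$. Pick such a clique $C^*$, fix a block $B \subseteq C^*$ of $\Theta(\cvd)$ interior vertices, delete $B$, and decrement the parameter by $|B|$. Correctness: any $k$-path of $G$ whose $C^*$-segment has length at least $|B|$ can be shortened to a $(k-|B|)$-path by dropping $B$ from that segment (always valid since $C^*$ remains a clique), and any such $(k-|B|)$-path of the reduced instance can be expanded back by reinserting $B$. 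I iterate Stage~1 until $k \le \tau$. In Stage~2, with $k$ now polynomial in $\cvd$, a marking scheme analogous to the one used for $\diss$ is applied: vertices of each clique are classified by their $X$-neighborhood-types, rare vertices are retained, and $\OO(\cvd)$ frequent representatives per realized type are marked; additionally, only $\OO(\cvd^2)$ ``relevant'' cliques are retained, one group per combinatorial interaction with $X$. A counting argument then gives the $\OO(\cvd^3)$ vertex bound.

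The solution-lifting algorithm inverts the two stages in reverse. Given a $k'$-path $P'$ in the kernel, first invert Stage~2 by enumerating with polynomial delay all paths obtained from $P'$ by substituting each clique vertex with any non-kernel vertex of the same type in the same clique. To prevent duplicate outputs, fix canonical representatives in each type-class (analogous to the $\vc$ minimality check) and enumerate only paths whose canonicalization is $P'$. Then invert Stage~1 by reinserting each removed block $B$ at the correct position inside the unique $C^*$-segment of the current path; since $C^*$ is a clique, each such insertion is an $\OO(\cvd)$ operation. The total delay per output is $\OO(n \cdot \cvd)$, dominated by writing down the $\OO(n)$-length path and performing $\OO(\cvd)$ local expansions and swaps.

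The main obstacle is the correctness of Stage~1 in the corner case where a $k$-path uses the pruned clique $C^*$ only briefly (or not at all). Such a path cannot be realized as an expansion of any kernel path that routes through $C^*$, so it must be produced by a separate mechanism. This is exactly where the introduction hints at the use of ``two variants of the solution-lifting algorithm'': one variant enumerates $k$-paths visiting $C^*$ heavily via the Stage~1 expansion, and the other enumerates $k$-paths whose structure lies entirely within the Stage~2-reduced graph. The delicate part is to choose the thresholds $\tau$ and $|B|$ together with a canonical assignment of each input $k$-path to exactly one variant, so that the two variants output disjoint families whose union is the full set of $k$-paths of $G$, all while preserving the $\OO(n \cdot \cvd)$ delay.
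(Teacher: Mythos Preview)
Your proposal rests on a misreading of the parameter. Distance to clique, $\cvd(G)$, is the size of a smallest set $X$ such that $G-X$ is a \emph{single} clique; it is not cluster vertex deletion (distance to a disjoint union of cliques). So there is no family $\CC=\{C_1,\ldots,C_t\}$, no choice of which clique a segment lies in, and no ``only $\OO(\cvd^2)$ relevant cliques'' to retain. Much of your Stage~2 and the corner-case discussion about paths that ``use the pruned clique $C^*$ only briefly (or not at all)'' is predicated on this multi-clique picture and does not apply. Incidentally, the paper explicitly lists distance to cluster as an open case, so the problem you sketched is not the one being claimed here.

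Even if one rewrites your plan for a single clique $C$, the solution-lifting side has a genuine gap. Your Stage~1 rule ``delete a block $B\subseteq C$ and decrement $k$ by $|B|$'' together with ``reinsert $B$ at the correct position inside the $C$-segment'' does not define a function from kernel solutions to disjoint families of input solutions: a $(k-|B|)$-path of the reduced graph may correspond to many $k$-paths in $G$ (different choices of which $|B|$ vertices to reinsert, where, and in what order), and conversely a $k$-path of $G$ that avoids $B$ entirely already lives in the reduced graph but now has the wrong length. You notice this at the end and defer it to ``choose thresholds and a canonical assignment'', but that is exactly the heart of the proof. The paper resolves it quite differently: it defines a signature for paths using two gap symbols $g_C$ and $g_N$ that record only the pattern of rare/modulator vertices and the \emph{type} of each clique stretch (length one versus at least two), proves a one-to-one correspondence between signature-equivalence classes of $k$-paths in $G$ and of $k'$-paths in $G'$, and then uses two explicit variants of the lifting algorithm (one for the unique lexicographically minimal ``suitable'' $k'$-path per class, one for all others) so that every $k$-path of $G$ is output exactly once. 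Your block-reinsertion idea does not supply this partition, and without it you cannot certify that each $k$-path is output exactly once with the stated delay.
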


We also show that our techniques can be adapted to {\enumkcycle}.
\iflong
\paragraph*{Organization of the Paper.} We organize the paper as follows.
In \Cref{sec:prelims}, we introduce the notation and terminologies that are important for this work. 
In particular, we give a formal definition of the enumeration kernelization used in this work.
% Additionally, we also give a short proof why {\enumkpath} admits no polynomial-delay enumeration kernelization of polynomial size when the parameter is ${\mcp}$, i.e. the size of a largest connected component.
After that, in \Cref{sec:k-path-vc,sec:k-path-disoc,sec:k-path-clique}, we provide a \pdpsk{} when the parameters are the vertex cover number ${\vc}$, the dissociation number~${\diss}$, and the distance to clique ${\cvd}$, respectively.
Finally, in \Cref{sec:conclusion}, we conclude with some future research directions.

\else Due to lack of space proofs of results marked with~$(\star)$ are deferred to the full version in the appendix.
\fi

\iflong
\paragraph{Further Related Work.}  
 Wasa \cite{Wasa16arXiv} provided a detailed survey an enumeration problems. 
There is some prior work on enumerating distinct shortest paths, or listing $k$ shortest paths in output-sensitive polynomial time~\cite{Eppstein98,BohmovaHMPSS18,RizziSS14,BirmeleFGMPRS13,FerreiraGRSS14}.
Adamson et al. \cite{AdamsonGM24} provided a polynomial-delay enumeration algorithm to enumerate walks of length exactly $k$.
 Blazej et al.~\cite{BlazejCKSSV22}~explored polynomial kernelizations for variants of {\sc Travelling Salesman Problem} under structural parameterizations.
Recently, there have been several works on  polynomial-delay enumeration algorithms for well-known combinatorial optimization problems.
Kobayashi et al.~\cite{KobayashiKW22}~studied enumeration of large maximal matchings with polynomial delay and polynomial space.
Hoang et al. \cite{HKSS13} studied enumeration of induced paths and induced cycles of length~$k$.

Counting problems, which are related to enumeration in the sense that they extend decision problems, have also seen different formalizations of kernelization. In the compactor technique, formalized by Thurley~\cite{Thur07} based on previous counting algorithms~\cite{NRT05}, there is a kernelization algorithm and an algorithm that computes for each solution of the kernel a number of associated solutions in the input instance. For some counting graph problems, Jansen and~van der Steenhoven~\cite{JansenS23} presented a type of kernelization algorithm that either outputs the total number of solutions or an instance whose size is bounded in the parameter with the same number of solutions.
Independently, Lokshtanov et al. \cite{LokshtanovMSbhZ24} developed a formal framework for kernelization of parameterized counting problems that relies on a kernelization algorithm (called reduce) and a lift algorithm that computes in polynomial time the number of solutions of the input instance from the number of solutions of the kernel. %\todo{CK: THere is also ``Data-compression for Parametrized Counting Problems on Sparse graphs'' but I would say this is sufficiently far away so that we do not need to mention it.}
\fi

\section{Preliminaries}
\label{sec:prelims}

%\paragraph%%preliminaries

%\todo[inline]{define pendant}

\iflong
\paragraph{Sets, Numbers, and Graph Theory.}
For $r \in \nn$, we use $[r]$ to denote the set $\{1,\ldots,r\}$.
We consider simple undirected graphs and use standard graph-theoretic notation~\cite{Diestel-Book}.
For a graph $G = (V, E)$, let $V(G)$ denote the set of \emph{vertices} and $E(G)$ the set of \emph{edges} of $G$.
We write~$n:=|V(G)|$ and~$m:=|E(G)|$.
Given a vertex set~$X \subseteq V(G)$, the graph $G[X]=(X,\{uv \in E(G)\mid u,v\in X\})$ denotes the subgraph of $G$ \emph{induced by $X$}.
 For a set~$X\subseteq V(G)$, we let~$G-X=G[V(G)\setminus X]$ denote the subgraph obtained by deleting the vertices of~$X$.
A vertex~$u$ is {\em isolated} in a graph if no edge of the graph is incident to~$u$.
A {\em pendant} vertex of $G$ is a vertex of degree one in $G$.
A~\emph{$k$-path} (or short, path) in a graph is a finite sequence of distinct vertices $P = (v_1,\ldots,v_k)$ such that $v_iv_{i+1} \in E(G)$ for all~$i\in [k-1]$. We denote by~$V(P)$  the set of vertices in~$P$. 
Given a path $P$ and $v \in V(P)$, we denote by ${\pos}_P(v)$ the index, or informally the position, of $v$ in $P$.
%For a path $P$ with two vertices $u \in V(P)$ but $v \notin V(P)$, we use $(P \setminus \{u\}) \cup \{v\}$ to denote the subgraph 
 A graph $G$ is {\em bipartite} if $V(G)$ can be partitioned into $A \uplus B$ such that for every $uv \in E(G)$, $u \in A$ if and only if $v \in B$. 
 \else
 We use standard graph theoretic notations~\cite{Diestel-Book}. 
 We also use the following tool.
 \fi

\iflong 
\paragraph{Expansion Lemma.}

%\begin{definition}[$q$-Expansion, \cite{CyganFKLMPPS15}]
%\label{defn:q-expansion}
Let $q$ be a positive integer and~$H$ be a bipartite graph with bipartition $(A, B)$.
For $\widehat A \subseteq A$ and $\widehat B \subseteq B$, a set $M \subseteq E(H)$ of edges is called a {\em $q$-expansion} of $\widehat A$ onto $\widehat B$ if 
%the following properties hold.
%\begin{enumerate}[(i)]
	{\bf (i)}  every edge of $M$ has one endpoint in $\widehat A$ and the other endpoint in $\widehat B$,
	{\bf (ii)} every vertex of $\widehat A$ is incident to exactly $q$ edges of $M$, and
	{\bf (iii)}~exactly $q\cdot |\widehat A|$ vertices of $\widehat B$ are incident to some edges of $M$.
%\end{enumerate}	
%\end{definition}
A vertex of $\widehat A\cup \widehat B$ is {\em saturated by $M$} if it is an endpoint of some edge of~$M$, otherwise, it is \emph{unsaturated}.
By definition, all vertices of $\widehat A$ are saturated by~$M$ but~$\widehat B$ may contain some unsaturated vertices.
%Similarly, the vertices of $\widehat B$ that are the endpoints of $M$ are called {\em satu
Observe that a $1$-expansion of $\widehat A$ onto $\widehat B$ is just a matching of $\widehat A$ onto $\widehat B$ that saturates $\widehat A$.
We use the following proposition in our paper.
\fi

\begin{proposition}[New $q$-Expansion Lemma - Lemma 3.2 of \cite{FominLLSTZ19}]
\label{prop:new-q-exp-lemma}
Let $q$ be a positive integer and $H$ be a bipartite graph with bipartition $(A, B)$.
Then, there exists a polynomial-time algorithm that computes $\widehat A \subseteq A$ and $\widehat B \subseteq B$ such that there is a $q$-expansion of $\widehat A$ onto $\widehat B$ such that {\bf (i)}
%\begin{enumerate}[(i)]
%	\item 
	$N_H(\widehat B) \subseteq \widehat A$, and
%	\item 
{\bf (ii)}	$|B \setminus \widehat B| \leq q\cdot |A \setminus \widehat A|$.
%\end{enumerate}
\end{proposition}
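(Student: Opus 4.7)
The plan is to construct $\widehat A$ and $\widehat B$ by an iterative shrinking procedure starting from $(A,B)$, removing at each step a Hall-violator from the current $A$-side together with the vertices of $B$ that are no longer fully contained in it, until a $q$-expansion exists on what remains. The correctness of property (i) will be enforced by construction, and the size bound (ii) will be maintained as a loop invariant.

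More concretely, I maintain sets $A_0 \supseteq A_1 \supseteq \cdots$ with $A_0 = A$, and define throughout $B_i := \{b \in B : N_H(b) \subseteq A_i\}$, so that $N_H(B_i) \subseteq A_i$ always holds. At step $i$, I test whether there is a $q$-expansion of $A_i$ onto $B_i$ in $H[A_i \cup B_i]$. To do this in polynomial time, I build the auxiliary bipartite graph $H_i^{(q)}$ obtained from $H[A_i \cup B_i]$ by replacing every vertex $a \in A_i$ with $q$ copies, each adjacent to the same neighbors in $B_i$; a $q$-expansion of $A_i$ onto $B_i$ corresponds exactly to a matching of $H_i^{(q)}$ saturating all copies of $A_i$. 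Such a matching can be found (or certified not to exist) in polynomial time via Hopcroft–Karp, and its absence yields, by König's theorem, a nonempty set $A' \subseteq A_i$ witnessing a violation of the $q$-Hall condition, i.e., $|N_H(A') \cap B_i| < q|A'|$. I then set $A_{i+1} := A_i \setminus A'$ (and $B_{i+1}$ is then defined automatically by the rule above). The process terminates because $|A_i|$ strictly decreases and $A \ne \emptyset$ is finite; when no violator exists, I output $\widehat A := A_i$ and $\widehat B := B_i$, at which point the $q$-expansion is precisely the Hopcroft–Karp matching found in $H_i^{(q)}$.

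The main thing to verify—and really the crux of the argument—is the invariant $|B \setminus B_i| \le q \cdot |A \setminus A_i|$. This holds trivially for $i = 0$. For the inductive step, note that $B_i \setminus B_{i+1}$ consists exactly of those $b \in B_i$ whose neighborhood in $A_i$ intersects the removed set $A'$, i.e., $B_i \setminus B_{i+1} = N_H(A') \cap B_i$. By the choice of $A'$ as a Hall-violator, $|N_H(A') \cap B_i| \le q|A'| - 1 < q|A_i \setminus A_{i+1}|$, so
\[
|B \setminus B_{i+1}| \;=\; |B \setminus B_i| + |B_i \setminus B_{i+1}| \;\le\; q|A \setminus A_i| + q|A_i \setminus A_{i+1}| \;=\; q|A \setminus A_{i+1}|,
\]
which is exactly the desired bound when the loop terminates.

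I expect the only subtle point to be ensuring that the redefinition $B_{i+1} := \{b \in B : N_H(b) \subseteq A_{i+1}\}$ (which is what makes condition (i) automatic) does not secretly remove too many $B$-vertices in a single step. The key observation, which I have already used above, is that the removal of a $B$-vertex from $B_i$ at step $i+1$ is caused solely by a previously unbroken neighbor moving into $A \setminus A_{i+1}$ at this step, so $B_i \setminus B_{i+1} \subseteq N_H(A')$, and the Hall deficiency already counts this set against $q|A'|$. Polynomial running time is immediate since there are at most $|A|$ iterations, each dominated by one maximum-matching computation on a graph of size $O(q(|A|+|B|) + q \cdot |E(H)|)$.
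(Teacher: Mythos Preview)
Your proof is correct; the iterative Hall-violator removal with the invariant $|B\setminus B_i|\le q\,|A\setminus A_i|$ is exactly the standard argument, and your verification that $B_i\setminus B_{i+1}=N_H(A')\cap B_i$ is the key identity that makes the bookkeeping work. Note, however, that the paper does not give its own proof of this proposition: it is quoted verbatim as Lemma~3.2 of Fomin, Le, Lokshtanov, Saurabh, Thomass\'e, and Zehavi and used as a black box, so there is no ``paper's proof'' to compare against. One small terminological slip: the certificate $A'$ with $|N_H(A')\cap B_i|<q|A'|$ is furnished by Hall's theorem (equivalently, the defect version of K\H{o}nig's theorem applied to the $q$-blown-up graph), not K\"onig's theorem per se; the algorithmic extraction of $A'$ from a maximum matching via alternating reachability is standard, and your polynomial-time claim stands.
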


\iflong
As a subroutine, we will use the following polynomial-delay algorithm for enumerating all maximum matchings in an auxiliary graph. 

\begin{proposition}[\protect{\cite[paragraph above Theorem~13]{KobayashiKW22}, \cite[Theorem~18]{KobayashiKW21}}]
%\todo{move to graph theory?}
\label{prop:maximum-matching-enumeration}
Given an undirected graph $G$ having $n$ vertices and $m$ edges, all maximum matchings  can be enumerated with $\OO(n\cdot m)$-delay.
\end{proposition}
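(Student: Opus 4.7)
The plan is to enumerate all maximum matchings by a binary-partition (reverse-search) approach rooted at a single arbitrary maximum matching. First I would compute one maximum matching $M_0$ of $G$ in $\OO(n\cdot m)$ time via Edmonds' blossom algorithm and set $\mu := |M_0|$. Then I would fix an arbitrary ordering $e_1,\ldots,e_m$ of $E(G)$ and traverse a binary search tree whose nodes are pairs $(F_{\text{in}}, F_{\text{out}})$ of disjoint edge sets; the node represents the family of all maximum matchings $M$ with $F_{\text{in}} \subseteq M$ and $F_{\text{out}} \cap M = \emptyset$.

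At each node, I would find the smallest index $i$ with $e_i \notin F_{\text{in}}\cup F_{\text{out}}$ such that both extensions $(F_{\text{in}}\cup\{e_i\},F_{\text{out}})$ and $(F_{\text{in}},F_{\text{out}}\cup\{e_i\})$ still admit a matching of size $\mu$. The recursion branches on $e_i$; leaves correspond to uniquely determined maximum matchings, which are output. The crucial structural tool is the classical symmetric-difference characterisation: for two maximum matchings $M_1,M_2$, the set $M_1\triangle M_2$ decomposes into vertex-disjoint $M_1$-$M_2$-alternating cycles and even-length alternating paths between vertices left unsaturated by both. Given a current feasible matching at the node, this lets me test the two branches by searching for an alternating cycle through $e_i$ (or one avoiding $e_i$), which is a bounded number of augmenting-type searches.

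For the delay analysis, the key invariant is that every internal node of the search tree has both children non-empty: the branching rule explicitly picks $e_i$ to enforce this. Therefore the tree has exactly as many leaves as maximum matchings, its depth is at most $m$, and between consecutive outputs one traverses at most $\OO(m)$ tree edges. Each tree step performs a single feasibility test for a maximum matching under edge constraints, which reduces to one augmenting-path computation in a graph obtained from $G$ by contracting $F_{\text{in}}$ and deleting $F_{\text{out}}$; this can be executed in $\OO(n\cdot m)$ time via Edmonds' procedure. Amortising over the at most $m$ steps between outputs still yields the announced $\OO(n\cdot m)$ delay once one reuses the current matching across calls instead of recomputing from scratch.

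The main obstacle is precisely this last point: a naïve implementation would incur $\OO(n\cdot m)$ time \emph{per} tree step and thus $\OO(n\cdot m^2)$ delay. To get down to $\OO(n\cdot m)$ delay, one must propagate the current matching and the blossom data structure along the recursion, so that each branching step requires only a constant number of augmentations/rearrangements localised around $e_i$. Handling blossoms correctly in the presence of forced and forbidden edges is the delicate part; this is where Kobayashi et al.'s refinement of the reverse-search framework does the heavy lifting, and I would follow their argument in detail for this component.
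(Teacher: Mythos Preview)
This proposition is not proved in the paper at all; it is quoted as a black-box result of Kobayashi, Kurita, and Wasa, so there is no in-paper proof to compare your attempt against. The only additional information the paper offers appears later, inside the proof of \Cref{lemma:vc-enum-k-path-poly-delay}: the cited argument ``finds at most $n$ augmenting paths, each in $\OO(m)$ time,'' so the delay is really $\OO(\nu(G)\cdot m)$ with $\nu(G)$ the maximum-matching size.

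That hint pinpoints the gap in your sketch. You branch on \emph{all} edges $e_1,\ldots,e_m$, so your recursion depth is $m$, and between two consecutive leaves you may walk $\Theta(m)$ tree edges. Even granting your optimisation that each node costs a constant number of augmenting-path searches, an augmenting-path search in a general graph is $\OO(m)$, so the delay you actually obtain is $\OO(m^2)$, not $\OO(n\cdot m)$. Your last paragraph acknowledges the $\OO(n\cdot m^2)$ problem and then asserts that reusing the matching and blossom structure across calls brings it down to $\OO(n\cdot m)$, but that arithmetic does not work with depth $m$: you would need each node to cost $\OO(n)$, and a single alternating-cycle/augmenting-path search already costs $\OO(m)$.

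The fix in the cited work is not a data-structure refinement on top of edge-by-edge branching; it is a different branching rule whose depth is bounded by $\nu(G)\le n/2$. One maintains a current maximum matching $M$ and branches on an edge $e\in M$ that lies on some $M$-alternating cycle (equivalently, on whether $e$ is forced in or forced out), so after at most $\nu(G)$ decisions the matching is pinned down. With depth $\nu(G)$ and $\OO(1)$ alternating-path searches of cost $\OO(m)$ per node, the $\OO(n\cdot m)$ delay follows directly. Your framework is salvageable, but only after replacing the branching object, which is precisely the step you deferred.
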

\fi

\iflong
\paragraph{Graph Parameters.} Given a graph $G$, a set $S \subseteq V(G)$ is a {\em vertex cover} if for every edge $uv\in E(G)$, $u \in S$ or $v \in S$ (or both).
The size of a vertex cover with minimum cardinality is called the {\em vertex cover number} of $G$ and is denoted by ${\vc}(G)$.
A vertex subset $S$ is a {\em dissociation set} \cite{PapadimitriouY82} if $G - S$ has maximum degree at most one.
The {\em dissociation number} of a graph $G$, denoted by ${\diss}(G)$, is the size of a smallest dissociation set of~$G$.
The $r$-${\coc}(G)$ number of a graph is the minimum number of vertex deletions required to obtain a graph where every connected component has size at most~$r$.
Given a graph $G$, let ${\mcp}(G)$, denote the {\em largest component size} of $G$.
In other words, ${\mcp}(G) = \max_{D \in \ccc(G)} |D|$ where ${\ccc}(G)$ denotes the set of all connected components of $G$.
Furthermore, for a graph $G$, the {\em vertex integrity} of $G$, denoted by ${\vi}(G)$ is a measure that indicates how easy it is to break the graph into small pieces.
Formally, ${\vi}(G) = \min_{S \subseteq V(G)} (|S| + \max_{D \in \ccc(G - S)} |D|)$.
A set $S \subseteq V(G)$ is a {\em clique deletion set} if $G - S$ is a clique.
The  {\em distance to clique} of a graph $G$, denoted by ${\cvd}(G)$, is the size of a smallest clique deletion set of~$G$.
\fi

\iflong
\paragraph{Parameterized Complexity and Kernelization.} 
A parameterized problem is a pair $(L,\kappa)$ where~$L\subseteq \Sigma^*$ and~$\kappa:\Sigma^*\to \nn$ is a computable function, called the \emph{parameterization} or simply~\emph{parameter}.
A parameterized problem $(L,\kappa)$ is said to be {\em fixed-parameter tractable} (or FPT in short) if an algorithm $\cA$ that, given $x \in \Sigma^* $, decides in $f(k)\cdot |x|^{\OO(1)}$-time  if $x\in L$, where $f: \nn \rightarrow \nn$ is some computable function. It is natural and common to assume that $\kappa(x)$ is either simply given with $x$ or can be computed in polynomial time from $x$.
Given a graph $G$, our parameterizations ${\vc}(G)$, ${\diss}(G)$, and ${\cvd}(G)$ are NP-hard to compute, hence we only demand computability for~$\kappa$. If one insists on polynomial-time computability, one may equivalently consider parameterization by the size of polynomial-time computable approximation solutions which are at most 3 times as large as ${\vc}(G), {\diss}(G)$, and ${\cvd}(G)$, respectively.

An important technique to design FPT-algorithms is {kernelization} or parameterized preprocessing.
Formally, a parameterized problem $(L,\kappa)$ admits a {\em kernelization} if given $x\in\Sigma^*$, there is an algorithm $\cA$ that runs in polynomial time and outputs an instance $x'\in \Sigma^*$ such that {\bf (i)} $x\in L$ if and only if $x'\in L$, and {\bf (ii)} $|x'| + \kappa(x') \leq g(\kappa(x))$ for some computable function $g: \nn \rightarrow \nn$.
If $g(\cdot)$ is a polynomial function, then $L$ is said to admit a {\em polynomial kernel}.
It is well-known that a decidable parameterized problem is FPT if and only if it admits a kernel \cite{CyganFKLMPPS15}.
For a more detailed discussion on parameterized complexity and kernelization, we refer to \cite{CyganFKLMPPS15,DowneyF13,FominLSZ19}.
\fi

\paragraph{Parameterized Enumeration and Enumeration Kernelization.}
\iflong\else
For details on parameterized complexity and kernelization, we refer to \cite{CyganFKLMPPS15,DowneyF13}. \fi 
Creignou et al.~\cite{CreignouMMSV17} defined parameterized enumeration as follows.
An {\em enumeration problem} over a finite alphabet~$\Sigma$ is a tuple $(L, {\sol})$ such that 
%\begin{enumerate}[(i)]
%\item 
{\bf (i)} $L\subseteq \Sigma^*$ is a decision problem, and
%	\item 
{\bf (ii)} ${\sol}: \Sigma^* \rightarrow 2^{\Sigma^*}$ is a computable function and ${\sol}(x) \neq \emptyset$ if and only if $x \in L$.
%\end{enumerate}
%
If~$x\in L$, then ${\sol}(x)$ is the finite \emph{set of solutions} to $x$.
A {\em parameterized enumeration problem} is a triple $\Pi = (L, {\sol}, \kappa(x))$ where $(L, {\sol})$ is an enumeration problem and $\kappa: \Sigma^* \rightarrow \nn$ is the parameterization.

%We define the parameterization as a function $\kappa(x)$ but 
% or $\kappa(x)$ can be computed in polynomial-time given $x$.
%either a polynomial-time computable function, the .
% Given an instance $x \in \Sigma^*$,  we assume that the parameter $\kappa(x)$ is given with the input or can be computed in polynomial time.
An enumeration algorithm $\cA$ for a parameterized enumeration problem $\Pi$ is a deterministic algorithm that given $x \in \Sigma^*$, outputs ${\sol}(x)$ exactly without duplicates and terminates after a finite number of steps.
%If an enumeration algorithm $\cA$ terminates in $f(\kappa(x))|x|^{\OO(1)}$-time, then $\cA$ is called an {\em FPT-enumeration algorithm}.
For $x \in L$ and $1 \leq i \leq |{\sol}(x)|$, the {\em $i$-th delay} of $\cA$ is the time taken between outputting the $i$-th solution and the $(i+1)$-th solution of ${\sol}(x)$.
The {\em $0$-th delay} is the {\em precalculation time}, that is, the time taken from the start of the algorithm $\cA$ to outputting the first solution.
 The {\em postcalculation} time is the $|{\sol}(x)|$-th delay, that is, the time from the last output to the termination of $\cA$.
If an enumeration algorithm~$\cA$ guarantees that every delay is bounded by $f(\kappa(x))\cdot |x|^{\OO(1)}$, then $\cA$ is an {\em FPT-delay enumeration algorithm} ({\em FPT-delay} algorithm). 
If every delay is bounded by $|x|^{\OO(1)}$, then $\cA$ is a \emph{polynomial-delay enumeration algorithm}.
We will show kernelization results for the following definition of enumeration kernels.

\begin{definition}[\cite{GolovachKKL22}]
\label{defn:poly-delay-enum-kernel}
Let $\Pi = (L, {\sol}, \kappa)$ be a parameterized enumeration problem.
A {\em polynomial-delay enumeration kernel} for $\Pi$ is a pair of algorithms $\cA$ and~$\BB$ with the following properties.
\begin{enumerate}[(i)]
	\item For every instance $x\in \Sigma^*$, the {\em kernelization algorithm} $\cA$ runs in time polynomial in $|x| + \kappa(x)$ and outputs an instance $y\in \Sigma^*$ such that $|y| + \kappa(y) \leq g(\kappa(x))$ for a computable function $g: \nn \rightarrow \nn$, and
	\item \label{cond:sol-lift} for every $s \in \sol(y)$, the {\em solution-lifting algorithm} $\BB$ computes with delay polynomial in $|x| + |y| + \kappa(x) + \kappa(y)$, a nonempty set of solutions $S_s \subseteq {\sol}(x)$ such that $\{S_s \mid s \in {\sol}(y)\}$ is a partition of ${\sol}(x)$. 
\end{enumerate}
The function $g(\cdot)$ is the {\em size} of the polynomial-delay enumeration kernel.
If $g(\cdot)$ is a polynomial function, then $\Pi$ is said to admit a polynomial-delay enumeration kernel of {\em polynomial size}.
\end{definition}

Observe that, by condition (\ref{cond:sol-lift}) in the above definition, $x \in L$ if and only if $y \in L$.
Observe from the definition that every polynomial-delay enumeration kernelization also corresponds to a kernel for the decision version of the corresponding problem.

Before stating our results, we show that the existence of polynomial delay enumeration kernels of a small size also entails running time guarantees concerning the non-polynomial part of the total running time. To put this bound into context, let us mention that the existence of an enumeration algorithm with FPT delay implies that there is an enumeration algorithm where the precalculation time is~$f(\kappa(x))$ and all further delays and the postcalculation time are polynomial~\cite{Meier20}.
The bound for~$f(\cdot)$ is, however, usually prohibitively large. For polynomial-delay enumeration kernels we obtain the following.

\iflong

%\begin{restatable}[$\star$]{proposition}{thmThree}
\begin{proposition}
\label{prop-enumalgo}
  Let~$\Pi = (L, {\sol}, \kappa)$ be a parameterized enumeration problem such that there is an algorithm that outputs for any given~$x\in \Sigma^*$ the set~$\sol(x)$ in~$T(|x|)$~time.
  If~$\Pi$ admits a polynomial-delay enumeration kernel of size~$g$, then~$\Pi$ admits an algorithm with~$T(g(\kappa(x)))+|x|^{\OO(1)}$~precalculation time and polynomial delay.
\end{proposition}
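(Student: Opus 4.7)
The plan is straightforward: use the kernelization to produce a small instance $y$, enumerate $\sol(y)$ by brute force, and then lift each $s \in \sol(y)$ via the solution-lifting algorithm $\BB$.

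First, I would apply the kernelization algorithm $\cA$ to the input $x$ to obtain a kernel $y$ with $|y| + \kappa(y) \leq g(\kappa(x))$ in $|x|^{\OO(1)}$ time. Second, I would invoke the given brute-force enumerator on $y$ to compute the complete list $\sol(y) = \{s_1, \ldots, s_N\}$; this takes time $T(|y|) \leq T(g(\kappa(x)))$, passing to a non-decreasing upper bound $\tilde T(n) := \max_{m \leq n} T(m)$ if $T$ is not already monotone. These two steps constitute the precalculation and together run in time $T(g(\kappa(x))) + |x|^{\OO(1)}$, matching the claimed bound.

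After the precalculation, I would enumerate $\sol(x)$ by processing the solutions $s_1, \ldots, s_N$ sequentially: for each $s_i$, invoke $\BB$ on $s_i$ to output every element of $S_{s_i}$ with polynomial delay, and only then move on to $s_{i+1}$. By \Cref{defn:poly-delay-enum-kernel}, the sets $\{S_{s_i}\}_{i \in [N]}$ form a partition of $\sol(x)$ and each $S_{s_i}$ is nonempty, so every element of $\sol(x)$ is produced exactly once, ensuring correctness (no duplicates, nothing missed). The delay inside a single $S_{s_i}$ is polynomial in $|x| + |y| + \kappa(x) + \kappa(y)$ by the definition of the solution-lifting algorithm, and between two consecutive $S_{s_i}$ and $S_{s_{i+1}}$ we incur at most the postcalculation time of $\BB(s_i)$ plus the precalculation time of $\BB(s_{i+1})$, which is again polynomial in the same quantity.

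The main point to verify is that this per-output delay is genuinely polynomial in $|x|$: although a priori $|y|+\kappa(y)$ is only bounded by $g(\kappa(x))$, it is independently bounded by a polynomial in $|x| + \kappa(x)$ because $\cA$ runs in time polynomial in $|x| + \kappa(x)$ and therefore cannot output a larger instance. Hence the delay of $\BB$ on any $s_i$ remains polynomial in $|x| + \kappa(x)$, and the overall algorithm meets the stated precalculation and polynomial-delay guarantees. I do not foresee a serious obstacle; the only care required is to sequentially chain the calls to $\BB$ and to handle the monotonicity of $T$.
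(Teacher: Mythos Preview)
Your proposal is correct and follows essentially the same approach as the paper's proof: compute the kernel, brute-force enumerate its solutions, then chain calls to the solution-lifting algorithm. Your write-up is in fact more careful on two points the paper glosses over---the monotonicity of $T$ and the argument that $|y|+\kappa(y)$ is polynomially bounded in $|x|+\kappa(x)$ because $\cA$ runs in polynomial time---so there is nothing to fix.
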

%\end{restatable}

\else

\begin{restatable}[$\star$]{proposition}{thmThree}
%\begin{proposition}
\label{prop-enumalgo}
  Let~$\Pi = (L, {\sol}, \kappa)$ be a parameterized enumeration problem such that there is an algorithm that outputs for any given~$x\in \Sigma^*$ the set~$\sol(x)$ in~$T(|x|)$~time.
  If~$\Pi$ admits a polynomial-delay enumeration kernel of size~$g$, then~$\Pi$ admits an algorithm with~$T(g(\kappa(x)))+|x|^{\OO(1)}$~precalculation time and polynomial delay.
%\end{proposition}
\end{restatable}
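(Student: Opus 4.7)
The plan is to chain together three ingredients: the kernelization algorithm $\cA$, the assumed brute-force enumeration algorithm running in time $T(|y|)$ on $y$, and the solution-lifting algorithm $\BB$. On input $x$, I would first run $\cA$ to obtain the kernel $y = \cA(x)$ in $|x|^{\OO(1)}$ time, which guarantees $|y| + \kappa(y) \le g(\kappa(x))$. Then I would invoke the assumed brute-force algorithm on $y$ to explicitly compute $\sol(y) = \{s_1, \ldots, s_\ell\}$ in time at most $T(g(\kappa(x)))$. Finally, I would iterate through $s_1, \ldots, s_\ell$ and, for each $s_i$, invoke $\BB$ on $(y, s_i)$, passing each solution it produces directly as an output of the overall algorithm.

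Correctness is immediate from \Cref{defn:poly-delay-enum-kernel}: the family $\{S_{s_i} : i \in [\ell]\}$ partitions $\sol(x)$, so each solution of $x$ is output exactly once. For the precalculation time, note that the first output occurs after running $\cA$, then the brute-force procedure on the kernel, and then the precalculation phase of $\BB$ on $s_1$. These three contributions are $|x|^{\OO(1)}$, at most $T(g(\kappa(x)))$, and a quantity polynomial in $|x| + |y| + \kappa(x) + \kappa(y)$, respectively. The last of these is polynomial in $|x|$ because $g$ is a polynomial function and $|y| + \kappa(y) \le g(\kappa(x)) \le g(|x|)$ (using the standard convention $\kappa(x) \le |x|$). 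Summing yields the claimed bound $T(g(\kappa(x))) + |x|^{\OO(1)}$.

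For the delay between any two consecutive outputs, two cases arise. If both outputs come from the same invocation $\BB(y, s_i)$, the gap is bounded directly by $\BB$'s own polynomial delay. If they straddle the boundary between $\BB(y, s_i)$ and $\BB(y, s_{i+1})$, the gap is the postcalculation time of the first call plus a constant amount of bookkeeping plus the precalculation time of the second call; each of these is polynomial in $|x| + |y| + \kappa(x) + \kappa(y)$, hence polynomial in $|x|$ by the same argument as above. Thus every delay is polynomial.

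The only genuinely delicate point I would want to state carefully is the conversion of the delay guarantee of $\BB$, which is formulated relative to $|x| + |y| + \kappa(x) + \kappa(y)$ in \Cref{defn:poly-delay-enum-kernel}, into a bound that is polynomial in $|x|$ alone; this hinges on the polynomiality of the kernel size function $g$. Once this is isolated, everything else is routine accounting, and the three time contributions combine exactly into the claimed running time.
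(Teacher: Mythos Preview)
Your approach mirrors the paper's exactly: compute the kernel, enumerate all solutions of the kernel by brute force, then run the solution-lifting algorithm on each kernel solution in turn. The paper's own proof is in fact less detailed than yours on the delay bookkeeping between consecutive calls of $\BB$.

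There is one slip in your justification, however. You argue that the delay of $\BB$ is polynomial in $|x|$ ``because $g$ is a polynomial function.'' The proposition does not assume this: a polynomial-delay enumeration kernel, per \Cref{defn:poly-delay-enum-kernel}, has polynomial-delay solution lifting but an \emph{arbitrary} computable size function $g$. Indeed, the whole point of the proposition is that the precalculation bound $T(g(\kappa(x)))$ improves as $g$ shrinks, which is only meaningful if $g$ is allowed to range freely. The correct reason $|y|$ is polynomial in $|x|$ is simply that the kernelization algorithm $\cA$ itself runs in time polynomial in $|x|+\kappa(x)$, so its output cannot exceed that running time; combined with the standard convention $\kappa(x)\le |x|$ this gives $|y|\in |x|^{\OO(1)}$ regardless of $g$. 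With this correction, your argument goes through unchanged.
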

\fi

\iflong
\begin{proof}
Since $\Pi$ admits a polynomial-delay enumeration kernel, let $\cA$ be the kernelization algorithm and $\BB$ be the solution-lifting algorithm for the polynomial-delay enumeration kernel of $\Pi$.
 The algorithm first uses~$\cA$ to compute the kernel~$y$ of size~$g(\kappa(x))$ in~$|x|^{\OO(1)}$~time.
 Then it computes in~$T(|y|)$ time the set~$\sol(y)$.
 This concludes the precalculation. 
 {%\color{blue}
If ${\sol}(y) = \emptyset$, then the algorithm stops without outputting any solution to the instance $x$.
Since ${\sol}(x) \neq \emptyset$ if and only if ${\sol}(y) \neq \emptyset$, this step is correct.
Otherwise, ${\sol}(y) \neq \emptyset$.
 }
Then, the algorithm outputs for each~$s\in \sol(y)$ using the solution-lifting algorithm~$\BB$ the corresponding solutions~$S_s\subseteq\sol(x)$. The precalculation time is $T(|y|)+|x|^{\OO(1)}$ to compute the kernel, and all solutions in the kernel and the first solution output by the solution-lifting algorithm. 
 Next, all remaining solutions in~$\sol(x)$ are output with polynomial delay, either as the next solution of some solution~$S_s$ or as the first solution of the next solution of~$S_{s'}$ for the next solution~$s'$ of~$\sol(y)$.  
\end{proof} 
\fi

The above is similar to decision problems where small kernels and  exact exponential-time algorithms directly give concrete running time bounds for FPT-algorithms. 
\todo{F: maybe give an example for what an FPT-delay we would achieve for our vc-result? C: Sure, why not? Diptapriyo: fine with me as long as page-limit permits.}

\section{Parameterization by the Vertex Cover Number}
\label{sec:k-path-vc}

%\paragraph%%Vertex Cover section
 
In this section, we describe a polynomial-delay enumeration kernel for {\enumkpath} when parameterized by ${\vc}$, the size of a minimum vertex cover of the input graph. We first describe the kernelization algorithm and then the solution-lifting algorithm.
We assume without loss of generality that $G$ has no isolated vertex, as isolated vertices cannot be part of any $k$-path for~$k>1$ and {\enumkpath } is trivial for~$k = 1$.

\iflong
\subsection{Marking Scheme and Kernelization Algorithm}
\label{sec:vc-marking-scheme}
\else
\paragraph{Marking Scheme and Kernelization Algorithm.}
\fi

To obtain our kernelization, we mark the vertices of~$G$ that we will keep in the kernel.
Given the input graph $G = (V, E)$ and an integer $k$, we first invoke a linear-time algorithm~\cite{Savage82} that outputs a vertex cover $X$ of $G$ such that $|X| \leq 2\cdot {\vc}(G)$.
We mark all vertices in~$X$.
Now, let $I = V(G) \setminus X$.
We consider a partition $I = I_1 \uplus I_2$ such that $I_1$ is the set of all vertices of $I$ that have degree one in $G$. We call the vertices of~$I_1$ \emph{pendant}.
We then construct an auxiliary bipartite graph $H$ with bipartition $({{X}\choose{2}}, I_2)$ as follows:
a vertex $u \in I_2$ is adjacent to a vertex $\{x, y\} \in {{X}\choose{2}}$ in $H$ if and only if $\{x, y\} \subseteq N_G(u)$.  This bipartite graph construction is inspired by Bodlaender et al.~\cite{BodlaenderJK13} who showed that the $k$-paths of $G$ correspond to matchings in $H$. 
Since~$G$ has no isolated vertex and $I_2$ excludes the pendant vertices, $H$~has no isolated vertex.
Then, we invoke \Cref{prop:new-q-exp-lemma}, the new $q$-expansion lemma, with $q = 3$ on the graph $H$. That is, in polynomial time we obtain a set~$\widehat{A}\subseteq \binom{X}{2}$ and a set~$\widehat{B}\subseteq I_2$ such that there is a 3-expansion~$M\subseteq E(H)$ of~$\widehat{A}$ onto~$\widehat{B}$ such that {\bf (i)}~$N_H(\widehat{B})\subseteq \widehat{A}$ and {\bf (ii)}~$|I_2\setminus \widehat{B}|\le q\cdot |\binom{X}{2}\setminus \widehat{A}|$.
Furthermore, there exists a set~$B'\subseteq \widehat{B}$ of size~$3\cdot |\widehat{A}|$ such that each edge of~$M$ has exactly one endpoint in~$B'$, the vertices in~$\widehat{B}$ saturated by~$M$.
We mark all vertices in~$B'\cup (I_2\setminus \widehat{B})$.
If no 3-expansion exists, that is, if~$\widehat{A}=\emptyset$ and thus also~$\widehat{B}=B'=\emptyset$, then~$I_2\setminus \widehat{B}=I_2$. Consequently the kernel contains all vertices of~$I_2$ in that case, and~$I_2$ is small by property {\bf (ii)} of \Cref{prop:new-q-exp-lemma}.
Otherwise, it is sufficient to mark the vertices~$B'\subseteq \widehat{B}$ which are saturated by~$M$. To prove this, we first define\iflong, in \Cref{vc:k-path-structures}, \fi{} a notion of equivalence for~$k$-paths in~$G$ and~$G'$. Then, in \Cref{lemma:vc-same-signature-existence}, we show that the marked vertices are sufficient to keep for each $k$-path~$P$ in~$G$ an equivalent one in~$G'$.

Moreover, for every $x \in X$, we mark one arbitrary pendant vertex $u \in I_1$ such that $xu \in E(G)$.
Let $I_1^+$ denote the set of all marked pendant vertices of $I_1$.
It is sufficient to mark one vertex of~$I_1^+$ per~$x\in X$ since each $k$-path can contain at most one such vertex (in the beginning or the end).
Now, $G'$ is the subgraph induced by all marked vertices, that is, $V(G')=X\cup I^1_+\cup (I_2\setminus \widehat{B})\cup B'$.
Since $G'$ is an induced subgraph of $G$ and by properties (i) and (ii) of \Cref{prop:new-q-exp-lemma} (new $q$-expansion lemma), we may observe the following.

\iflong

%\begin{restatable}[$\star$]{observation}{obs:k-path-preserve}
\begin{observation}
\label{obs:k-path-preserve}
Let $G'$ be the graph obtained from $G$ after invoking the above marking scheme.
Then, $G'$ has $\OO(|X|^2)$ vertices.
Moreover, any $k$-path of $G'$ is also a $k$-path of $G$.
\end{observation}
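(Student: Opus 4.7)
The plan is to prove both claims by straightforward bookkeeping, since all the technical work is already encapsulated in \Cref{prop:new-q-exp-lemma}. First I would observe that, by construction, $V(G')$ partitions into the four sets that are explicitly marked: the approximate vertex cover $X$ itself, the set $I_1^+$ of marked pendant vertices, the set $I_2 \setminus \widehat{B}$, and the set $B' \subseteq \widehat{B}$ of vertices saturated by the $3$-expansion $M$. I would then bound each part separately.

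The set $X$ has size at most $2\cdot \vc(G) = \OO(|X|)$ by the guarantee of the $2$-approximation used to compute it, and $|I_1^+| \leq |X|$ because the marking rule puts at most one pendant vertex into $I_1^+$ per $x \in X$. For the remaining two sets I would invoke the bounds from \Cref{prop:new-q-exp-lemma} with $q = 3$ applied to the auxiliary bipartite graph $H$ with parts $\binom{X}{2}$ and $I_2$: property (ii) yields $|I_2 \setminus \widehat{B}| \leq 3 \cdot |\binom{X}{2} \setminus \widehat{A}| \leq 3 \binom{|X|}{2}$, while by definition of a $3$-expansion $|B'| = 3 \cdot |\widehat{A}| \leq 3 \binom{|X|}{2}$. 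Summing the four contributions gives $|V(G')| \leq |X| + |X| + 3\binom{|X|}{2} + 3\binom{|X|}{2} = \OO(|X|^2)$, which establishes the first claim.

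For the second claim I would simply note that $G'$ is by construction the subgraph of $G$ induced on the marked vertex set, so $E(G') = \{uv \in E(G) \mid u, v \in V(G')\}$. Hence any sequence of distinct vertices that is a path in $G'$ is already a path in $G$ on the same vertex set, so every $k$-path of $G'$ is a $k$-path of $G$. The only mild subtlety to mention is the degenerate case where $\widehat{A} = \emptyset$ and thus $B' = \emptyset$: here property (ii) of \Cref{prop:new-q-exp-lemma} still forces $|I_2| = |I_2 \setminus \widehat{B}| \leq 3\binom{|X|}{2}$, so the vertex count argument goes through unchanged. I do not foresee any real obstacle beyond this accounting; the conceptual difficulty has been deferred to the later proof that the marked vertices actually preserve a representative $k$-path of every equivalence class.
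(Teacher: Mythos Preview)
Your proof is correct and follows essentially the same approach as the paper: bound each of the four marked sets $X$, $I_1^+$, $I_2\setminus\widehat{B}$, and $B'$ separately using the marking rule and \Cref{prop:new-q-exp-lemma}, then observe that $G'$ is an induced subgraph of $G$. The paper's argument is slightly terser but structurally identical.
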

%\end{restatable}

\else
\begin{restatable}[$\star$]{observation}{obs:k-path-preserve}
%\begin{observation}
\label{obs:k-path-preserve}
Let $G'$ be the graph obtained from $G$ after invoking the above marking scheme.
Then, $G'$ has $\OO(|X|^2)$ vertices.
Moreover, any $k$-path of $G'$ is also a $k$-path of $G$.
%\end{observation}
\end{restatable}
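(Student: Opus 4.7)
The plan is to verify the two claims separately, with the vertex count being the substantive part and the second claim being essentially immediate.

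First I would establish that every $k$-path of $G'$ is also a $k$-path of $G$. Since $G'$ is obtained by restricting to marked vertices and taking the induced subgraph, we have $V(G') \subseteq V(G)$ and $E(G') = \{uv \in E(G) \mid u,v \in V(G')\}$. Thus any sequence of distinct vertices in $V(G')$ that forms a path in $G'$ is automatically a sequence of distinct vertices in $V(G)$ with consecutive pairs adjacent in $G$, i.e.\ a $k$-path of $G$. No further argument is needed for this part.

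For the size bound, I would bound the contribution of each of the four marked groups to $|V(G')|$. The vertex cover $X$ contributes $|X| \le 2 \cdot \vc(G)$ vertices. The set $I_1^+$ contains at most one marked pendant per vertex of $X$, contributing at most $|X|$ vertices. For the set~$B'$ of vertices of~$\widehat{B}$ saturated by the $3$-expansion $M$, we have $|B'| = 3 \cdot |\widehat{A}| \le 3 \cdot \binom{|X|}{2}$. Finally, by property~(ii) of \Cref{prop:new-q-exp-lemma} applied with $q = 3$, we have $|I_2 \setminus \widehat{B}| \le 3 \cdot \bigl|\binom{X}{2} \setminus \widehat{A}\bigr| \le 3 \cdot \binom{|X|}{2}$. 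Summing these contributions yields $|V(G')| = \OO(|X|^2)$, and since $|X| \le 2 \cdot \vc(G)$, this is $\OO(\vc(G)^2)$ as claimed.

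I do not anticipate a real obstacle here: the second claim is a direct consequence of taking an induced subgraph, and the first claim reduces to applying the two guarantees supplied by \Cref{prop:new-q-exp-lemma} together with the trivial bound $|\binom{X}{2}| \le |X|^2$. The only point worth stating explicitly is that one needs to handle the degenerate case in which no $3$-expansion exists (so $\widehat{A} = \widehat{B} = B' = \emptyset$): then $I_2 \setminus \widehat{B} = I_2$, but property~(ii) still bounds $|I_2|$ by $3 \cdot \binom{|X|}{2}$, so the same quadratic count goes through.
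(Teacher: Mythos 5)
Your proof is correct and follows essentially the same approach as the paper: bound $|I_1^+|\le|X|$, use property (ii) of the new $q$-expansion lemma to bound $|I_2\setminus\widehat{B}|$, bound $|B'|=3|\widehat{A}|$, and note the second claim follows because $G'$ is an induced subgraph. The explicit handling of the degenerate case $\widehat{A}=\emptyset$ is a small helpful addition but does not change the argument.
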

\fi

\iflong
\begin{proof}
By construction, at most $|X|$ pendant vertices of $I_1$ are marked.
Hence, $|I_1^+| \leq |X|$.
Moreover,  property (ii) of \Cref{prop:new-q-exp-lemma} implies that $|I_2 \setminus \widehat{B}| \leq 3\cdot|{{X}\choose{2}} \setminus \widehat{A}| \leq 3\cdot{{|X|}\choose{2}}$ and $|B'| = 3\cdot|\widehat{A}| \leq 3\cdot{{|X|}\choose{2}}$.
Therefore, $G'$ has $\OO(|X|^2)$ vertices.

The second claim of the statement follows since $G'$ is an induced subgraph of $G$.
\end{proof}
\fi

We let $I':=I\cap V(G')$ denote the independent set in~$G'$. 
Observe that~$I'$ consists of the marked pendant vertices~$I_1^+$, the vertices~$B'$ saturated by the 3-expansion~$M$, and the vertices of~$I_2\setminus \widehat{B}$ which have at least one neighbor in~$\binom{X}{2}\setminus\widehat{A}$.
Moreover, let $H'$ be the subgraph of $H$ induced by~$V(G')$ with bipartition $({{X}\choose{2}}, (I_2 \setminus \widehat{B}) \cup B')$.

%\medskip

\iflong
\subsection{Signatures and Equivalence among $k$-Paths}
\label{vc:k-path-structures}
\else
\paragraph{Signatures and Equivalence among $k$-Paths.}
\fi
While the kernelization algorithm is sufficient to prove that $(G', X, k)$ is equivalent to $(G, X, k)$ for the decision version of the problem, we need to specify some relation between the $k$-paths of $G$ and the $k$-paths of $G'$ for the solution-lifting algorithm.
Based on the description of the marking scheme,  
every $x \in X$ has at most one neighbor $v \in I_1^+$ in the graph~$G'$.
If $x \in X$ has a unique pendant neighbor $v \in I_1$ in $G$, then we call $v$ a {\em 1-pendant} vertex of $I_1$.
Otherwise, if~$x \in X$ has at least two pendant neighbors in $I_1$, then we call each of these pendant neighbors of $x$ a {\em 2-pendant} vertex of $I_1$.  
This distinction is important since a $k$-path in~$G$ may use an unmarked vertex~$v$ of~$I_1$, but then~$v$ has to be a 2-pendant vertex.

Let $\RR \subseteq I'$ denote the set of all vertices of $G'$ that are in $I_2 \setminus \widehat{B}$ or 1-pendant vertices of~$I_1$.
Roughly speaking, $\RR$ contains the vertices of~$I'$ (and also of~$I$) which are somewhat unique; in contrast,~$I'\setminus\RR$ contains the vertices which have many similar vertices in~$I\setminus I'$.
Our definition of equivalence for $k$-paths will essentially make paths equivalent if they have the same interaction with~$X\cup \RR$, where the vertices of~$\RR$ are included because of their ``uniqueness''. We formalize this interaction as follows.
%We have to prove some properties that establishes some relations between all $k$-paths of $G$ and all $k$-paths in $G'$.
\begin{definition}
\label{defn:vc-signature-P}
Let $P=(v_1,\ldots,v_k)$ be a $k$-path of $G$ or $G'$.
Then, the function ${\rmf}_P: V(P) \cap (X \cup \RR) \rightarrow [k]$ with ${\rmf}_P(v_i) = i$ is called the {\em signature} of~$P$. 
\end{definition}

\begin{figure}[t]
\centering
\begin{adjustbox}{width=\textwidth}
\begin{tikzpicture}

%%%%%%%%%%%%%%%%%%%%%%%%%%%%%%%%%%%%%%%%%%%%%%%%%%%%%%%%%%%%%%%%%%%%%%%%%%%%%%%%%%%%%%%%%%%%%%%%%%%%
%%%%%%%%% PART a: auxillary bipratite graph %%%%%%%%%%%%%%%%%%%%%%%%%%%%%%%%%%%%%%%%%%%%%%%%%%%%%%%%
%%%%%%%%%%%%%%%%%%%%%%%%%%%%%%%%%%%%%%%%%%%%%%%%%%%%%%%%%%%%%%%%%%%%%%%%%%%%%%%%%%%%%%%%%%%%%%%%%%%%
\begin{scope}[xshift=7.2cm]

\node[label=left:{$b)$}](rr) at (-1.7, 2.2) []{};

\draw[rounded corners,dotted,very thick] (-1.4, 1.0) rectangle (1.0, 2.1) {};
\node[label=above:{$B'$}](tt) at (-0.2, 1.9) []{};
\draw[rounded corners,dotted,very thick] (1.6, 1.0) rectangle (3.0, 2.1) {};
\node[label=above:{$\widehat{B}\setminus B'$}](tt) at (2.3, 1.9) []{};
\draw[rounded corners,dotted,very thick] (3.6, 1.0) rectangle (4.0, 2.1) {};
\node[label=above:{$I_2\setminus\widehat{B}$}](tt) at (3.8, 1.9) []{};
\draw[rounded corners] (-1.7, 0.8) rectangle (4.3, 1.9) {};
\node[label=left:{$I_2$}](X) at (-1.6, 1.2) []{};

\node[label=above:{$i_1$}](V1) at (-1.2, 1.3) [shape = circle, draw, fill=black, scale=0.11ex]{};
\node[label=above:{$i_2$}](V2) at (-0.2, 1.3) [shape = circle, draw, fill=black, scale=0.11ex]{};
\node[label=above:{$i_3$}](V3) at (0.8, 1.3) [shape = circle, draw, fill=black, scale=0.11ex]{};
\node[label=above:{$i_4$}](V4) at (1.8, 1.3) [shape = circle, draw, fill=black, scale=0.11ex]{};
\node[label=above:{$i_5$}](V5) at (2.8, 1.3) [shape = circle, draw, fill=black, scale=0.11ex]{};
\node[label=above:{$i_6$}](V6) at (3.8, 1.3) [shape = circle, draw, fill=black, scale=0.11ex]{};

\draw[rounded corners] (-1.7, -0.7) rectangle (4.3, 0.1) {};
\node[label=left:{$\binom{X}{2}$}](X) at (-1.5, -0.3) []{};
\node[label=below:{$x_{2,3}$}](X1) at (-0.4, -0.2) [shape = circle, draw, fill=black, scale=0.11ex]{};
\node[label=below:{$x_{3,4}$}](X2) at (1.3, -0.2) [shape = circle, draw, fill=black, scale=0.11ex]{};
\node[label=below:{$x_{5,6}$}](X3) at (3.0, -0.2) [shape = circle, draw, fill=black, scale=0.11ex]{};

\path [-,line width=0.2mm](X1) edge (V1);
\path [-,line width=0.2mm](X1) edge (V2);
\path [-,line width=0.2mm](X1) edge (V3);
\path [-,line width=0.2mm](X1) edge (V4);

\path [-,line width=0.2mm](X2) edge (V4);
\path [-,line width=0.2mm](X2) edge (V5);

\path [-,line width=0.2mm](X3) edge (V6);

%%%%%%%%%%%%%%%%%%%%%% matched edges %%%%%%%%%%%%%%%%%%%%%%%%%%%%%%%%%%%%%%%%%%%%%%%%%%%%%%%%%%%%%%%
\draw[red,line width=3pt]  (X1) to (V1);
\draw[red,line width=3pt]  (X2) to (V4);

\draw[blue!40,line width=3pt]  (X1) to (V4);
\draw[blue!40,line width=3pt]  (X2) to (V5);

\draw[dashed,dash pattern=on 5pt off 5pt,red,line width=3pt]  (X3) to (V6);
\draw[dashed,dash pattern=on 5pt off 5pt,blue!40,dash phase=5pt,line width=3pt]  (X3) to (V6);

\end{scope}

%%%%%%%%%%%%%%%%%%%%%%%%%%%%%%%%%%%%%%%%%%%%%%%%%%%%%%%%%%%%%%%%%%%%%%%%%%%%%%%%%%%%%%%%%%%%%%%%%%%%
%%%%%%%%% PART b: input graph %%%%%%%%%%%%%%%%%%%%%%%%%%%%%%%%%%%%%%%%%%%%%%%%%%%%%%%%%%%%%%%%%%%%%%
%%%%%%%%%%%%%%%%%%%%%%%%%%%%%%%%%%%%%%%%%%%%%%%%%%%%%%%%%%%%%%%%%%%%%%%%%%%%%%%%%%%%%%%%%%%%%%%%%%%%
\begin{scope}[xshift=-7.2cm]

\node[label=left:{$a)$}](rr) at (5.5, 2.2) []{};

\draw[rounded corners] (5.5, -0.7) rectangle (11.5, 0.1) {};
\node[label=left:{$X$}](X) at (5.6, -0.3) []{};

\draw[rounded corners] (5.5, 0.8) rectangle (11.5, 1.9) {};
\node[label=left:{$I$}](X) at (5.6, 1.2) []{};

\draw[rounded corners,dotted,very thick] (5.8, 1.0) rectangle (8.2, 2.1) {};
\node[label=above:{$B'$}](tt) at (7.0, 1.9) []{};
\draw[rounded corners,dotted,very thick] (8.8, 1.0) rectangle (10.2, 2.1) {};
\node[label=above:{$\widehat{B}\setminus B'$}](tt) at (9.5, 1.9) []{};
\draw[rounded corners,dotted,very thick] (10.8, 1.0) rectangle (11.2, 2.1) {};
\node[label=above:{$I_2\setminus\widehat{B}$}](tt) at (11.0, 1.9) []{};

\node[label=above:{$i_1$}](V11) at (6.0, 1.3) [shape = circle, draw, fill=black, scale=0.11ex]{};
\node[label=above:{$i_2$}](V21) at (7.0, 1.3) [shape = circle, draw, fill=black, scale=0.11ex]{};
\node[label=above:{$i_3$}](V31) at (8.0, 1.3) [shape = circle, draw, fill=black, scale=0.11ex]{};
\node[label=above:{$i_4$}](V41) at (9.0, 1.3) [shape = circle, draw, fill=black, scale=0.11ex]{};
\node[label=above:{$i_5$}](V51) at (10.0, 1.3) [shape = circle, draw, fill=black, scale=0.11ex]{};
\node[label=above:{$i_6$}](V61) at (11.0, 1.3) [shape = circle, draw, fill=black, scale=0.11ex]{};

\node[label=below:{$x_1$}](X11) at (6.0, -0.2) [shape = circle, draw, fill=black, scale=0.11ex]{};
\node[label=below:{$x_2$}](X21) at (7.0, -0.2) [shape = circle, draw, fill=black, scale=0.11ex]{};
\node[label=below:{$x_3$}](X31) at (8.0, -0.2) [shape = circle, draw, fill=black, scale=0.11ex]{};
\node[label=below:{$x_4$}](X41) at (9.0, -0.2) [shape = circle, draw, fill=black, scale=0.11ex]{};
\node[label=below:{$x_5$}](X51) at (10.0, -0.2) [shape = circle, draw, fill=black, scale=0.11ex]{};
\node[label=below:{$x_6$}](X61) at (11.0, -0.2) [shape = circle, draw, fill=black, scale=0.11ex]{};

\path [-,line width=0.2mm](X21) edge (V11);
\path [-,line width=0.2mm](X21) edge (V21);
\path [-,line width=0.2mm](X21) edge (V31);
\path [-,line width=0.2mm](X21) edge (V41);

\path [-,line width=0.2mm](X31) edge (V11);
\path [-,line width=0.2mm](X31) edge (V21);
\path [-,line width=0.2mm](X31) edge (V31);
\path [-,line width=0.2mm](X31) edge (V41);
\path [-,line width=0.2mm](X31) edge (V51);

\path [-,line width=0.2mm](X41) edge (V41);
\path [-,line width=0.2mm](X41) edge (V51);

\path [-,line width=0.2mm](X51) edge (V61);

\path [-,line width=0.2mm](X61) edge (V61);

\path [-,line width=0.2mm](X11) edge (X21);
\path [-,line width=0.2mm](X41) edge (X51);

%%%%%%%%%%%%%%%%%%%%%%%%%%%%%%%%%%%%% the two paths %%%%%%%%%%%%%%%%%%%%%%%%%%%%%%%%%%%%%%%%%%%%%%%%
\draw[dashed,dash pattern=on 5pt off 5pt,red,line width=3pt]  (X11) to (X21);
\draw[dashed,dash pattern=on 5pt off 5pt,blue!40,dash phase=5pt,line width=3pt]  (X11) to (X21);

\draw[dashed,dash pattern=on 5pt off 5pt,red,line width=3pt]  (X41) to (X51);
\draw[dashed,dash pattern=on 5pt off 5pt,blue!40,dash phase=5pt,line width=3pt]  (X41) to (X51);

\draw[dashed,dash pattern=on 5pt off 5pt,red,line width=3pt]  (X51) to (V61);
\draw[dashed,dash pattern=on 5pt off 5pt,blue!40,dash phase=5pt,line width=3pt]  (X51) to (V61);

\draw[dashed,dash pattern=on 5pt off 5pt,red,line width=3pt]  (X61) to (V61);
\draw[dashed,dash pattern=on 5pt off 5pt,blue!40,dash phase=5pt,line width=3pt]  (X61) to (V61);

\draw[dashed,dash pattern=on 5pt off 5pt,red,line width=3pt]  (X31) to (V41);
\draw[dashed,dash pattern=on 5pt off 5pt,blue!40,dash phase=5pt,line width=3pt]  (X31) to (V41);

\draw[red,line width=3pt]  (X21) to (V11);
\draw[red,line width=3pt]  (X31) to (V11);
\draw[red,line width=3pt]  (X41) to (V41);

\draw[blue!40,line width=3pt]  (X21) to (V41);
\draw[blue!40,line width=3pt]  (X31) to (V51);
\draw[blue!40,line width=3pt]  (X41) to (V51);
\end{scope}
\end{tikzpicture}
\end{adjustbox}

\caption{Part~$a)$ shows a graph~$G$ and its decomposition into the modulator~$X$ and the independent set~$I$. 
Here, the vertices in~$I_1$ are not shown.
The two paths~$P_1=(x_1,x_2,i_1,x_3,i_4,x_4,x_5,i_6,x_6)$ (in \textcolor{red}{red}) and~$P_1=(x_1,x_2,i_4,x_3,i_5,x_4,x_5,i_6,x_6)$ (in \textcolor{blue!50}{blue}) have the same signature~${\rmf}(x_1)=1,{\rmf}(x_2)=2,{\rmf}(x_3)=4,{\rmf}(x_4)=6,{\rmf}(x_5)=7,{\rmf}(i_6)=8,{\rmf}(x_6)=9$.
Part~$b)$ shows the corresponding auxiliary bipartite graph and the highlighted edges show the edges occupied (see paragraph above \Cref{lemma:edge-occupation-vc} for a definition) of both paths~$P_1$ and~$P_2$.
}
\label{fig-vc-example-signature}
\end{figure}

Informally, the {\em signature} of $P$ is the mapping of the vertices of $X \cup \RR$ to their position in $P$; for an example see \Cref{fig-vc-example-signature}. \iflong\else In other words, the signature is the restriction of the position fuction ${\pos}_P$, which maps each vertex to its index, to $X \cup \RR$. \fi
%
%\medskip
%
Observe that one can compute the signature~${\rmf}_P$ of a given $k$-path in linear time. 

%For an example for the signature of a $k$-path see \Cref{fig-vc-example-signature}.

In order to establish a relation between the $k$-paths of $G$ and the $k$-paths of~$G'$, we define the notion of equivalent $k$-paths as follows.
\begin{definition}
\label{defn:vc-equivalent-k-paths}
Two $k$-paths $P_1$ and $P_2$ of $G$ are said to be {\em equivalent} if
\begin{enumerate}[(i)]
	%\item $V(P_1) \cap (X \cup \RR) = V(P_2) \cap (X \cup \RR)$,
	\item \label{item-1-pendant} $P_1$ and $P_2$ have the same signature.
	\item \label{item-2-pendant} $P_1$ starts (respectively ends) at a 2-pendant vertex of $I$ if and only if $P_2$ starts (respectively ends) at a 2-pendant vertex of $I$.
\end{enumerate}
\end{definition}

An example of two equivalent $k$-paths is shown in \Cref{fig-vc-example-signature}.
Note that \Cref{item-1-pendant} of \Cref{defn:vc-equivalent-k-paths} implies that~$V(P_1) \cap (X \cup \RR) = V(P_2) \cap (X \cup \RR)$.
Also, note that in \Cref{item-2-pendant} of \Cref{defn:vc-equivalent-k-paths} it is sufficient to consider 2-pendant vertices, since all 1-pendant vertices are contained in~$\RR$.

To obtain $k$-paths from a signature, we use another mapping \emph{extension} from $I \setminus \RR$ to the indices of~$P$ such that the combination of both the mappings yields a $k$-path in $G$ (or~$G'$). %containing at least one vertex that is not in $G'$.
Its formal definition is given as follows.

\begin{definition}
\label{defn:vc-extn-signature-P}
Let $P'$ be a $k$-path in $G'$ and let $I^* \subseteq I$.
A mapping~${\rmg}: I^* \rightarrow [k]$ is an {\em extension} of the signature ${\rmf}_{P'}$ if
\iflong
\begin{enumerate}[(i)]
	\item ${\rmg}$ is injective, and
	%\item $I^* \setminus I' = I^* \setminus V(G') \neq \emptyset$, and
	\item there is a~$k$-path $P$ in $G$ such that~$\rmf_{P'} \cup \rmg$ is the position function $\pos_{P}$.
\end{enumerate}
\else
$(i)$  ${\rmg}$ is injective, and $(ii)$ there is a~$k$-path $P$ in $G$ such that~$\rmf_{P'} \cup \rmg$ is the position function $\pos_{P}$.
\fi
\end{definition}

Note that if~$I^*\subseteq I'$, then the $k$-path~$P$ obtained by~$\rmf_{P'} \cup \rmg$ is a $k$-path in~$G'$.

The equivalence definition is given for $k$-paths of~$G$ but since~$G'$ is an induced subgraph of~$G$ this also implicitly defines equivalence between $k$-paths of~$G'$ and between $k$-paths of~$G$ and $k$-paths of $G'$.
It is clear that the above definition is an equivalence relation for the paths of~$G$.
Hence, a $k$-path $P$ is equivalent to itself. 
Consequently, every $k$-path~$P$ of~$G'$ has some equivalent $k$-path in~$G$, namely~$P$ itself.
We now show that the reverse is also true.

\iflong
%\begin{restatable}[$\star$]{lemma}{lemma:vc-same-signature-existence}
\begin{lemma}
\label{lemma:vc-same-signature-existence}
Let $P$ be a $k$-path of $G$.
Then, $G'$ has a $k$-path $P'$ that is equivalent to~$P$.	
\end{lemma}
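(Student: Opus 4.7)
The plan is to transform $P$ into an equivalent $k$-path $P'$ of $G'$ by locally substituting the vertices of $P$ that do not lie in $V(G')$. These vertices fall into three classes:
(a) interior vertices $u$ with $u \in \widehat{B} \setminus B'$,
(b) 2-pendant endpoints $v \in I_1 \setminus I_1^+$, and
(c) endpoints $v \in I_2 \cap (\widehat{B} \setminus B')$.
All other vertices of $P$ already lie in $V(G')$ and are kept verbatim in $P'$.

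For class~(b), I replace $v$ by the marked pendant $v^{\ast} \in I_1^+$ of the $X$-vertex $x$ adjacent to $v$ in $P$. Because $v$ is a 2-pendant, $x$ has at least two pendants in $G$, so $v^{\ast}$ is also a 2-pendant of $I$, which preserves \Cref{item-2-pendant} of \Cref{defn:vc-equivalent-k-paths}. Furthermore $v^{\ast} \notin V(P) \setminus \{v\}$, since the only $G$-neighbor of $v^{\ast}$ is $x$ and that adjacent endpoint slot of $P$ is already taken by $v$.

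For classes~(a) and~(c), I exploit the $3$-expansion $M$. For any $u$ of class~(a) with $P$-neighbors $\{x_u, y_u\} \subseteq X$, the property $N_H(\widehat{B}) \subseteq \widehat{A}$ forces $\{x_u, y_u\} \in \widehat{A}$, so $M$ supplies three $M$-matched candidates in $B'$, each adjacent in $G$ to both $x_u$ and $y_u$. For any $v$ of class~(c) adjacent in $P$ to $x_1 \in X$, since $v \in I_2 \cap \widehat{B}$ has $G$-degree at least two, there is a pair $\{x_1, z\} \in N_H(v) \subseteq \widehat{A}$, which again yields three candidates in $B'$ adjacent to $x_1$. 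To reconcile these replacements with $B'$-vertices that already appear elsewhere in $V(P)$, I simultaneously reassign \emph{every} position of $P$ whose current interior vertex lies in $B' \cup (\widehat{B} \setminus B')$. This is formalized as a matching problem in a bipartite graph $\Psi$ whose left side collects the corresponding pairs in $\widehat{A}$ and whose right side is $B'$ minus the (at most two) $B'$-vertices that appear as endpoints of $P$. Each left vertex has exactly three right-neighbors via $M$-edges, and these three-element neighborhoods are pairwise disjoint since every vertex of $B'$ is incident to a unique $M$-edge. Hence $|N_{\Psi}(S)| \ge 3|S| - 2 \ge |S|$ for every non-empty $S$ on the left, Hall's condition holds, and a matching saturating the left side exists; it supplies a distinct valid replacement in $B'$ for each affected position.

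Finally, I verify equivalence: the $X$-vertices and the $\RR$-vertices of $P$ (i.e.\ marked 1-pendants and vertices of $I_2 \setminus \widehat{B}$) occupy the same positions in $P'$, while every newly introduced vertex lies in $X \cup I_1^+ \cup B' \subseteq V(G') \setminus \RR$ and therefore does not enter the signature; thus $\rmf_{P'} = \rmf_{P}$. Combined with the 2-pendant-preserving substitution from class~(b), this establishes both conditions of \Cref{defn:vc-equivalent-k-paths}. The chief obstacle is the Hall-type step for classes~(a) and~(c): the whole argument hinges on the fact that the $3$-expansion offers three pairwise disjoint candidate sets per affected pair, whereas only up to two endpoints of $P$ can block candidates---providing just enough slack to guarantee a matching.
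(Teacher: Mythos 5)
Your overall strategy coincides with the paper's: substitute the non-kernel vertices of $P$ locally, replacing endpoints in $I_1\setminus I_1^+$ by marked pendants and replacing $\widehat B$-vertices by $B'$-vertices supplied by the $3$-expansion $M$, while keeping all $X\cup\RR$-vertices at their positions. The paper carries this out greedily (endpoints first, then the interior $\widehat B$-positions one by one), using that interior positions have pairwise distinct occupied pairs by \Cref{lemma:edge-occupation-vc} and that the endpoint replacements can consume at most two $B'$-vertices, so each interior pair always has a free $M$-candidate. You instead package the interior and endpoint replacements into one bipartite matching problem on $\Psi$ and invoke Hall's theorem; this is a legitimate reframing of the same underlying idea.

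The gap is in your verification of Hall's condition. You assert that the three-element $M$-neighborhoods of distinct left-vertices of $\Psi$ are pairwise disjoint. That is true only when the underlying pairs in $\widehat A$ are pairwise distinct. For interior positions this is guaranteed by \Cref{lemma:edge-occupation-vc}, but for a class~(c) endpoint $v$ you pick an essentially arbitrary pair $\{x_1,z\}\subseteq N_G(v)\cap X$, and this pair can coincide with the occupied pair of the adjacent interior position: for instance $P=(v,x_1,w,z,\dots)$ with $v\in\widehat B\setminus B'$, $w\in\widehat B$ and $z\in N_G(v)$ gives the \emph{same} pair $\{x_1,z\}$ for position~$1$ and for position~$3$. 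The two neighborhoods then coincide rather than being disjoint, and for $S$ consisting of those two left-vertices you get $|N_\Psi(S)|\le 3<3\cdot 2-2$, so the inequality you rely on fails. (Hall's condition does in fact still hold, because a class~(c) endpoint is not a $B'$-vertex, so whenever a duplicate pair arises at most one right-side vertex is deleted, giving $|N_\Psi(S)|\ge 2=|S|$ — but this is a separate case analysis that your argument does not perform, and it is exactly the subtlety the paper's greedy bookkeeping avoids by never needing disjointness between the endpoint candidate set and the interior candidate sets.)
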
 
%\end{restatable}

\else

\begin{restatable}[$\star$]{lemma}{lemma:vc-same-signature-existence}
%\begin{lemma}
\label{lemma:vc-same-signature-existence}
Let $P$ be a $k$-path of $G$.
Then, $G'$ has a $k$-path $P'$ that is equivalent to~$P$.	
%\end{lemma} 
\end{restatable}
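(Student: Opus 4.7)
Given a $k$-path $P = (v_1, \dots, v_k)$ of $G$, my plan is to build an equivalent $k$-path $P'$ of $G'$ by keeping the vertices of $P$ that already lie in $V(G')$ at their $P$-positions and replacing the others by carefully chosen vertices of $V(G')$. Since $I$ is independent, every $I$-vertex on $P$ has its $P$-neighbors in $X$; because $I_1$-vertices are pendant in $G$, they can appear on $P$ only at its endpoints. Consequently, the positions requiring replacement fall into two types: (a) interior positions $i$ with $v_i \in \widehat{B}\setminus B'$, and (b) endpoints $v_1$ or $v_k$ that are $2$-pendants not in $I_1^+$. Vertices of $P$ in $X$, in $\RR$, or in $B'$ already lie in $V(G')$ and will be kept at their current positions.

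The core step is to handle (a) using the $3$-expansion from \Cref{prop:new-q-exp-lemma}. For any interior position $i$ with $v_i \in \widehat{B}$, the pair $\{v_{i-1},v_{i+1}\}$ lies in $\widehat{A}$, because $v_{i-1},v_{i+1}\in X$ and $N_H(\widehat{B})\subseteq \widehat{A}$. Distinct such positions $i_1<i_2$ give distinct pairs: equality of the unordered pairs would force $i_1-1=i_2-1$ or $i_1-1=i_2+1$, contradicting the simplicity of $P$. Since the expansion $M$ assigns to every $\{x,y\}\in\widehat{A}$ three distinct $B'$-neighbors and the three triples are pairwise disjoint across different pairs, for every $i$ as above I select a private $w_i\in B'$ of $\{v_{i-1},v_{i+1}\}$. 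The $w_i$'s are automatically pairwise distinct, and each $w_i$ is adjacent in $G$ to both $v_{i-1}$ and $v_{i+1}$ (its $H$-edge to $\{v_{i-1},v_{i+1}\}$ reflects $G$-edges to both endpoints). Replacing $v_i$ by $w_i$ thus preserves the two edges of $P$ incident to position $i$.

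Endpoints are handled in the same spirit. If $v_1\in I_2\cap(\widehat{B}\setminus B')$, then $v_1$ has at least two $X$-neighbors in $G$; I pick some $y\ne v_2$ with $\{v_2,y\}\in\widehat{A}$ and replace $v_1$ by a $B'$-private of $\{v_2,y\}$ that is free of conflicts. Since each pair has three privates and at most two of them can ever be claimed simultaneously (at most one by an interior position carrying the same pair, and at most one by the symmetric replacement at $v_k$), a free choice always exists; the same argument applies to $v_k$. If $v_1$ is a $1$-pendant, it already lies in $\RR\cap I_1^+\subseteq V(G')$ and is kept; if it is a $2$-pendant in $I_1\setminus I_1^+$, I replace it by the unique $I_1^+$-pendant of its $X$-neighbor, which is again a $2$-pendant. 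The one degenerate configuration that needs dedicated care is $k=3$ with both endpoints being distinct $2$-pendants of the same $X$-vertex; this cannot occur for $k\ge 4$ and will be handled by a small-instance case of the kernelization. This boundary case is the main technical obstacle in writing out the full proof.

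It remains to verify that $P'$ is a $k$-path of $G'$ equivalent to $P$. Distinctness of the $k$ vertices of $P'$ follows since the preserved positions are distinct in $P$, the replacements $w_i\in B'$ are pairwise distinct by disjointness of the private triples, and the containments $B'\subseteq I_2$, $I_1^+\subseteq I_1$, $B'\cap\RR=\emptyset$ guarantee that replacements are disjoint from all preserved positions. Consecutive pairs in $P'$ are edges of $G'$ because every replacement is adjacent in $G$ to both $P$-neighbors of the replaced vertex and $G'$ is an induced subgraph of $G$. The signature is preserved because the $(X\cup\RR)$-positions of $P$ are retained in place and all replacement vertices lie in $V(G')\setminus(X\cup\RR) = B'\cup(I_1^+\setminus\RR)$. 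Finally, the $2$-pendant endpoint condition of \Cref{defn:vc-equivalent-k-paths}(ii) holds since $2$-pendants are replaced by $2$-pendants and $1$-pendant and non-pendant endpoints are preserved in place. Hence $P'$ is equivalent to $P$, as required.
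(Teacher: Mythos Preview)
Your approach is the same as the paper's in spirit: keep the $(X\cup\RR)$-vertices of $P$ in place and replace the remaining positions using the $3$-expansion for $\widehat{B}$-vertices and the marked pendant for $I_1$-endpoints. The key observation that distinct interior $I$-positions of $P$ yield distinct unordered pairs $\{v_{i-1},v_{i+1}\}$, hence disjoint private triples in $M$, is exactly the heart of the argument.

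There is, however, a genuine gap caused by an internal inconsistency. In your setup you state that $B'$-vertices of $P$ are ``kept at their current positions'', so only interior positions with $v_i\in\widehat{B}\setminus B'$ require replacement; but your core step ranges over all $v_i\in\widehat{B}$, and your final disjointness check (``the containments $B'\subseteq I_2$, $I_1^+\subseteq I_1$, $B'\cap\RR=\emptyset$ guarantee that replacements are disjoint from all preserved positions'') does not exclude a preserved $B'$-vertex. Under the ``preserve $B'$'' reading the argument actually fails: a preserved $v_j\in B'$ can be one of the three $M$-privates of the pair $\{v_{i-1},v_{i+1}\}$ at a position $i$ that needs replacement, and indeed all three privates can sit elsewhere on $P$, leaving no free choice. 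Concretely, take $P=(a,z,b,c,u_1,d,e,u_2,f,g,u_3,h)$ where $z\in\widehat{B}\setminus B'$ has pair $\{a,b\}$ and $u_1,u_2,u_3\in B'$ are the three $M$-privates of $\{a,b\}$, each placed between a different $X$-pair; then no private of $\{a,b\}$ is available to replace $z$. The paper avoids this by replacing \emph{every} intermediate $\widehat{B}$-vertex (including those already in $B'$) and handling the two endpoints first, so that each intermediate pair has at most two of its three privates consumed by endpoint replacements and at least one remains free. Adopting that order (endpoints first, then all intermediate $\widehat{B}$) closes the gap. Your remark that the $k=3$ configuration with both endpoints being $2$-pendants of the same $X$-vertex is degenerate is fair; the paper's proof does not address it explicitly either.
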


\fi

\iflong
\begin{proof}
Let $P$ be a $k$-path of $G$ and let~$Y := (X \cup \RR) \cap V(P)$.
If $P$ is a $k$-path of $G'$, then~$P$ is equivalent to itself and we are done.
Thus, in the following we assume that $P$ is not a $k$-path in $G'$.
Hence, $P$ contains at least one vertex that is not in the kernel.
More precisely, $P$ contains a vertex from $I_1 \setminus I_1^+$ or  a vertex from $\widehat{B} \setminus B'$ (or both).
Let ${\rmf}_P: Y \rightarrow [k]$ denote the signature of $P$ and let ${\rmg}: Z \rightarrow [k]$ denote the extension of ${\rmf}_P$ for $Z:= V(P)\setminus Y \subseteq I$.
We prove that there is a different subset $Z' \subseteq I'$ with an extension~$\rmg':Z'\to [k]$ such that $\rmf_{P} \cup \rmg'$ gives a $k$-path of $G'$.

Recall that 1-pendant vertices and vertices in~$I_2\setminus \widehat{B}\subseteq \RR\subseteq I'$ are part of the signature and thus~$P$ and~$P'$ contain the same subset of these vertices which appear on the exact same positions.
Our strategy to obtain~$Z'$ is two steps as follows:
in Step~1, we check if the first vertex~$u$ of~$P$ is contained in~$Z$.
If yes, we replace~$u$ by a vertex~$w$ from~$I'$ preserving the equivalence of the paths and add~$w$ to~$Z'$.
We handle the last vertex~$u$ of~$P$ analogously.
Observe that~$|Z'|\le 2$ after Step~1.
In Step~2, we replace all intermediate vertices of~$V(P)\cap Z$.
Here, we exploit the 3-expansion~$M$.
More precisely, since we used at most 2~vertices to replace the first and last vertex, for each vertex~$x\in \widehat{A}$, there is at least one neighbor~$y\in B'$ remaining such that~$M$ contains the edge~$xy$.
Next, we provide the details for both steps.

\emph{Step 1:}
We initialize $Z' := \emptyset$.
Furthermore, we distinguish the case whether~$P$ contains vertices of~$I_1\setminus I_1^+$, or only vertices of~$I_2\setminus \widehat{B}$.

If $P$ contains a vertex $u \in I_1 \setminus I_1^+$, then $u$ is the start (or end) vertex of $P$ and $u$ is a 2-pendant vertex in $G$ (since 1-pendant vertices are part of the signature).
Formally, ${\rmg}(u) = 1$ (or ${\rmg}(u) = k$).
Let~$x \in X$ be the unique neighbor of~$u$ in $G$. 
It follows that ${\rmf}_P(x) = 2$ (or ${\rmf}_P(x) = k-1$, respectively).
By the algorithm of the marking scheme, there exists a vertex~$u' \in I_1^+$ such that $N(u') = \{x\}$.
We set $Z' = Z' \cup \{u'\}$ and ${\rmg'}(u') = 1$ (or ${\rmg'}(u') = k$, respectively). That is, $u'$ is the first vertex (or the last vertex, respectively) of $P'$.
This ensures that condition (ii) is satisfied from \Cref{defn:vc-equivalent-k-paths}.

In the following, we assume that all remaining vertices of~$Z$ stem from~$I_2\setminus \widehat{B}$.
Recall that we still aim to replace the vertices of $V(P) \cap \widehat{B}$ that are endpoints of~$V(P)$.
First, observe that by definition every vertex in $\widehat{B} \cap V(P)$ has at least two neighbors in~$X$. 
If the first vertex~$u$ of $P$ is contained in~$\widehat{B}$, then ${\rmg}(u) = 1$ and there is a vertex~$x \in X$ such that ${\rmf}_P(x) = 2$.
% By definition, $B \subseteq I_2$ and $u$ is not an isolated vertex in $G$.
% Hence, $u$ has at least two neighbors in $X$.
Let $x$ and some other arbitrary but fixed vertex~$x'$ be two neighbors of $u$ in~$G$. 
Since $u \in \widehat{B}$, it follows from property~(i) of \Cref{prop:new-q-exp-lemma} that $\{x, x'\} \in \widehat{A}$. By property~(ii) of \Cref{prop:new-q-exp-lemma} and the choice of $q = 3$ in the marking scheme, it follows that there are three vertices $u', \widehat{u}, u'' \in B'$ such that $(\{x, x'\}, u'), (\{x, x'\}, \widehat{u}), (\{x, x'\}, u'') \in M$.
We set~$Z' := Z' \cup \{u'\}$ and $
\rmg'(u')=1$, that is, $u'$ is the first vertex of $P'$.
Note that it is irrelevant which vertex of~$u', \widehat{u}, u''$ we choose.
%Similarly
The last vertex~$u$ of~$P$ can be handled analogously when~$u$ is from~$\widehat{B}$: 
If~$u$ is from $\widehat{B}$, then there is $y \in X$ such that ${\rmf}_P(y) = k-1$ and ${\rmg}(u) = k$.
%Same as before, we can argue that $v$ has at least two neighbors $y, y' \in X$ and it follows from item (i) of Proposition \ref{prop:new-q-exp-lemma} that $\{y, y'\} \in A$.
As above, there are three vertices $u', \widehat{u}, u'' \in B'$ such that $(\{y, y'\}, \widehat{u}), (\{y, y'\}, u'), (\{y, y'\}, u'') \in M$ where~$y'$ is another arbitrary but fixed neighbor of~$u$ in~$G$.
We set~$Z' := Z' \cup \{u'\}$  and $\rmg'(u')=k$, that is,~$u'$ is the last vertex of~$P'$.

\emph{Step 2:}
Finally, it remains to replace the intermediate vertices of $V(P) \cap \widehat{B}$. 
These vertices have a predecessor~$b_1$ and a successor~$b_2$ in~$P$ which are both vertices from~$X$. 
Recall that~$M$ is a 3-expansion of~$\widehat{A}$ into~$\widehat{B}$ and note that~$B'\subseteq \widehat{B}$ is the set of vertices saturated by~$M$. that is, $|B'|=3\cdot|\widehat{A}|$.
Recall that~$B'\subseteq I'\setminus\RR$.
Using~$M$, we now replace all intermediate vertices of $V(P) \cap B$ by endpoints of $M$.
For each intermediate vertex~$w \in V(P) \cap \widehat{B}$ we do the following. 
Then, there are three vertices $w',\widehat{w}, w'' \in B'$ such that $(\{b_1, b_2\}, w'), (\{b_1, b_2\}, \widehat{w}), (\{b_1, b_2\}, w'') \in M$. 
Of these three vertices, at least one, say~$w'$ is not contained in~$Z'$ since (1) at most two of them have been added when replacing the endpoints of~$V(P)\setminus \widehat{B}$, and (2) since no two edges of~$M$ have the same endpoint in~$B'$, vertex~$w'$ is not used for another replacement of an intermediate vertex.
We set~$Z' := Z' \cup \{w'\}$ and~$\rmg'(w')=\rmg(w)$.

This completes the construction of the set $Z' \subseteq I'\setminus \RR$ and~$\rmg'$.
By construction, $\rmf\cup \rmg'$ gives a $k$-path~$P'$ in $G'$.
Observe that $P'$ is equivalent to $P$ since~they have the same signature and since~$P'$ starts or end at a 2-pendant vertex if and only if~$P$ does.
\end{proof}

\fi

We now observe a structural property that illustrates a correspondence between the matchings in the auxiliary bipartite graph $H$ and the $k$-paths of $G$. This allows the solution-lifting algorithm to make use of polynomial-delay algorithms for matching enumeration.
If $xu, uy \in E(G)$ are two edges of $P$ such that $x, y \in X$ and $u \in I$, then the pair $\{x, y\}$ is adjacent to $u$ in $H$.
In such a situation, the path $P$ is said to {\em occupy} the edge $(\{x, y\}, u)$ of $H$.
We state the following lemma that also holds true for $G'$ (for an example see \Cref{fig-vc-example-signature}).

\iflong
%\begin{restatable}[$\star$]{lemma}{lemma:edge-occupation-vc}
\begin{lemma}
\label{lemma:edge-occupation-vc}
Let $P$ be a $k$-path in $G$ and let $H = ({{X}\choose{2}}, I_2)$ be the auxiliary bipartite graph.
Then, the edges of $H$ that are occupied by $P$ form a matching in $H$.
Moreover, given a $k$-path~$P$ of $G$ and the auxiliary bipartite graph $H$, the edges of $H$ that are occupied by $P$ can be computed in $\OO(n+m)$ time. 
\end{lemma}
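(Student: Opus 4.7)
The plan is to establish the matching property by ruling out shared endpoints on either side of the bipartition, and then to describe a straightforward single-pass algorithm for computing the occupied edges.

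First I would observe that an edge $(\{x,y\}, u)$ of $H$ is occupied by $P$ precisely when there is some index $i \in \{2, \ldots, k-1\}$ such that $\{v_{i-1}, v_{i+1}\} = \{x,y\} \subseteq X$ and $v_i = u \in I_2$. To show that the occupied edges form a matching, I would check that no two of them share an endpoint, considering the two sides of the bipartition separately. On the $I_2$-side, suppose $u \in I_2$ is the shared endpoint of two occupied edges $(\{x_1,y_1\},u)$ and $(\{x_2,y_2\},u)$. Since $P$ is a simple path, the vertex $u$ appears exactly once in $P$, say at position $i$, so its unique predecessor $v_{i-1}$ and successor $v_{i+1}$ are determined; hence $\{x_1,y_1\} = \{v_{i-1}, v_{i+1}\} = \{x_2, y_2\}$, a contradiction.

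For the $\binom{X}{2}$-side, suppose $\{x,y\} \in \binom{X}{2}$ is the shared endpoint of two occupied edges $(\{x,y\}, u)$ and $(\{x,y\}, v)$ with $u \neq v$. By definition of occupation, both $u$ and $v$ appear in $P$ as an interior vertex of a length-two subpath whose endpoints are exactly $x$ and $y$. Again by simplicity of $P$, each of $x$ and $y$ occurs exactly once in $P$, so there is at most one position $i$ for which $\{v_{i-1}, v_{i+1}\} = \{x,y\}$, forcing $u = v$, a contradiction. Together these two cases show that the occupied edges form a matching in $H$.

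For the algorithmic part, I would simply scan the sequence $P = (v_1, \ldots, v_k)$ once. After an $\Oh(n+m)$ preprocessing that marks membership in $X$ and $I_2$, for each index $i$ with $2 \le i \le k-1$ I check whether $v_i \in I_2$ and $v_{i-1}, v_{i+1} \in X$; if so, I output the edge $(\{v_{i-1}, v_{i+1}\}, v_i)$. This single pass runs in $\Oh(k)$ time, giving an overall $\Oh(n+m)$ bound. I do not expect any real obstacle here: the matching property is essentially a direct consequence of the simplicity of $P$, and the only care needed is in cleanly bookkeeping the two possible shared-endpoint scenarios.
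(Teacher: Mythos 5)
Your proof is correct and takes essentially the same approach as the paper: the matching property is a direct consequence of the simplicity of $P$, and the algorithm is a single linear scan over the $k$ positions of $P$ after $\OO(n+m)$ preprocessing. You split the no-shared-endpoint argument into the two sides of the bipartition, which is slightly more explicit than the paper's one-step observation that two distinct occupied edges give two vertex-disjoint length-two subpaths, but the underlying reasoning is the same.
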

%\end{restatable}

\else

\begin{restatable}[$\star$]{lemma}{lemma:edge-occupation-vc}
%\begin{lemma}
\label{lemma:edge-occupation-vc}
Let $P$ be a $k$-path in $G$ and let $H = ({{X}\choose{2}}, I_2)$ be the auxiliary bipartite graph.
Then, the edges of $H$ that are occupied by $P$ form a matching in $H$.
Moreover, given a $k$-path~$P$ of $G$ and the auxiliary bipartite graph $H$, the edges of $H$ that are occupied by $P$ can be computed in $\OO(n+m)$ time. 
%\end{lemma}
\end{restatable}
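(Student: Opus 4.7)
The plan is to prove both claims by directly analyzing what it means for $P$ to occupy an edge of $H$. Recall that $(\{x,y\}, u) \in E(H)$ is occupied by $P$ iff $u \in I_2$ appears consecutively between $x,y \in X$ on $P$, i.e.\ the subpath of $P$ around $u$ reads either $(x,u,y)$ or $(y,u,x)$.

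First I would argue that no two occupied edges share an endpoint in $I_2$. If $(\{x,y\},u)$ is occupied by $P$, then $u$ appears exactly once on $P$ (since $P$ is a path) and its two neighbors in $P$ are exactly the pair $\{x,y\}$. Hence $u$ uniquely determines the pair in $\binom{X}{2}$ it is incident to among occupied edges, so each $u \in I_2$ is in at most one occupied edge. Next, I would show that no two occupied edges share an endpoint in $\binom{X}{2}$. Suppose, for contradiction, that both $(\{x,y\},u)$ and $(\{x,y\},u')$ are occupied with $u \neq u'$. Then $x$ has both $u$ and $u'$ among its $P$-neighbors, and similarly for $y$; since every vertex of a path has at most two neighbors in the path, the edges of $P$ incident to $\{x,y,u,u'\}$ are exactly $\{xu, uy, yu', u'x\}$, which form a $4$-cycle. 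This contradicts the fact that $P$ is a simple path. Combining the two observations, the set of occupied edges is a matching in $H$.

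For the running time, I would traverse $P = (v_1, \ldots, v_k)$ once, and for each interior index $i \in \{2,\ldots,k-1\}$ test whether $v_i \in I_2$ and both $v_{i-1}, v_{i+1} \in X$; if so, record the occupied edge $(\{v_{i-1}, v_{i+1}\}, v_i)$. The partition $V(G) = X \uplus I_1 \uplus I_2$ can be precomputed in $\OO(n+m)$ time using the vertex cover $X$ computed in the kernelization, after which each membership test takes constant time. Thus the whole procedure runs in $\OO(n+m)$ time, yielding the stated bound.

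No step here is really hard; the only subtle point is the $4$-cycle argument ruling out two occupied edges sharing a pair in $\binom{X}{2}$, which I would make sure to spell out explicitly because it is the only place where the fact that $P$ is a \emph{simple} path (rather than an arbitrary walk) is used.
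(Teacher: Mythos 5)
Your proof is correct and takes essentially the same approach as the paper: analyze the subpaths of $P$ around $I_2$-vertices to rule out shared endpoints among occupied edges, then scan the $k$ consecutive triples of $P$ in linear time. Your write-up is somewhat more explicit (splitting the matching claim into the two endpoint-sharing cases and spelling out the $4$-cycle contradiction), where the paper's version is terser, but the underlying argument is the same.
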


\fi

\iflong
\begin{proof}
\iflong
We prove the statements in the given order. 
\fi 
For the first part, if $P$ occupies only one edge of $H$, then the statement is trivially true.
Otherwise, let $(\{x, y\}, u)$ and $(\{x',y'\}, u')$ be two distinct edges of $H$ that are occupied by $P$.
Consider the edges $xu, uy, x'u', u'y' \in E(G)$.
Since $P$ is a path and each of these four edges are in $P$, we conclude that they must be distinct edges of $P$.
In fact, $(x, u, y)$ and $(x',u',y')$ form two distinct subpaths of $P$ with~$u, u' \in I$.
Thus, $\{x,y\}$ must be different from the pair $\{x',y'\}$ and $u$ must be a different vertex from~$u'$.
Hence, the edges of $H$ that are occupied by $P$ do not share endpoints, that is, they form a matching. 

For the second part, let $P$ be a $k$-path of $G$.
As the algorithm remembers $H = ({{X}\choose{2}}, I_2)$, the set $X$ can be computed in~$\OO(n+m)$~time.
For every three consecutive vertices $x, u, y \in V(P)$, we check if $x, y \in  X$ and $u \in I$.
If this happens, then we output $(\{x, y\}, u) \in E(H)$ as an edge of~$H$ that is occupied by $P$.
Since the number of vertices in $P$ is $k$, this procedure computes all edges of $H$ that are occupied by $P$ in time $\OO(k +n + m)$.
\end{proof}
\fi

\iflong
\subsection{A Challenge: Avoiding Duplicate Enumeration}
\label{sec:vc-challenges}
\else
\paragraph{Avoiding Duplicate Enumeration.}
\fi
To ensure and correctly design the solution-lifting algorithm, we need to make sure that every $k$-path~$P$ of~$G$ is outputted for exactly one~$k$-path~$P'$ of~$G'$.
The natural idea is that we output~$P$ for a path~$P'$ that is equivalent. 
In case two distinct $k$-paths~$P_1$ and~$P_2$ of~$G'$ are equivalent to a $k$-path $P$ of $G$,
then both $P_1$ and $P_2$ have the same signature where~$X'= V(P_1)\cap (X \cup \RR) = V(P_2)\cap (X \cup \RR)$ and for every $u \in X'$, ${\rmf}_{P_1}(u) = {\rmf}_{P_2}(u)$.
In such a case, if we enumerate all $k$-paths of $G$ that are equivalent to a $k$-path $P'$ of $G$, then the same $k$-path of $G$ can be output multiple times.
To circumvent this issue, we introduce a lexicographic order on equivalent paths which allows to output equivalent paths only for solution paths that are minimal with respect to this order.  

We formalize the order as follows. First, we assume some fixed total order~$\prec$ on the vertices of~$G$ and thus also on the vertices of~$G'$. 
Now, for a path~$P'$ consider the subsequence $(v_1,\ldots,v_q)$ of $P'$ containing only the vertices of~$I\setminus \RR$ and call this the~$I$-sequence of~$P'$. 
In other words, the $I$-sequence is the order of vertices of the extension of~$P$.
We say that a path~$P'$ is \emph{lexicographically smaller} than a path~$P^*$ when the $I$-sequence of~$P'$ is lexicographically smaller than the~$I$-sequence of~$P^*$.

\iflong
%\begin{restatable}[$\star$]{lemma}{obs:vc-lexicographically-smallest}
\begin{lemma}
\label{obs:vc-lexicographically-smallest}
Let $P$ be a $k$-path in $G'$.
There is an $\OO(k \cdot(m' + n'))$-time algorithm~that checks if $P$ is the lexicographically smallest $k$-path among all~$k$-paths in~$G'$ with the same~signature.
\end{lemma}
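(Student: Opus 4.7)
The plan is to traverse $P$ once, extract its signature $\rmf_P$, and identify the gap positions---those positions $i\in[k]$ with $P[i]\in I'\setminus\RR$. Then I process these gap positions left-to-right; at each gap I compute the lex-smallest vertex that could occupy it in some $k$-path of $G'$ with signature $\rmf_P$, and compare this with the vertex $P$ uses. If at some gap the smallest feasible vertex is strictly $\prec$-smaller than $P$'s vertex, I report that $P$ is not lex-smallest; otherwise, after all gaps have been checked, I report that $P$ is lex-smallest.

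More concretely, the signature extraction runs in $\OO(k)$ time. For each gap position $p_j$ taken in increasing order, the signature fixes the adjacent vertices $P[p_j-1]$ and $P[p_j+1]$---both in $X$ for an intermediate gap, and exactly one in $X$ for a boundary gap, since a $1$-pendant neighbour from $\RR$ would itself have to be the actual endpoint of $P$. The candidate set $C_j\subseteq I'\setminus\RR$ consists of the vertices of $B'$ adjacent in $G'$ to the required $X$-neighbour(s), together with, in the boundary case, the (at most one) marked $2$-pendant of the relevant $X$-vertex in $I_1^+$. Building $C_j$ by scanning neighbourhoods of the fixed $X$-vertex(s) costs $\OO(m'+n')$ per gap. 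After filtering out the vertices of $C_j$ already appearing in $P$ at other positions, I compute the $\prec$-smallest $v^\star\in C_j$ that permits a valid completion of the remaining gaps and report failure if $v^\star\prec P[p_j]$. Correctness follows because once the first $j-1$ $I$-sequence values of $P$ match the greedy minimum, $v^\star$ is by definition the next value of the lex-smallest achievable $I$-sequence of a $k$-path of $G'$ with signature $\rmf_P$, so $P$ is lex-smallest iff $v^\star = P[p_j]$ holds at every gap.

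The main obstacle is performing the feasibility check---identifying $v^\star$---within $\OO(m'+n')$ time per gap. Here the $3$-expansion $M$ supplied by \Cref{prop:new-q-exp-lemma} is crucial: for each pair $\{x,y\}\in\widehat{A}$ associated with an intermediate gap, $M$ designates three private candidates in $B'$, and these private sets are pairwise disjoint across distinct pairs because $M$ is a matching. A Hall-type argument then shows that the bipartite incidence graph between the not-yet-filled gaps and the still-available candidates retains a system of distinct representatives as long as every remaining gap has at least one $M$-private candidate available. Tracking the per-gap $M$-private count online while probing the candidates of $C_j$ in $\prec$-order lets me determine $v^\star$ in $\OO(|C_j|)\subseteq\OO(n')$ additional time, so each gap is processed in $\OO(m'+n')$, yielding the claimed $\OO(k\cdot(m'+n'))$ overall running time.
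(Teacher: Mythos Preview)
Your high-level plan---processing the $I$-sequence positions left to right and, at each position $p_j$, determining the $\prec$-smallest feasible value---is a correct test for lex-minimality. The flaw is in your feasibility oracle. You assert that a probe $v$ at gap $j$ admits a valid completion of the remaining gaps whenever each remaining gap still has at least one of its three $M$-private candidates available. This is \emph{sufficient} for a system of distinct representatives, but it is not \emph{necessary}: a remaining gap $\ell$ may lose all three of its $M$-private vertices to the already-committed set $\{P[p_1],\ldots,P[p_{j-1}],v\}$ and yet remain fillable by some non-private vertex of $B'$ that is adjacent in $H'$ to the pair $\{x_\ell,y_\ell\}$. Nothing prevents, say, $P[p_1],P[p_2],P[p_3]$ from being exactly the three $M$-neighbours of some later pair $\{x_\ell,y_\ell\}$, since a vertex of $B'$ can be adjacent to many pairs of~$\widehat A$ besides the one it is $M$-matched to. In that scenario your oracle rejects \emph{every} probe at gap $j$, including $P[p_j]$ itself, which is manifestly feasible (witnessed by $P$). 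Hence your computed $v^\star$ is only an upper bound on the true minimum, and the test can wrongly declare $P$ lex-minimal. The oracle also says nothing about boundary gaps, which have no $M$-private candidates to count.

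The paper avoids any combinatorial characterisation of feasibility. For each position $i$ it builds an auxiliary bipartite graph $\hat H$ with one side $\{w_1,\ldots,w_q\}$ (the gap ``slots'', with two extra slot-vertices for boundary gaps) and the other side $(V(G')\cap I_2)\setminus\RR$; it then fixes gaps $1,\ldots,i-1$ to $P$'s values by deleting all other edges at those slots, deletes from slot $i$ every edge to a vertex $\succeq v_i$, and asks whether the resulting graph has a matching saturating $\{w_1,\ldots,w_q\}$. Because $P$ itself supplies a matching of size $q-1$ after dropping the edge at slot $i$, this question is answered by a single augmenting-path search in $\OO(n'+m')$ time. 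Doing this for each $i\in[q]$ gives the stated $\OO(k\cdot(n'+m'))$ bound, and correctness is immediate from the matching formulation rather than from any property of the $3$-expansion.
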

%\end{restatable}

\else

\begin{restatable}[$\star$]{lemma}{obs:vc-lexicographically-smallest}
%\begin{lemma}
\label{obs:vc-lexicographically-smallest}
Let $P$ be a $k$-path in $G'$.
There is an $\OO(k \cdot(m' + n'))$-time algorithm~that checks if $P$ is the lexicographically smallest $k$-path among all~$k$-paths in~$G'$ with the same~signature.
%\end{lemma}
\end{restatable}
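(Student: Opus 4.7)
The plan is to reduce the lex-minimality check to a sequence of bipartite perfect-matching tests on an auxiliary compatibility graph. First, in $\OO(k)$ time I would compute the signature $\rmf_P$ and extract the gap positions $i_1 < \cdots < i_q$ (those indices in $[k]$ outside the image of $\rmf_P$) together with the $I$-sequence $(u_1, \ldots, u_q)$ of $P$. Because $I$ is independent in $G'$, no two gap positions are consecutive in $[k]$, so the neighbors $v_{i_j-1}$ and $v_{i_j+1}$ of an interior gap---and the single available neighbor when $i_j \in \{1, k\}$---are fixed vertices of $X$ determined by the signature. For each gap $i_j$, the candidate set $C_{i_j} \subseteq (I \setminus \RR) \cap V(G')$ of vertices compatible with these $X$-neighbors and, for $i_j \in \{1, k\}$, also with condition~(ii) of Definition~\ref{defn:vc-equivalent-k-paths}, can be listed in $\OO(m' + n')$ total time via a single pass over the edges of $G'$.

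With this setup the test becomes: is $(u_1, \ldots, u_q)$ the lex-smallest perfect matching---under the fixed order $\prec$ on the right side and the natural position order on the left side---of the bipartite graph $F$ with parts $\{i_1, \ldots, i_q\}$ and $\bigcup_j C_{i_j}$, where $(i_j, v)$ is an edge iff $v \in C_{i_j}$? Because $P$ is a valid $k$-path in $G'$, $M := \{(i_j, u_j) : j \in [q]\}$ is already a perfect matching of $F$. I would then iterate for $j = 1, \ldots, q$: form the residual graph $F_j$ by deleting the already-fixed left vertices $i_1, \ldots, i_{j-1}$ and right vertices $u_1, \ldots, u_{j-1}$, and observe that $M$ restricted to $F_j$ is still a perfect matching on the left side of $F_j$. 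Using one $M$-alternating BFS in $\OO(m' + n')$ time, identify the smallest $v \in C_{i_j}$ under $\prec$ such that the edge $(i_j, v)$ lies in some perfect matching of $F_j$. If $v \neq u_j$ reject; otherwise fix $u_j$ and advance to $j + 1$. The algorithm accepts $P$ as lex-smallest exactly when no iteration rejects, which by construction is precisely the condition that $P$ is the lex-smallest $k$-path with signature $\rmf_P$ among all $k$-paths in $G'$.

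The main obstacle is keeping the per-iteration cost down to $\OO(m' + n')$ rather than recomputing a perfect matching from scratch. This rests on the classical fact that, given a perfect matching $M$ in a bipartite graph, the edges belonging to \emph{some} perfect matching are exactly those lying on an $M$-alternating cycle or on an $M$-alternating path ending at an unsaturated vertex, and the set of such edges can be extracted by a single BFS---essentially by reading off the matching-admissible part of the Dulmage--Mendelsohn decomposition. Summing over the $q \le k$ iterations then delivers the claimed $\OO(k \cdot (m' + n'))$ running time.
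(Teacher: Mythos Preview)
Your reduction to iterated bipartite-matching tests is essentially the paper's approach: both build a bipartite graph between the $q$ gap positions and the candidate vertices of $I'\setminus\RR$, fix the earlier positions one by one, and spend $\OO(n'+m')$ per position on a matching test. The only implementation difference is the per-position primitive: the paper deletes the current matching edge together with all edges to $\prec$-larger candidates and searches for one augmenting path, whereas you compute, via an alternating-reachability / Dulmage--Mendelsohn pass, the set of edges incident to $i_j$ that lie in some perfect matching and take the minimum. Both are valid $\OO(n'+m')$ routines, so the overall bound matches.

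One slip: the lemma asks for lex-smallest among paths with the same \emph{signature}, not the same equivalence class, so filtering the boundary candidate sets $C_1$ and $C_k$ by condition~(ii) of \Cref{defn:vc-equivalent-k-paths} is too restrictive. If $P$ starts at the (unique in $G'$) 2-pendant neighbour $v_1$ of $v_2$, your filtered $C_1$ is just $\{v_1\}$ and you would accept $P$ even when some $u\in B'$ with $u\prec v_1$ and $v_2\in N_{G'}(u)$ extends to a lex-smaller $k$-path of the same signature. Drop the condition-(ii) filter at the boundary positions and your argument goes through.
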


\fi

\iflong
\begin{proof}
  Consider the $I$-sequence~$(v_1,\ldots,v_q)$ of~$P$. 
  For each vertex~$v_i$ in the~$I$-sequence, starting with~$v_1$, we check whether there is another $k$-path~$P^*$ with the same signature and an~$I$-sequence~$(u_1,\ldots,u_q)$ such that~$v_j=u_j$ for all~$j<i$ and~$u_i\prec v_i$. 
  The path~$P$ is lexicographically smallest with its signature if and only if the test fails for all~$i\in [q]$. 
  To perform the test  efficiently for all given~$i$, we first construct an auxiliary bipartite graph~$\hat{H}$ and a matching~$\hat{M}$ in~$\hat{H}$  and then use both to perform for the test for each~$i$.

  Recall that~$H'$ is the auxiliary bipartite graph with bipartition~$({{X}\choose{2}}, V(G')\cap I_2)$ where an edge corresponds to a vertex of~$I_2$ that has two neighbors in~$X$. The graph~$\hat{H}$ is essentially the induced subgraph of~$H'$ consisting of the vertices of the~$I$-sequence, of their neighbors in~$P$ and all further vertices of~$I_2$ that are not in~$\RR$.   
More precisely,~$\hat{H}$ consists of
  \begin{itemize}
  \item one part containing the vertices~$\{x,y\}\in {{X}\choose{2}}$ such that~$x$ and~$y$ are in~$P$ the predecessor and successor, respectively, of some~$v_i$,
  \item the vertex set~$(V(G')\cap I_2)\setminus \RR$,
  \item and the edge set~$\{\{x,y\},u\}\mid x,y\in X \land \{x,y\}\in N(u)\}$. 
  \end{itemize}
  We add some further vertices that deal with the case that the~$I$-sequence contains the first or last vertex of~$P$.
 {%\color{blue} 
 If~$v_1$ is the first vertex of~$P$, then let~$w_1$ denote its neighbor in~$P$.
 } 
  Add~$w_1$ to~$\hat{H}$, make~$w_1$ in~$\hat{H}$ adjacent to all vertices in~$N_G(w_1)\cap V(\hat{H})\cap I_2$. If~$v_q$ is the last vertex of~$P$, then add its neighbor~$w_q$ in~$P$ in a similar fashion. If they are added, then~$w_1$ and~$w_q$ are in the same part of the bipartition as the vertices of~${{X}\choose{2}}$. This completes the construction of~$\hat{H}$. Now, the matching~$\hat{M}$ consists of the edges occupied by~$P$ in~$\hat{H}$ plus the edges~$\{w_1,v_1\}$ and~$\{w_q,v_q\}$ if~$w_1$ and~$w_q$ have been added, respectively. For each~$v_i$, we let~$w_i$ denote its neighbor in the edge contained in~$\hat{M}$. Observe that in this notation, one part of~$\hat{H}$ consists exactly of~$\{w_1,\ldots,w_q\}$. 
  
  For each index~$i$, we now perform the test using~$\hat{H}$ and~$\hat{M}$ follows. For every vertex~$v_j$ with~$j<i$, remove all edges from~$\hat{H}$ that are incident with~$w_j$ except~$\{w_j,v_j\}$. For the vertex~$v_i$, remove from~$\hat{H}$ the edge~$\{w_i,v_i\}$ and all edges~$\{w_i,u\}$ such that~$v_i\prec u$. Now, determine whether the resulting graph has a matching that saturates~$\{w_1,\ldots,w_q\}$. If yes, then this matching corresponds to another $k$-path~$P^*$ in~$G'$ with the same signature, since the matching defines an extension of the signature. By construction, the~$I$-sequence~$(u_1,\ldots,u_q)$ of this path fulfills~$v_j=u_j$ for all~$j<i$ and~$u_i\prec v_i$. Conversely, if such a path~$P^*$ exists, then the edges of~$\hat{M}$ occupied by this path form a matching that saturates~$\{w_1,\ldots,w_q\}$. Hence, the test is correct for~$i$ which shows the correctness of the overall procedure.

  To see the running time, observe that~$\hat{H}$ and~$\hat{M}$ can be computed in~$\OO(n'+m')$ time and have size~$\OO(n'+m')$ since~$\hat{H}$  only contains~$q$ vertices of~$X\choose 2$. Afterwards, for each~$i\in [q]$, we first need to modify~$\hat{H}$ and~$\hat{M}$ as described above, which can be clearly done in~$\OO(n'+m')$ time. Now the remaining time is spent on computing a maximum matching and checking whether it saturates~$\{w_1,\ldots,w_q\}$. Since we are given a matching of size~$|q|-1$ (the matching $\hat{M}\setminus \{w_i,v_i\}$) this needs only the computation of one augmenting path using a standard algorithm for computing bipartite matchings. Since  augmenting paths can be computed in~$\OO(n+m)$ time, we obtain a total running time of~$\Oh(n+m)$ for the test for each~$i$. Since~$q<k$, the total running time of the algorithm amounts to~$\OO(k\cdot (n+m))$.
\end{proof}
\fi

\iflong
\subsection{The Solution-Lifting Algorithm}
\label{sec:vc-challenge-resolve}
\else
\paragraph{Solution-Lifting Algorithm.}
\fi
Given a $k$-path $P$ of $G'$, let $F(H, P)$ be the set of edges from $H$ that are occupied by~$P$.
It follows from \Cref{lemma:edge-occupation-vc} that $F(H, P)$ forms a matching in $H$.
Without loss of generality, assume that~$P$ is the lexicographically smallest $k$-path for the signature~${\rmf}_P: V(P) \cap (X \cup \RR) \rightarrow [k]$.
Let~$\cA$ denote the endpoints of $F(H, P)$ that are in ${{X}\choose{2}}$.
If we are to enumerate the collection of all $k$-paths in $G$ that intersect $V(G) \setminus V(G')$ and are equivalent to $P$, then every enumerated $k$-path $P^*$ must occupy edges that also form a matching in $H$ and satisfy the conditions (i) and (ii) of \Cref{defn:vc-equivalent-k-paths}.
The following lemma states how we can enumerate such $k$-paths of~$G$.

\begin{lemma}
\label{lemma:vc-enum-k-path-poly-delay}
Let $P$ be a $k$-path of $G'$.
% for the signature ${\rmf}: V(P) \cap (X \cup \RR) \rightarrow [k]$.
Then, there exists an algorithm with delay $\OO(n\cdot m\cdot k^2)$ that enumerates all the $k$-paths $P^*$ of $G$ exactly once such that
\iflong
\begin{enumerate}[(i)]
	\item $P^*$ contains at least one vertex from $G$ that is not in~$G'$, and%\todo{I think the unique non-kernel vertex can also be a pendant vertex}
	\item $P^*$ is equivalent to $P$.
\end{enumerate} 
\else
\textbf{(i)} $P^*$ contains at least one vertex from $G$ that is not in~$G'$, and \textbf{(ii)} $P^*$ is equivalent to $P$.
\fi
\end{lemma}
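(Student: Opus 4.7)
The plan is to reduce the enumeration of $k$-paths of $G$ equivalent to $P$ to the enumeration of saturating matchings in an auxiliary bipartite graph and then apply \Cref{prop:maximum-matching-enumeration}.

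First, I would identify the set $A = \{a_1 < \cdots < a_q\}$ of positions of $P$ whose vertex lies in $I \setminus \RR$ (the ``free positions''). For each $a_j \in A$, I would define the candidate set $S_j \subseteq I \setminus \RR$ of vertices of $G$ that may occupy position $a_j$ in an equivalent $k$-path: for an interior $a_j$, $S_j$ contains all vertices of $\widehat{B}$ that are adjacent in $G$ to the two $X$-vertices at positions $a_j-1$ and $a_j+1$; for an endpoint $a_j$, $S_j$ contains all $2$-pendant vertices of $I_1$ adjacent to the $X$-vertex at position $a_j \pm 1$ if $P$'s endpoint at $a_j$ is a $2$-pendant, and all vertices of $\widehat{B}$ with that $X$-vertex as a neighbor otherwise. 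The fact that positions adjacent to a free position must lie in $X$ follows because $I$ is independent. I then build the bipartite graph $\widetilde{H}$ with parts $A$ and $\bigcup_j S_j$ and edges $\{a_j v : v \in S_j\}$. By \Cref{defn:vc-equivalent-k-paths} and \Cref{lemma:edge-occupation-vc}, the $k$-paths of $G$ equivalent to $P$ stand in bijection with the matchings of $\widetilde{H}$ that saturate $A$: such a matching prescribes the vertex at each free position, and the constraints built into $S_j$ enforce both signature equality and the $2$-pendant condition.

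To restrict attention to $k$-paths using at least one vertex of $V(G) \setminus V(G')$, I would partition the target matchings by the smallest index $i$ such that $a_i$ is matched to a vertex outside $V(G')$. For each $i \in [q]$, I would construct $\widetilde{H}_i$ from $\widetilde{H}$ by restricting the neighborhood of $a_j$ to $S_j \cap V(G')$ for $j < i$, to $S_i \setminus V(G')$ for $j = i$, and leaving it unchanged for $j > i$. By construction the sub-problem for index $i$ captures exactly the target matchings whose smallest outside index is $i$, so these sub-problems form a partition and no $k$-path is output twice. Within each sub-problem, I would enumerate the maximum matchings of $\widetilde{H}_i$ via \Cref{prop:maximum-matching-enumeration}, skipping the sub-problem if a preliminary polynomial-time check reveals that the maximum matching size is less than $|A|$ (in which case no saturating matching exists). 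Whenever the maximum matching size equals $|A|$, the maximum matchings are exactly the matchings saturating $A$, which is precisely what is needed.

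The main obstacle is to control the delay. Applying \Cref{prop:maximum-matching-enumeration} to $\widetilde{H}_i$, which has $\OO(n)$ vertices and $\OO(n \cdot k)$ edges, yields $\OO(n^2 \cdot k)$ delay within a single sub-problem. Between two consecutive outputs the algorithm may need to initialize and discard several empty sub-problems; bounding this by the total number $q \le k$ of sub-problems gives a cumulative transition cost of $\OO(n^2 \cdot k^2)$. Since $G$ has no isolated vertices, $m \ge n/2$, so $\OO(n^2 \cdot k^2) = \OO(n \cdot m \cdot k^2)$. Together with the $\OO(k)$ per-matching reconstruction of $P^*$, this yields the claimed $\OO(n \cdot m \cdot k^2)$ delay.
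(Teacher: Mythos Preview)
Your proposal is correct and follows essentially the same approach as the paper: reduce to enumerating saturating matchings in an auxiliary bipartite graph (with free positions on one side and candidate $I$-vertices on the other), partition by the first free position filled by a non-kernel vertex, and invoke \Cref{prop:maximum-matching-enumeration}. The paper additionally fixes the specific non-kernel vertex $u$ at that first position (yielding $k\cdot n$ sub-problems rather than your $k$), but this is a minor variation and your simpler partition works just as well within the same delay bound.
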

\begin{proof}
Let $P$ be a $k$-path of $G'$ with signature ${\rmf}_P: V(P) \cap (X \cup \RR) \rightarrow [k]$.
We invoke \Cref{lemma:edge-occupation-vc} to compute the edges $F(H, P)$ of~$H$ that are occupied by~$P$ in~$\OO(k+n+m)$~time.
For each position~$i$ in~$P$ that is not used by a vertex in~$V(P)\cap (X \cup \RR)$ and each vertex~$u\in V(G)\setminus V(G')$, we enumerate all paths~$P^*$ in~$G$ that are equivalent to~$P$ and where~$\pos_{P^*}(u)$ is the first vertex of~$P^*$ that is not contained in~$V(G')$. 
For this, we build an auxiliary graph~$H^*$ with bipartition~$A^*$,~$B^*$ based on~$H$ and then enumerate all maximal matchings in~$H^*$.  
The graph~$H^*$ is constructed as follows:
\begin{itemize}
\item First, in~$H^*$ we keep only those vertices~$\{a,b\}\in \binom{X}{2}$ where~$a$ and~$b$ are connected by a vertex~$w$ outside of~$X \cup \RR$ in~$P$. 
Note that~$w\in B'$ for each such vertex.
These vertices are added to~$A^*$.

\item If~$1<i<k$ and~$\pos_P(a)=i-1$ and~$\pos_P(b)=i+1$, then keep~$u$ as the only neighbor of~$\{a,b\}$ in~$H^*$; otherwise, discard the choice of~$u$ and~$i$. 
Additionally, if~$i=1$ check if~$u$ is adjacent to the second vertex~$x\in X$ of~$P$. 
If this is not the case, then  discard the choice of~$u$ and~$i$. 
Symmetrically, if~$i=k$, then check if~$u$ is adjacent to the penultimate vertex~$y\in X$ of~$P$. If this is not the case, then discard the choice of~$u$ and~$i$.

\item For each vertex~$\{a,b\}$ in~$H^*$ with~$\pos_{P}(a)<\pos_P(b)<i$ remove the edges between~$\{a,b\}$ and any vertex~$v$ such that~$v\notin V(G')$ or~$v\in\RR$. 
In other words, only keep the edges~$(\{a,b\},v)$ where~$v\in B'$.

\item For each vertex~$\{a,b\}$ in~$H^*$ with~$i<\pos_{P}(a)<\pos_P(b)$ remove the edges between~$\{a,b\}$ and any vertex~$v$ such that~$v\in\RR\cup \{u\}$, that is, we only keep those edges where~$v\in\widehat{B}\setminus\{u\}$.

\item Finally, if~$i\neq 1$ and the first vertex of~$P$ is not from~$X\cup \RR$, then let~$x\in X$ denote the second vertex of~$P$ and add a new vertex~$\{x,x^*\}$ to~$H$ and make~$\{x,x^*\}$ in~$H^*$ adjacent to any vertex from~$I \setminus (\RR \cup \{u\})$ that is in~$G$ a neighbor of~$x$.
 Similarly, if $i\neq k$ and the last vertex of $P$ is not from $X\cup \RR$, then let $y\in X$ denote the penultimate vertex of $P$. 
 Add a new vertex $\{y,y^*\}$ to~$H^*$ and make $\{y,y^*\}$ adjacent to every vertex of $I \setminus (\RR \cup \{u\})$ that is a neighbor of~$y$. If added,~$\{x,x^*\}$ and~$\{y,y^*\}$ belong to~$A^*$.
 
\end{itemize}
\iflong 
Now, for each such graph~$H^*$, we enumerate all matchings that saturate all vertices in~$A^*$. Using the algorithm of \Cref{prop:maximum-matching-enumeration},
 for each enumerated matching~$M$, we output the following path~$P^*$: The path~$P^*$ is defined via~$\pos_{P^*}$. 
 \else 
 Now, for each such graph~$H^*$, we enumerate all matchings that saturate all vertices in~$A^*$. This can be achieved by enumerating all maximum matchings which can be done with $\OO(n\cdot m)$-delay~\cite[paragraph above Theorem~13]{KobayashiKW22}, \cite[Theorem~18]{KobayashiKW21} on graphs with~$n$ vertices and~$m$ edges. 
 For each enumerated matching~$M$, we output the following path~$P^*$, defined via~$\pos_{P^*}$. 
 \fi

 For the vertex~$u$, we set~$\pos_{P^*}(u):=i$. The path~$P^*$ has the same signature as~$P$, that is, for each vertex~$v\in X\cup \RR$, we set~$\pos_{P^*}(v):=\pos_{P}(v)$. For each edge~$\{\{a,b\},v\}\in M$ with~$a,b\in X$, assuming~$\pos_P(a)<\pos_P(B)$, we set~$\pos_P^*(v)=\pos_P(a)+1$. For the edge~$\{\{x,x^*\},v\}\in M$, if it exists, we set~$\pos_{P^*}(v)=1$. 
 For the edge~$\{\{y,y^*\},v\}\in M$, if it exists, we set~$\pos_{P^*}(v)=k$. 
 Now, $\pos_{P^*}$ indeed gives a~$k$-path: since~$A^*$ is saturated by~$M$ and since the signatures of~$P^*$ and~$P$ agree, the codomain of~$\pos_{P^*}$ is~$[k]$. 
 Moreover, by the construction of~$H^*$ and the fact that~$M$ is a matching, every vertex that is not in~$X\cup \RR$ is adjacent to its predecessor and successor (except for the first and last vertex if they are not in~$X\cup\RR$).

  Thus, every outputted path is equivalent to~$P$ and by the inclusion of~$u$ it contains at least one vertex that is not from~$G'$.  
  \iflong
  It remains to show that
  no path is output twice and that every path that is equivalent to~$P$ and contains some vertex~$u$ that is not in~$G'$ is output. To this end consider a path~$P^*$ output by the algorithm. 
  Let~$u$ denote the first vertex of~$P^*$ that is not in~$G'$ and assume~$\pos_{P^*}(u)=i$. 
  Now~$P^*$ is not output for some other choice of~$u'\notin V(G')$ and~$i'$: If~$i'<i$, then~$u$ is not the first vertex that is not from~$V(G')$ in~$G$ because~$u'$ is fixed to be at position~$i'$. Otherwise, if~$i'>i$, then the construction of~$H^*$ ensures that the vertex~$u$ is not adjacent to any vertex~$\{a,b\}$ with~$\pos_P(b)<i'$. Hence,~$u$ is not at position~$i$ in the graphs enumerated for~$u'$ and~$i'$. Finally, two different paths enumerated for the particular choice of~$u$ and~$i$ differ in at least one matching edge and thus there is at least one position where the two paths differ.  

To see that~$P^*$ is indeed output, note that the vertices of~$V(P^*)\setminus (X\cup \RR)$ have different predecessors and successors in~$P^*$. Thus, the edge set~$M$ obtained by adding for each~such vertex~$v\neq u$ an edge~$\{\{a,b\},v\}$ if~$\pos_{P^*}(a)+1=\pos_{P^*}(v) =\pos_{P^*}(b)-1$, the edge~$\{\{x,x^*\},v\}$
if $\pos_{P^*}(v)=1$ and $\pos_{P^*}(x)=2$, and the edge~$\{\{y,y^*\},v\}$
if $\pos_{P^*}(y)=k-1$ and $\pos_{P^*}(v)=k$ is a subset of~$E(H^*)$ and, by construction a matching. Thus,~$M$ is output \iflong at some point \fi during the matching enumeration for~$u$ and~$i$. The path constructed from~$M$ is precisely~$P^*$.

\else

The proof that no path is output twice is deferred to the full version.

\fi

It remains to show the bound on the delay.
\iflong 
Observe that there are $k$~choices for~$i$ and up to $n$~choices for~$u$.
For each fixed pair~${i,u}$, we now use \Cref{prop:maximum-matching-enumeration} yielding a delay of $\OO(n\cdot m)$.
Kobayashi et al.~\cite[paragraph above Theorem~13]{KobayashiKW22}, \cite[Theorem~18]{KobayashiKW21} argue that at most $n$~times an augmenting path in $\OO(m)$~time needs to be found. 
 \else 
 Observe that there are $k$~choices for~$i$ and up to $n$~choices for~$u$.
For each fixed pair~${i,u}$, we now use the algorithm of Kobayashi et al.~\cite[paragraph above Theorem~13]{KobayashiKW22}, \cite[Theorem~18]{KobayashiKW21} yielding a delay of $\OO(n\cdot m)$.
They argue that at most $n$~times an augmenting path in $\OO(m)$~time needs to be found. 
 \fi 
In other words, the delay can also be bounded by~$\OO(\nu(G)\cdot m)$, where~$\nu(G)$ is the size of a maximum matching.
In our case~$\nu(G)=k$ since~$A^*$ contains at most $k$~vertices.
Thus, the overall delay is $\OO(n\cdot m\cdot k^2)$.
\end{proof}

\iflong
We are now ready to prove our main theorem statement.
\fi

{\thmOne*}

\begin{proof}
Our enumeration kernelization has two parts.
The first part is the kernelization algorithm and the second part is the solution-lifting algorithm.
Without loss of generality, let $(G, X, k)$ be the input instance such that $X$ is a vertex cover of $G$ having at most $2\cdot{\vc}(G)$ vertices.
The kernelization algorithm described by the {\em marking scheme} outputs the instance $(G', X, k)$.
%We can encode the vertices using $\OO(\log_2 {\vc})$ bits and
By \Cref{obs:k-path-preserve}, $G'$ has $\OO(|X|^2)=\OO({\vc}^2)$ vertices.

Our solution-lifting algorithm works as follows.
By construction, $(\widehat{B} \setminus B') \cup (I_1 \setminus I_1^+)$ are the only vertices of $G$ that are not in $G'$.
Let $P'$ be a $k$-path of $(G', X, k)$ such that $X' = (X \cup \RR) \cap V(P')$.
Moreover, let ${\rmf}_{P'}: X' \rightarrow [k]$ be the signature of $P'$.
Our first step is to output $P'$ itself. 
Afterwards, we invoke \Cref{obs:vc-lexicographically-smallest} and check whether $P'$ is the lexicographically smallest path of $G'$ having the signature same as ${\rmf}_{P'}: X' \rightarrow [k]$.
If $P'$ is the lexicographically smallest, then we invoke \Cref{lemma:vc-enum-k-path-poly-delay} to enumerate all the $k$-paths that are equivalent to $P'$ and contain at least one vertex that is not from $V(G')$.
Otherwise, if~$P'$ is not the lexicographically smallest for the signature ${\rmf}_{P'}: X' \rightarrow [k]$, 
then the enumeration algorithm outputs only the path $P'$.

Since the enumeration algorithm invoked by \Cref{lemma:vc-enum-k-path-poly-delay} runs with a delay of~$\OO(n\cdot m\cdot k^2)$ and it never enumerates any other path $P \neq P'$ in $G$ that is contained in $G'$, this completes the proof that we have a polynomial-delay enumeration kernelization with $\OO({\vc}^2)$ vertices.
\end{proof}

\iflong
\subsection{Extension to Other Path and Cycle Variants}
\label{subsec:vc-other-path-cycles}

%\paragraph{Extension to Other Path and Cycle Variants.}
We extend our positive results to {\enumkcycle} and variants where all paths/cycles of length {\em at least} $k$ need to be outputted.
In each of these variants, our adaptation exploits a fundamental property that a path/cycle of length at least $k$ has at most $2{\vc} + 1$ vertices.
Because a path of $G$ can have have all the vertices from a vertex cover $X$ and at most $|X| + 1$ vertices from $G - X$.
Hence, in the kernelization algorithm will preserve the length of a longest path.
Therefore, we can prove the following result.

\begin{corollary}
\label{cor:vc-at-least-k-path}
{\enumkpathAll} parameterized by ${\vc}$ admits a $\OO(n\cdot m\cdot k^2)$-delay enumeration kernel with $\OO({\vc}^2)$ vertices.
\end{corollary}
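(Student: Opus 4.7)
The plan is to reuse verbatim the machinery from \Cref{thm:k-path-vc-result}, exploiting the structural fact that every simple path in $G$ has at most $2\cdot{\vc}(G)+1$ vertices. Indeed, $I=V(G)\setminus X$ is an independent set, so in any simple path the vertices from $I$ are pairwise separated by vertices of $X$; hence a path uses at most $|X|$ vertices from $X$ and at most $|X|+1$ vertices from $I$. In particular, if $k>2\cdot{\vc}(G)+1$ then the output is empty, and otherwise the paths to enumerate are exactly the paths of length $\ell$ for $\ell\in\{k,k+1,\ldots,L\}$ where $L\leq 2\cdot{\vc}(G)+1$ is the length of a longest path in $G$.

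First, I would run the same marking scheme as in \Cref{sec:k-path-vc}, producing the induced subgraph $G'$ with $\OO({\vc}^2)$ vertices. The choice $q=3$ of the new $q$-expansion lemma is independent of $k$, so no change is needed. I would then generalize \Cref{defn:vc-signature-P}, \Cref{defn:vc-equivalent-k-paths}, and \Cref{defn:vc-extn-signature-P} in the obvious way: for a path $P$ of length $\ell$, the signature is ${\rmf}_P\colon V(P)\cap(X\cup\RR)\to[\ell]$, two paths are equivalent if they have the same length, the same signature, and agree on whether the endpoints are $2$-pendant vertices; extensions are defined analogously with codomain $[\ell]$.

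Next, I would verify the analogue of \Cref{lemma:vc-same-signature-existence}: for every path $P$ of $G$ of length $\ell\geq k$, there exists a length-$\ell$ path $P'$ in $G'$ equivalent to $P$. The original proof carries over unchanged: $P$ contains at most $\ell-1\leq 2\cdot{\vc}(G)$ vertices from $I$, hence at most that many intermediate vertices of $V(P)\cap\widehat{B}$ need to be replaced, and the $3$-expansion $M$ supplies, for each relevant pair $\{x,y\}\in\widehat{A}$, three candidates in $B'$, from which one can always reserve up to two for the (at most two) endpoint replacements and still have a fresh candidate for each intermediate replacement. \Cref{lemma:edge-occupation-vc} and \Cref{obs:vc-lexicographically-smallest} are stated for arbitrary $k$ and apply directly. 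An almost verbatim adaptation of \Cref{lemma:vc-enum-k-path-poly-delay} handles the enumeration of equivalent length-$\ell$ paths of $G$ containing at least one non-kernel vertex, with the same $\OO(n\cdot m\cdot k^2)$ delay (now $k$ being replaced by $\ell\leq 2\cdot{\vc}(G)+1$, which is at most a constant factor larger than $k$ whenever the output is nonempty, and in any case absorbed by the $\OO(\cdot)$).

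The solution-lifting algorithm then works as in the proof of \Cref{thm:k-path-vc-result}: enumerate all paths of $G'$ of length at least $k$ (e.g.\ in a DFS-style traversal with polynomial delay), output each such $P'$, and, whenever $P'$ is the lexicographically smallest path of $G'$ with its signature, additionally invoke the adapted \Cref{lemma:vc-enum-k-path-poly-delay} to output all equivalent paths of $G$ that use at least one vertex outside $V(G')$. The partition property follows as before from the equivalence relation together with the lexicographic tie-breaking. The only small obstacle is ensuring that the base enumeration over paths of $G'$ of length at least $k$ itself has polynomial delay; this is straightforward since $G'$ has $\OO({\vc}^2)$ vertices and paths there have length at most $2\cdot{\vc}(G)+1$, so a standard branching enumeration on $G'$ has polynomial delay, which is absorbed into the $\OO(n\cdot m\cdot k^2)$ bound.
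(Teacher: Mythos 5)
Your proposal is correct and follows essentially the same (very brief) route as the paper: observe that every path has at most $2\,\vc(G)+1$ vertices, keep the marking scheme and $3$-expansion unchanged, generalize the signature/equivalence/extension definitions to $\ell$-paths for each $\ell\ge k$, and reuse the surrounding lemmas and the lexicographic tie-breaking verbatim.

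Two small inaccuracies worth flagging. First, the solution-lifting algorithm in \Cref{defn:poly-delay-enum-kernel} is handed a \emph{single} solution $s\in\sol(y)$ at a time and must output $S_s$; it does not itself enumerate $\sol(y)$. Your clause about "enumerate all paths of $G'$ of length at least $k$ in a DFS-style traversal" describes the external use of the kernel (as in \Cref{prop-enumalgo}), not the solution-lifting procedure being proved here, so that paragraph overreaches slightly even though the underlying mechanism you describe (output $P'$; if $P'$ is lexicographically smallest for its signature, also output the equivalent non-kernel paths) is exactly right. Second, your remark that $\ell$ is "at most a constant factor larger than $k$ whenever the output is nonempty" is not correct in general: nonemptiness only forces $k\le \ell\le 2\,\vc(G)+1$, and $\ell/k$ can be arbitrarily large (e.g.\ $k=2$). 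What actually saves the delay bound is just that $\ell\le 2\,\vc(G)+1\le n$, so the per-$\ell$ delay $\OO(n\cdot m\cdot\ell^2)$ remains polynomial and can be rewritten in terms of the parameter; neither issue affects the correctness of the corollary.
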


\begin{proof}
Since the above argumentation about the path length holds true, the marking scheme and kernelization algorithm works similarly as before. 
For every $\ell \geq k$, given an $\ell$-path $P'$ of~$G'$ (the kernel), the solution-lifting algorithm outputs a collection of $\ell$-paths of $G$ that are equivalent to $P'$.
Rest of the arguments remain the same.
\end{proof}

Now, we explain how the proof of \Cref{thm:k-path-vc-result} can be adopted to a polynomial-delay enumeration kernel for {\enumkcycle} parameterized by ${\vc}$.

\begin{corollary}
\label{corollary:vc-k-cycle-result}
{\enumkcycle} parameterized by ${\vc}$ admits a $\OO(n\cdot m\cdot k^2)$-delay enumeration kernel with $\OO({\vc}^2)$ vertices.
\end{corollary}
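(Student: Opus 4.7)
The plan is to adapt the marking scheme, signatures, and solution-lifting machinery from the proof of \Cref{thm:k-path-vc-result}, exploiting the fact that cycles have no endpoints. First, I would compute a vertex cover $X$ of $G$ of size at most $2\cdot{\vc}(G)$ and observe that every pendant vertex of $G$ has degree one and thus cannot appear in any $k$-cycle. Consequently, the entire $I_1^+$ part of the marking scheme becomes unnecessary: we mark only $X$ together with $B' \cup (I_2 \setminus \widehat{B})$ obtained by applying the new $3$-expansion lemma (\Cref{prop:new-q-exp-lemma}) to the auxiliary bipartite graph $H=(\binom{X}{2}, I_2)$ exactly as before. The resulting induced subgraph $G'$ still has $\OO({\vc}^2)$ vertices, and since $G'$ is an induced subgraph of $G$ every $k$-cycle of $G'$ is also a $k$-cycle of $G$.

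Next, I would redefine signatures and equivalence for cycles. To obtain a canonical representation, I fix the total order $\prec$ on $V(G)$, and represent each $k$-cycle $C$ by the unique traversal $(v_1,\ldots,v_k)$ in which $v_1$ is the $\prec$-smallest vertex of $V(C)\cap (X\cup \RR)$ (which is nonempty since $X$ is a vertex cover and at most two consecutive vertices of any cycle can avoid $X$) and $v_2 \prec v_k$. The signature is then the restriction of the position function of this canonical traversal to $V(C)\cap (X\cup \RR)$, and two $k$-cycles are equivalent iff they share this signature. Condition (ii) on $2$-pendant endpoints from \Cref{defn:vc-equivalent-k-paths} is vacuous for cycles, so equivalence simplifies accordingly.

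The analogue of \Cref{lemma:vc-same-signature-existence} now follows by essentially the same proof, with a simplification: every vertex of $V(C)\cap(\widehat B\setminus B')$ is an \emph{intermediate} vertex of $C$ (cycles have no first or last vertex), so the Step~1 endpoint-replacement case is vacuous and we only perform Step~2, using the $3$-expansion $M$ to substitute each such vertex by an unused neighbor in $B'$. Since no two endpoints need to be reserved, the $3$-expansion provides ample slack. The edge-occupation lemma (\Cref{lemma:edge-occupation-vc}) transfers verbatim: the edges of $H$ occupied by a $k$-cycle still form a matching, computable in $\OO(n+m)$ time.

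For the solution-lifting algorithm, given a canonical $k$-cycle $C$ of $G'$ I would first output $C$ itself, then test with an adaptation of \Cref{obs:vc-lexicographically-smallest} whether $C$ is lexicographically smallest (in its $I$-sequence) among canonical cycles with the same signature; if so, I invoke the analogue of \Cref{lemma:vc-enum-k-path-poly-delay}, iterating over each position $i$ and each vertex $u\in V(G)\setminus V(G')$ to enumerate via maximum-matching enumeration all equivalent $k$-cycles of $G$ in which $u$ at position $i$ is the first off-kernel vertex. The delay bound $\OO(n\cdot m\cdot k^2)$ is preserved. The main obstacle I expect is the bookkeeping to ensure uniqueness: I must rule out cases in which the same cycle is produced under two different canonical starting vertices or orientations, which is exactly what fixing the rule $v_1 = \min_\prec(V(C)\cap(X\cup\RR))$ and $v_2\prec v_k$ resolves, since both hold for a cycle iff the chosen representation is the canonical one.
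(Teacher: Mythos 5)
Your overall plan coincides with the paper's (terse) proof of this corollary: drop the pendant-vertex part of the marking scheme, keep the expansion-lemma construction on $H=\bigl(\binom{X}{2},I_2\bigr)$, and reuse the signature/equivalence/lexicographic-minimality/matching-enumeration machinery from \Cref{thm:k-path-vc-result} with cycles in place of paths. You also correctly spot the one genuinely new ingredient, namely that cycles need a canonical traversal before a signature can be read off, and that for cycles the endpoint replacements of Step~1 in \Cref{lemma:vc-same-signature-existence} become vacuous.

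However, your specific canonicalization rule introduces a gap. You fix the direction of the traversal by requiring $v_2\prec v_k$, where $v_2,v_k$ are the two cycle-neighbours of $v_1$. These neighbours can lie in $I\setminus\RR$ (indeed in $\widehat B\setminus B'$) and are therefore exactly the vertices that get substituted when you construct the kernel cycle $C'$ in the analogue of \Cref{lemma:vc-same-signature-existence}. Concretely, take a $5$-cycle $C=(v_1,a,v_3,v_4,b)$ with $v_1,v_3,v_4\in X$, $a,b\in\widehat B\setminus B'$ and $a\prec b$; its canonical signature maps $v_3\mapsto3$, $v_4\mapsto4$. If the expansion-based replacements $a',b'\in B'$ satisfy $b'\prec a'$, the canonical traversal of $C'$ reverses direction and the signature becomes $v_4\mapsto3$, $v_3\mapsto4$, so $C'$ is \emph{not} equivalent to $C$ under your definition; the key lemma fails. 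The fix is to make the orientation depend only on $V(C)\cap(X\cup\RR)$, which is invariant under replacement: after fixing $v_1$, compare in the two directions the $\prec$-order of the first $X\cup\RR$ vertex encountered (with further tie-breaking on subsequent $X\cup\RR$ vertices and, if still tied, on the gap lengths — note $|V(C)\cap(X\cup\RR)|\ge 2$ since $I$ is independent, in fact consecutive non-$X$ vertices never occur). With that repaired canonicalization, the rest of your argument — vacuous Step~1, the $\OO(n+m)$ edge-occupation computation, the lexicographic-minimality test, and the matching-enumeration delay bound — goes through as you describe and matches the paper's intent.
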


\begin{proof}
Our kernelization algorithm works similar as the proof of \Cref{thm:k-path-vc-result}.
But, the objective is to enumerate all cycles of length exactly $k$, so, we do not have to store any pendant vertex in the kernel.
The rest of the ideas work similarly as before and the solution-lifting algorithm also works similarly as before.
\end{proof}

Finally, combining the adaptations for {\enumkpathAll} and {\enumkcycle} can be utilized to prove the following result.

\begin{corollary}
\label{corollary:k-cycle-at-least-enum}
{\enumkcycleAll} admits a $\OO(n\cdot m\cdot k^2)$-delay enumeration kernelization with $\OO({\vc}^2)$ vertices.
\end{corollary}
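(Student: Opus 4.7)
The plan is to combine the two adaptations that already appear in the paper: the length-at-least-$k$ modification from Corollary \ref{cor:vc-at-least-k-path} and the cycle-specific modification from Corollary \ref{corollary:vc-k-cycle-result}. The structural observation that powers both is the same: any cycle of length at least $k$ in $G$ uses at most $2\cdot\vc(G)$ vertices, because any cycle (or path) alternates in some sense between $X$ and $I$, so the number of independent set vertices in it is bounded by the number of vertex cover vertices in it (off by one in the path case). Thus the parameter still bounds the length of the longest cycle, and the marking scheme need not grow with $k$.

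First I would run the marking scheme of \Cref{sec:k-path-vc} on $H=(\binom{X}{2},I_2)$, applying the $3$-expansion lemma to obtain $\widehat A,\widehat B$ and the saturated set $B'$, exactly as before. As in Corollary \ref{corollary:vc-k-cycle-result}, I would omit the pendant-marking step entirely: pendant vertices of $I_1$ have degree~$1$ and cannot appear on any cycle, so $I_1^+=\emptyset$ and we can even delete all of $I_1$ from the kernel. The resulting kernel $G'$ has $\OO(\vc^2)$ vertices, and for every $\ell\ge k$, every $\ell$-cycle of $G$ has an equivalent $\ell$-cycle in $G'$ by the same argument as in \Cref{lemma:vc-same-signature-existence} (with the pendant case vacuously absent).

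Next I would lift the signature/equivalence machinery to cycles of arbitrary length~$\ell\ge k$. The signature of an $\ell$-cycle $C$ is again the restriction of the position function to $V(C)\cap(X\cup\RR)$, but now only defined up to cyclic rotation and reflection (a cycle has no distinguished starting vertex or direction). Two $\ell$-cycles are equivalent if their signatures agree after fixing such a canonical rotation/reflection. I would pick the canonical form by choosing, among the $2\ell$ rotations/reflections, the lexicographically smallest sequence of vertex labels, and declare a kernel cycle to be the minimal representative in its equivalence class if its $I$-sequence (subsequence of vertices from $I\setminus\RR$ in the canonical rotation) is lexicographically smallest among all $\ell$-cycles in $G'$ with the same signature.

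The solution-lifting algorithm then outputs, for every $\ell\ge k$ and every $\ell$-cycle $C'$ of $G'$, first $C'$ itself, and then, if $C'$ is the canonical minimal representative, all $\ell$-cycles of $G$ equivalent to $C'$ that contain a vertex outside $V(G')$. The enumeration of these extensions proceeds exactly as in \Cref{lemma:vc-enum-k-path-poly-delay}: build the auxiliary bipartite graph $H^*$ by pinning the first non-kernel vertex $u$ to a fixed position, and enumerate all saturating matchings via \Cref{prop:maximum-matching-enumeration}, giving $\OO(n\cdot m\cdot k^2)$ delay (the $k^2$ factor still suffices because $\ell\le 2\vc+1=\OO(\vc)$, and we are only charged per saturating matching found). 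The main obstacle will be bookkeeping the cyclic-and-reflection symmetry carefully enough that duplicates are avoided: one must verify that the canonical-representative test is consistent between $C'$ and the lifted cycles (the canonical rotation of a lifted cycle must agree with the canonical rotation of the kernel cycle it is extended from) so that each cycle of $G$ is output exactly once. Once that is checked, the size and delay bounds follow verbatim from the proof of \Cref{thm:k-path-vc-result}.
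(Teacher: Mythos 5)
Your proposal is correct and matches the paper's approach; the paper gives no proof at all, merely stating that the corollary follows by combining the adaptations for {\enumkpathAll} (\Cref{cor:vc-at-least-k-path}) and {\enumkcycle} (\Cref{corollary:vc-k-cycle-result}), which is precisely what you do. One small refinement you should make to the caveat you raise yourself: the canonical rotation/reflection must be chosen based only on the signature vertices $V(C)\cap(X\cup\RR)$ (not on all vertex labels), so that the canonical form is invariant under replacing $I\setminus\RR$ vertices and hence agrees between a kernel cycle and every cycle lifted from it.
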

\else

% \paragraph{Extension to Other Path and Cycle Variants.}

Our techniques can also be used for {\enumkcycle} and variants where all paths/cycles of length {\em at least} $k$ need to be outputted.
% In each of these variants, we exploit the fact that a path/cycle of length at least $k$ has at most $2{\vc} + 1$~vertices, because a path of $G$ can have have all the vertices from a vertex cover $X$ and at most $|X| + 1$ vertices from $G - X$.
% Hence, in the kernelization algorithm will preserve the length of a longest path.

\begin{proposition}[$\star$]
{\enumkpathAll}, {\enumkcycle}, and {\enumkcycleAll} parameterized by ${\vc}$ admits a $\OO(n\cdot m\cdot k^2)$-delay enumeration kernel with $\OO({\vc}^2)$ vertices.
\end{proposition}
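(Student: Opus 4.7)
The plan is to adapt the proof of \Cref{thm:k-path-vc-result} to each of the three variants, exploiting the key length bound: since $X$ is a vertex cover and $I = V(G) \setminus X$ is an independent set, any path in $G$ alternates between $X$ and $I$ with at most one $I$-vertex at each end, so it has at most $2|X|+1$ vertices; any cycle has at most $2|X|$ vertices. In particular, every path or cycle of length at least $k$ has length in $\{k,\ldots,4\vc+1\}$, which is bounded by the parameter.

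For {\enumkpathAll} I would keep the marking scheme of the proof of \Cref{thm:k-path-vc-result} verbatim and hence the kernel size $\OO(\vc^2)$ and the property of \Cref{obs:k-path-preserve}. The definitions of signature (\Cref{defn:vc-signature-P}), extension, and equivalence (\Cref{defn:vc-equivalent-k-paths}) only refer to the positions on a path and are indexed by its length; none of \Cref{lemma:vc-same-signature-existence}, \Cref{lemma:edge-occupation-vc}, or \Cref{lemma:vc-enum-k-path-poly-delay} uses any specific property of $k$ beyond the fact that positions lie in $[\ell]$ for the current length $\ell$. The solution-lifting algorithm then iterates $\ell$ from $k$ up to $2|X|+1$; for each $\ell$-path $P'$ of $G'$ it unconditionally outputs $P'$ and, if $P'$ is lexicographically smallest among $\ell$-paths with its signature, additionally enumerates through \Cref{lemma:vc-enum-k-path-poly-delay} all $\ell$-paths of $G$ that are equivalent to $P'$ and contain a non-kernel vertex. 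Since $\ell \le 4\vc+1$, the delay remains $\OO(n\cdot m\cdot k^2)$.

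For {\enumkcycle} two simplifications apply. First, no vertex of degree one in $G$ lies on a cycle, so the kernel need not mark any vertex of $I_1^+$; the graph $G'$ keeps $X\cup B'\cup(I_2\setminus\widehat{B})$, still of size $\OO(\vc^2)$, and \Cref{defn:vc-equivalent-k-paths}(\ref{item-2-pendant}) becomes vacuous. Second, to avoid enumerating a $k$-cycle under its $2k$ rotational/reflectional representations, I would fix a canonical form: write a $k$-cycle $C$ as the vertex sequence starting at the $\prec$-smallest vertex of $V(C)\cap(X\cup\RR)$, choosing the direction whose successor in $X\cup\RR$ is $\prec$-smaller (breaking further ties by the next vertex in $X\cup\RR$). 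The signature, equivalence, and ``lexicographically smallest'' test are defined on this canonical form. The analogue of \Cref{lemma:vc-same-signature-existence} then carries over: the same endpoint-by-endpoint rerouting through $I'$ using the $3$-expansion $M$ produces, for any $k$-cycle of $G$, an equivalent $k$-cycle in $G'$; the solution-lifting algorithm mirrors \Cref{lemma:vc-enum-k-path-poly-delay} using a bipartite auxiliary graph on $\binom{X}{2}$ and the non-kernel vertices, but without the special treatment of the first and last vertices.

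Finally, {\enumkcycleAll} is obtained by combining both adaptations: drop pendants from the kernel, use the canonical form for cycles, and iterate $\ell \in \{k,\ldots,2|X|\}$ applying the cycle procedure for each length. The main obstacle I anticipate is the cycle case, where one must verify that the canonical form interacts cleanly with equivalence, so that each cycle of $G$ is output exactly once. This reduces to the observation that the canonical starting vertex of any cycle lies in $X\cup\RR$, and $V(C)\cap(X\cup\RR)$ is preserved under the equivalence; hence two equivalent cycles have the same canonical start and direction, and the lexicographically-smallest test correctly picks a single representative per equivalence class.
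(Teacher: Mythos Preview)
Your proposal is correct and follows essentially the same approach as the paper. The paper's proofs are very terse (``marking scheme and kernelization algorithm works similarly as before'', ``rest of the arguments remain the same''), and you have filled in precisely the details the paper omits: the length bound $\ell\le 2|X|+1$ for paths (resp.\ $2|X|$ for cycles), the per-length invocation of the solution-lifting machinery, and the removal of pendant handling for cycles. Your treatment of the cycle canonical form is in fact more careful than the paper's brief remark in the $\vc$ section; the paper only spells out a canonical starting vertex explicitly in the later $\diss$ section (\Cref{prop-diss-k-cycle}), and your choice to break direction ties via the next vertex in $X\cup\RR$ (rather than the immediate neighbor) cleanly guarantees that equivalent cycles receive the same canonical orientation.
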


\fi

\section{Parameterization by Dissociation Number}
\label{sec:k-path-disoc}

%\paragraph%%Dissociation number result

In this section, we present a polynomial-delay enumeration kernel for {\enumkpath} parameterized by the dissociation number~$\diss(G)$ of the input graph~$G$.
Given the instance~$(G,k)$, we initially find a $3$-approximation~$X$ of a minimal dissociation set.
Clearly,~$X$ can be found in polynomial time by greedily adding vertex subsets of size~$3$ whose induced subgraph is connected.
Let~$I= V(G)\setminus X$.
By construction,~$G[I]$ consists of connected components of size at most~$2$.
Furthermore, we assume without loss of generality that the vertices of~$I$ are assigned some label with an ordering in the input graph $G$.
When we construct our kernel with graph~$G'$, for the vertices in~$I\cap V(G')$, we maintain the relative ordering of the vertices of these vertices.
This will be utilized later in the solution-lifting algorithm.

The key observation in our kernelization is that since~$G[I]$ consists of connected components of size at most~$2$, the length~$k$ of the path can be upper-bounded by~$3\cdot(|X|+1)$.

\paragraph{Strategy.}
In our marking scheme we mark connected components of~$G[I]$ as \emph{rare} and \emph{frequent} such that all possible structures of a $k$-path can be preserved in the kernel.
Roughly speaking, a connected component~$C$ of~$G[I]$ is rare, if the number of connected components in~$G[I]$ having the same neighborhood as~$C$ in~$X$ is 'small' (linear bounded in~$|X|$), and otherwise~$C$ is frequent.
We mark all vertices in rare components and sufficiently many vertices (linear in~$|X|$) in frequent components.
This ensures that the number of marked vertices is cubic in~$|X|$.
Parameter~$k$ is not changed.

Next, we define the \emph{signature}~$\rmf_{P'}$ of each $k$-path~$P'$ in~$G$ as the mapping of the rare vertices and the vertices in the modulator~$X$ of~$P'$ to their indices in~$P'$ and an \emph{extension} of~$P'$ is another mapping from frequent vertices to the indices of~$P'$ such that the combination of both mappings yields a $k$-path in~$G$ which contains at least one non-kernel vertex.
Based on the signature, we construct equivalence classes~$\PP_i$ of $k$-paths in~$G$ and equivalence classes~$\PP_i'$ of $k$-paths in~$G'$.
We then verify that our kernel does not miss any equivalence class~$\PP_i$ of~$G$ and vice versa, that is, that there is a one-to-one correspondence between the equivalence classes of~$\PP_i$ and~$\PP_i'$.
Afterwards, we define a \emph{suitable} $k$-path in each equivalence class~$\PP_i'$ of~$G'$.
Then, if~$P_i'\in\PP_i'$ is not suitable, we only output~$P_i'$, and if~$P_i'$ is suitable we output all $k$-paths in~$\PP_i$ which contain at least one non-kernel vertex, that is, all $k$-paths of~$\PP_i\setminus \PP_i'$.

\subsection{Marking Scheme and Kernelization}

\paragraph{Marking Scheme.}
Our kernel consists of~$X$ and some connected components of~$G[I]$. 
In order to detect the necessary connected components of~$G[I]$, we invoke the following marking scheme distinguishing \emph{rare} and \emph{frequent} components.
Then, the kernel comprises of all rare components and sufficiently many frequent components.

Formally, let~$p=2\cdot |X|+3$ and let~$u,w\in X$.
If~$u$ and~$w$ have at most $p$~common neighbors in~$I$, then mark all vertices in the corresponding connected components of~$G[I]$ as \emph{rare} and we say that the pair~$(u,w)$ is \emph{$1$-rare}.
Otherwise, the pair~$(u,w)$ is \emph{$1$-frequent}.
If for~$u$ and~$w$ there exist at most $p$~connected components~$C_i=\{x_i,z_i\}$ of~$G[I]$ such that~$G[\{u,w,x_i,z_i\}]$ contains a path~$(u,x_i,z_i,w)$ or~$(w,x_i,z_i,u)$, then mark all vertices in the corresponding connected components of~$G[I]$ as \emph{rare} and we say that the pair~$(u,w)$ is \emph{2-rare}.
Otherwise, the pair~$(u,w)$ is \emph{2-frequent}.
We also say that~$C_i$ is $(u,w)$-connecting.
Finally, if~$u$ has at most $p$~neighbors in~$I$, then mark all vertices in the corresponding connected components of~$G[I]$ as \emph{rare} and we say that~$u$ is \emph{0-rare}.
Otherwise, $u$ is a \emph{0-frequent} vertex.
To unify later arguments, we abuse notation and say that a pair~$(u,w)$ is 0-rare or 0-frequent. 
In this case, vertex~$w$ is ignored and thus only vertex~$u$ is considered.

Note that the 0-rare and 0-frequent vertices are necessary for the start and end of a $k$-path~$P$ since~$P$ might start or end with vertices from~$I$.
By~$\RR\subseteq I$ we denote the set of all \emph{rare} vertices.
All remaining vertices~$\FF$ of~$I$ are called \emph{frequent}.
In other words,~$V(I)=\FF\cup\RR$ and~$\FF\cap \RR =\emptyset$.

\paragraph{Kernelization.}
To obtain our kernel we exploit the marking scheme as follows:
We consider the graph~$G'$ induced by all vertices of~$X$, all rare vertices, and some of the frequent vertices.
More precisely, for each pair~$(u,w)$ of vertices from the modulator which is $1$-frequent we first chose $p$~arbitrary common neighbors of~$u$ and~$w$ in~$G[I]$ and second, we add the vertices of the connected components of these vertices in~$G[I]$ to~$G'$.
Similarly, for each $2$-frequent pair~$(u,w)$ we add the vertices of exactly $p$~arbitrary $(u,w)$-connecting components of~$G[I]$ to~$G'$.
Finally, for each~$u\in X$ with at least $p$~neighbors in~$I$, we add the vertices of exactly $p$ connected components of~$G[I]$ containing at least one neighbor of~$u$ to~$G'$.
To conclude:~$V(G')=X\cup I'$ where~$\RR\subseteq I'\subseteq I$.

Note that it is necessary to set~$p=2\cdot |X|+3$ since a connected component~$C$ of~$G[I]$ can contain two marked frequent common neighbors of two vertices in the modulator and simultaneously, $C$ can be $(u,w)$-connecting for another pair~$(x,z)$ of vertices in the modulator.
Now, if~$C$ is used between~$x$ and~$z$ in a $k$-path, then two common neighbors are blocked for one pair of vertices in the modulator.
This possibility requires us to set~$p=2\cdot |X|+3$ instead only~$p=|X|+2$.

We observe the following for the kernel~$G'$.

\begin{observation}
\label{obs-kernel-diss}
Let~$G'$ be the graph obtained from~$G$ after invoking the marking scheme described above.
Then,~$G'$ has $\OO(|X|^3)$~vertices and~$G'$ can be computed in $\OO(|X|^2\cdot n)$~time.
Furthermore, any $k$-path of~$G'$ is also a $k$-path of~$G$.
\end{observation}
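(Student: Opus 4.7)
The plan is to prove the three claims of the observation in order: the vertex bound, the construction time, and the fact that every $k$-path of~$G'$ is a $k$-path of~$G$. For the vertex bound, I would account separately for the contribution of rare and frequent components. By construction, $V(G') = X \cup I'$, so it suffices to bound $|I'|$ by $\OO(|X|^3)$. Note that every connected component of $G[I]$ has size at most~$2$, so the number of vertices contributed is at most twice the number of components.

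For the rare contribution, I would observe that a connected component of $G[I]$ is marked rare only through one of the following: it contains a common neighbor of some $1$-rare pair $(u,w) \in \binom{X}{2}$ (at most $p$ components per such pair), it is a $(u,w)$-connecting component for some $2$-rare pair (at most $p$ per such pair), or it contains a neighbor of some $0$-rare vertex $u \in X$ (at most $p$ per such vertex). Since $|\binom{X}{2}| = \OO(|X|^2)$, this yields $\OO(|X|^2 \cdot p) = \OO(|X|^3)$ rare components, and therefore $\OO(|X|^3)$ rare vertices. For the frequent contribution, the marking scheme adds at most $p$ components for each $1$-frequent pair, $p$ components for each $2$-frequent pair, and $p$ components for each $0$-frequent vertex. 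Since $p = 2|X| + 3$, summing over all pairs and singletons again gives $\OO(|X|^2 \cdot p) = \OO(|X|^3)$ frequent components, hence $\OO(|X|^3)$ frequent vertices added to~$I'$. Combining both contributions gives $|V(G')| = \OO(|X|^3)$.

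For the running time, I would argue as follows. Computing the $3$-approximate dissociation set $X$ and partitioning $G[I]$ into its connected components of size at most~$2$ can be done in $\OO(n+m) = \OO(|X| \cdot n)$~time (using that $m \le |X| \cdot n$ since $I$ is a dissociation set and thus each edge is incident to $X$ or lies inside a size-$2$ component). Then, for each of the $\OO(|X|^2)$ pairs $(u,w)$ (including the singletons $u = w$ for the $0$-case), we need to identify common neighbors in $I$ and $(u,w)$-connecting components in $G[I]$, counting them up to the threshold $p$ and marking them as rare or keeping the first $p$ as frequent. For a fixed pair, this can be done by scanning $N(u) \cup N(w)$ and consulting the precomputed component structure in $\OO(n)$~time. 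The main obstacle here is making sure that counting stops as soon as the threshold $p$ is reached and that no component is processed too many times across different pairs; once this bookkeeping is in place, the total time is $\OO(|X|^2 \cdot n)$ as claimed.

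The third statement is immediate from the construction, since $G'$ is defined as the subgraph of $G$ induced by $V(G') = X \cup I'$. Every path in an induced subgraph is a path in the ambient graph with the same vertex set and same edges, so every $k$-path of $G'$ is a $k$-path of $G$. This completes the plan; the only nontrivial step is the careful accounting in the vertex bound, where one must not double-count a component that is simultaneously marked for several pairs, but since we only add each component at most once to $I'$ the asymptotic bound is unaffected.
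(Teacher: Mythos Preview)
Your proposal is correct and takes essentially the same approach as the paper: bound the rare and frequent contributions by $\OO(p)$ components per pair (or single vertex) in $X$, multiply by the $\OO(|X|^2)$ pairs, and use $p=\OO(|X|)$; then observe that $G'$ is an induced subgraph for the last claim. Your running-time argument is slightly more detailed than the paper's (which only says ``check for each pair whether it is $0$-, $1$-, or $2$-rare''), and your edge bound $m\le |X|\cdot n$ should technically be $m\le |X|\cdot n + n/2$ to account for the matching edges inside $G[I]$, but this does not affect the asymptotic conclusion.
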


\begin{proof}
First, we show the bound on the number of vertices.
Observe that for each pair~$(u,w)$ of vertices from~$X$, we mark at most $4p$~rare vertices to~$G'$: at most $2p$~vertices if~$(u,w)$ is $1$-rare (up to $p$~common neighbors and up to~$p$ remaining vertices in the connected components of these vertices in~$G[I]$) and at most $2p$~vertices if~$(u,w)$ is $2$-rare.
Hence, we mark at most $4p\cdot |X|^2$~vertices as rare and all of them are part of~$G'$.
Similarly, for each pair~$(u,w)$ we add at most $4p$~frequent vertices to~$G'$: $2p$~vertices, if~$(u,w)$ is $1$-frequent and $2p$~many if~$(u,w)$ is $2$-frequent.
Furthermore, we add at most $2p$~vertices for each vertex~$u$ in the modulator to~$G'$; at most $p$~neighbors of~$u$ and at most $p$~other vertices in the corresponding connected components.
Hence, we add at most $4p\cdot |X|^2+2\cdot |X|$~vertices to~$G'$.
Since~$p=2\cdot |X|+3$, we obtain that~$|V(G')|\in\OO(|X|^3)$.
The time bound follows by checking for each pair of vertices from~$X$ whether they are 0-rare, 1-rare, or 2-rare, respectively.

The second part is a direct consequence from the fact that~$G'$ is an induced subgraph of~$G$.
\end{proof}

The choice of~$p$ gives a guarantee that if two vertices~$v_\ell$ and~$v_r$ of the modulator are $t$-frequent for some~$t\in[0,2]$, then independent of the structure of any $k$-path~$P$ in~$G$, there is at least one $(v_\ell,v_r)$-path on $t+2$~vertices in~$G[I'\cup\{v_\ell,v_r\}]$ whose internal vertices, that is, the vertices in~$G[I']$, are not contained in~$P$.
We denote this property as the \emph{prolongation property} that we formally define as follows.
Suppose that a given $k$-path $P$ of $G$ (respectively of $G'$) fulfills the following property -- for every pair of vertices $(x, y) \in {{X}\choose{2}}$, if $(x, y)$ is $t$-frequent for some $t \in [0, 2]$, then there is at least one path $Q_{x, y}$ having $t+2$ vertices between~$x$ and $y$ in $G[\{x, y\} \cup I']$ such that the internal vertices of $Q_{x, y}$ are not contained in~$P$.
If a path $P$ fulfills this property, then $P$ is said to fulfill {\em prolongation property}.
%This property holds true for every $k$-path $P$ of $G'$ as well.
This property will be crucial in the solution-lifting algorithm for guaranteeing polynomial delay.

\begin{lemma}
\label{lemma:diss-prolongation-property}
Every $k$-path~$P$ of $G$ (respectively of $G'$) fulfills the prolongation property.
\end{lemma}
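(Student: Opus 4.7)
The plan is to prove the lemma by a pigeonhole counting argument that compares, for each frequent pair, the number of marked ``prolongation paths'' in~$I'$ against how many of them can be destroyed by~$V(P)\cap I$.

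First, I would bound~$|V(P)\cap I|$ using the dissociation structure. Since every connected component of~$G[I]$ has at most two vertices, between any two consecutive $X$-vertices on~$P$ there can be at most two vertices from~$I$, and likewise at most two before the first and at most two after the last $X$-vertex of~$P$. A routine count then yields~$|V(P)\cap I|\le 2(|V(P)\cap X|+1)\le 2(|X|+1)=2|X|+2$, which is strictly less than the marking threshold~$p=2|X|+3$. This one-unit gap is what drives the rest of the argument.

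Next, I would treat the three frequency levels in turn. For a $1$-frequent pair~$(x,y)$, the marking scheme guarantees that~$I'$ contains at least~$p$ common neighbors of~$x$ and~$y$; since at most~$|V(P)\cap I|<p$ of them can lie on~$P$, some~$v\in I'\setminus V(P)$ survives and gives the three-vertex path~$(x,v,y)$. For a $2$-frequent pair~$(x,y)$, $I'$ contains at least~$p$ distinct $(x,y)$-connecting components~$\{x_i,z_i\}$; such a component is ``blocked'' as soon as either~$x_i$ or~$z_i$ lies on~$P$, so the number of blocked components is at most~$|V(P)\cap I|<p$, and an unblocked one provides the required four-vertex path~$(x,x_i,z_i,y)$ (or its reverse). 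For a $0$-frequent vertex~$u$, the same counting with at least~$p$ marked neighbors of~$u$ in~$I'$ yields a neighbor of~$u$ in~$I'\setminus V(P)$, which is all that the property demands when~$t=0$.

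Finally, I would note that the same argument works verbatim when~$P$ is a $k$-path of~$G'$ rather than of~$G$, since~$G'$ is an induced subgraph in which the component-size bound on~$G[I]$ is preserved and the number of marked common neighbors, marked connecting components, and marked neighbors in~$I'$ is by construction at least~$p$. I expect no real obstacle: the only delicate point is squeezing the bound on~$|V(P)\cap I|$ below~$p$, and this is precisely what motivated the conservative choice~$p=2|X|+3$ (allowing for the fact that a single marked component might simultaneously witness frequency for two different pairs) in the marking scheme, so nothing beyond careful counting is required.
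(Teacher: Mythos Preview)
Your proof is correct and follows essentially the same pigeonhole argument as the paper: bound $|V(P)\cap I|\le 2(|X|+1)=p-1$ via the component-size constraint, then observe that each $t$-frequent pair has at least $p$ internally-disjoint candidate paths in $I'$, so one survives. The paper handles the $G'$ case by noting that $k$-paths of $G'$ are $k$-paths of $G$, whereas you rerun the argument directly in $G'$; both are fine, and your case split over $t\in\{0,1,2\}$ is in fact more explicit than the paper's somewhat compressed version.
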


\begin{proof}
We only give the proofs for the $k$-paths of $G$ since by \Cref{obs-kernel-diss} $k$-paths of~$G'$ are also $k$-paths of~$G$. 
Let $P$ be a $k$-path of $G$ and ~$v_{\ell}$ and~$v_r$ be two vertices of the modulator~$X$ such that all vertices in~$P$ between~$v_\ell$ and~$v_r$ are frequent.
In other words, assume that~$v_\ell$ and~$v_r$ are $t$-frequent for some~$t\in[0,2]$.
First, observe that since~$P$ contains at most $|X|$~vertices of~$X$, there are at most $|X|+1$~consecutive sequences of frequent vertices in~$P$ (recall that the frequent and rare vertices form a partition of~$I$).
Second, since each of these consecutive sequences comprises a connected component of~$G[I]$, which have at most $2$~vertices, in total at most $2\cdot |X|=p-1$~consecutive sequences corresponding to~$v_\ell$ and~$v_r$ are hit by other vertices from the modulator.
%The argumentation for 0-frequent vertices in~$X$ works analog.
Hence, the prolongation property is verified.
\end{proof}

\subsection{Signature and Equivalence Classes}

We have to prove some properties that establish some relations between all $k$-paths of~$G$ and all $k$-paths in~$G'$.
Similar to \Cref{sec:k-path-vc}, we define the mappings~${\rmf}$ and~${\rmg}$, and the notion of equivalent paths which will be used in the solution-lifting algorithm.

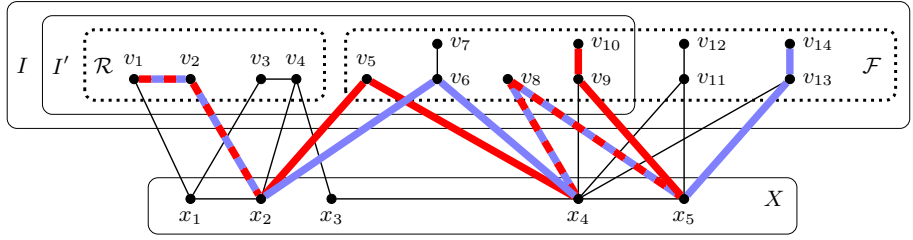
\begin{figure}[t]
\centering
\begin{adjustbox}{width=\textwidth}
\begin{tikzpicture}
\draw[rounded corners,dotted,very thick] (-1.0, 1.2) rectangle (2.4, 2.2) {};
\node[label=right:{$\RR$}](X) at (-1.1, 1.7) []{};
\draw[rounded corners,dotted,very thick] (2.7, 1.2) rectangle (10.5, 2.2) {};
\node[label=right:{$\FF$}](X) at (9.8, 1.7) []{};
\draw[rounded corners] (-1.6, 1.0) rectangle (6.8, 2.4) {};
\node[label=right:{$I'$}](X) at (-1.7, 1.7) []{};
\draw[rounded corners] (-2.1, 0.8) rectangle (10.7, 2.6) {};
\node[label=right:{$I$}](X) at (-2.2, 1.7) []{};

\node[label=above:{$v_1$}](V1) at (-0.3, 1.5) [shape = circle, draw, fill=black, scale=0.11ex]{};
\node[label=above:{$v_2$}](V2) at (0.5, 1.5) [shape = circle, draw, fill=black, scale=0.11ex]{};

\node[label=above:{$v_3$}](V3) at (1.5, 1.5) [shape = circle, draw, fill=black, scale=0.11ex]{};
\node[label=above:{$v_4$}](V4) at (2.0, 1.5) [shape = circle, draw, fill=black, scale=0.11ex]{};

\node[label=above:{$v_5$}](V5) at (3.0, 1.5) [shape = circle, draw, fill=black, scale=0.11ex]{};
\node[label=right:{$v_6$}](V6) at (4.0, 1.5) [shape = circle, draw, fill=black, scale=0.11ex]{};
\node[label=right:{$v_7$}](V7) at (4.0, 2.0) [shape = circle, draw, fill=black, scale=0.11ex]{};

\node[label=right:{$v_8$}](V8) at (5.0, 1.5) [shape = circle, draw, fill=black, scale=0.11ex]{};
\node[label=right:{$v_9$}](V9) at (6.0, 1.5) [shape = circle, draw, fill=black, scale=0.11ex]{};
\node[label=right:{$v_{10}$}](V10) at (6.0, 2.0) [shape = circle, draw, fill=black, scale=0.11ex]{};

\node[label=right:{$v_{11}$}](V11) at (7.5, 1.5) [shape = circle, draw, fill=black, scale=0.11ex]{};
\node[label=right:{$v_{12}$}](V12) at (7.5, 2.0) [shape = circle, draw, fill=black, scale=0.11ex]{};
\node[label=right:{$v_{13}$}](V13) at (9.0, 1.5) [shape = circle, draw, fill=black, scale=0.11ex]{};
\node[label=right:{$v_{14}$}](V14) at (9.0, 2.0) [shape = circle, draw, fill=black, scale=0.11ex]{};

\path [-,line width=0.2mm](V1) edge (V2);
\path [-,line width=0.2mm](V3) edge (V4);
\path [-,line width=0.2mm](V6) edge (V7);
\path [-,line width=0.2mm](V9) edge (V10);
\path [-,line width=0.2mm](V11) edge (V12);
\path [-,line width=0.2mm](V13) edge (V14);

\draw[rounded corners] (-0.1, -0.7) rectangle (9.1, 0.1) {};
\node[label=right:{$X$}](X) at (8.4, -0.2) []{};
\node[label=below:{$x_1$}](X1) at (0.5, -0.2) [shape = circle, draw, fill=black, scale=0.11ex]{};
\node[label=below:{$x_2$}](X2) at (1.5, -0.2) [shape = circle, draw, fill=black, scale=0.11ex]{};
\node[label=below:{$x_3$}](X3) at (2.5, -0.2) [shape = circle, draw, fill=black, scale=0.11ex]{};
\node[label=below:{$x_4$}](X4) at (6, -0.2) [shape = circle, draw, fill=black, scale=0.11ex]{};
\node[label=below:{$x_5$}](X5) at (7.5, -0.2) [shape = circle, draw, fill=black, scale=0.11ex]{};

\path [-,line width=0.2mm](X1) edge (V1);
\path [-,line width=0.2mm](X1) edge (V3);
\path [-,line width=0.2mm](X2) edge (V2);
\path [-,line width=0.2mm](X2) edge (V4);
\path [-,line width=0.2mm](X2) edge (V5);
\path [-,line width=0.2mm](X2) edge (V6);
\path [-,line width=0.2mm](X3) edge (V4);
\path [-,line width=0.2mm](X4) edge (V5);
\path [-,line width=0.2mm](X4) edge (V6);
\path [-,line width=0.2mm](X4) edge (V8);
\path [-,line width=0.2mm](X4) edge (V9);
\path [-,line width=0.2mm](X4) edge (V11);
\path [-,line width=0.2mm](X4) edge (V13);
\path [-,line width=0.2mm](X5) edge (V8);
\path [-,line width=0.2mm](X5) edge (V9);
\path [-,line width=0.2mm](X5) edge (V11);
\path [-,line width=0.2mm](X5) edge (V13);

\path [-,line width=0.2mm](X1) edge (X2);
\path [-,line width=0.2mm](X3) edge (X5);

\draw[dashed,dash pattern=on 5pt off 5pt,red,line width=3pt]  (V1) to (V2);
\draw[dashed,dash pattern=on 5pt off 5pt,blue!50,dash phase=5pt,line width=3pt]  (V1) to (V2);

\draw[dashed,dash pattern=on 5pt off 5pt,red,line width=3pt]  (V2) to (X2);
\draw[dashed,dash pattern=on 5pt off 5pt,blue!50,dash phase=5pt,line width=3pt]  (V2) to (X2);

\draw[dashed,dash pattern=on 5pt off 5pt,red,line width=3pt]  (X4) to (V8);
\draw[dashed,dash pattern=on 5pt off 5pt,blue!50,dash phase=5pt,line width=3pt]  (X4) to (V8);

\draw[dashed,dash pattern=on 5pt off 5pt,red,line width=3pt]  (V8) to (X5);
\draw[dashed,dash pattern=on 5pt off 5pt,blue!50,dash phase=5pt,line width=3pt]  (V8) to (X5);

\draw[red,line width=3pt]  (X2) to (V5);
\draw[red,line width=3pt]  (X4) to (V5);
\draw[red,line width=3pt]  (X5) to (V9);
\draw[red,line width=3pt]  (V9) to (V10);

\draw[blue!50,line width=3pt]  (X2) to (V6);
\draw[blue!50,line width=3pt]  (X4) to (V6);
\draw[blue!50,line width=3pt]  (X5) to (V13);
\draw[blue!50,line width=3pt]  (V13) to (V14);

\end{tikzpicture}
\end{adjustbox}

\caption{A graph~$G$ with modulator~$X$ such that each connected component in~$G[I]$ has at most 2~vertices.
The two paths~$P_1=(v_1,v_2,x_2,v_5,x_4,v_8,x_5,v_9,v_{10})$ (in \textcolor{red}{red}) and~$P_2=(v_1,v_2,x_2,v_6,x_4,v_8,x_5,v_{13},v_{14})$ (in \textcolor{blue!50}{blue}) have the same signature.
Furthermore, $P_1$ is entirely contained in~$G'$, while~$P_2$ contains the two non-kernel vertices~$v_{13}$ and~$v_{14}$.}
\label{fig-diss-example-signature}
\end{figure}

\paragraph{Definition of Signature.}
First, we define the signature~$\rmf$ of a $k$-path.

\begin{definition}
\label{def-diss-sig}
Let~$P$ be a $k$-path of~$G$ (respectively, of $G'$) and~$P = (v_1,\ldots,v_k)$ be the sequence of the $k$~vertices of the path~$P$.
A function~$\rmf_P: V(P) \cap (X\cup\RR) \rightarrow [k]$ is called a \emph{signature} of~$P$ if~$\rmf_P(v_i) = i$.
\end{definition}

Observe that~${\rmf}$ is an injective mapping.
Clearly,~${\rmf}_{P}$ can be computed in linear-time.
See \Cref{fig-diss-example-signature} for an example of the signature of a path.

\paragraph{Equivalent paths and Equivalence classes of $k$-paths.}
Now, we define equivalence classes of the $k$-paths based on the mappings~$\rmf$ and~$\rmg$.

\begin{definition}
\label{def-diss-ext}
Let~$P_1$ and~$P_2$ be two $k$-paths.
Then, $P_1$ is \emph{equivalent} to~$P_2$ if and only if~$\rmf_{P_1}=\rmf_{P_2}$.
\end{definition}

The above definition defines the notion of equivalence not only between two $k$-paths of~$G$, but also between two $k$-paths of~$G'$, and between one $k$-path of $G$ and one $k$-path of $G'$, respectively.

Since the above definition of equivalent $k$-paths provides an equivalence relation, we are now able to define the equivalence classes for the set of $k$-paths of~$G$.
Let~$\PP$ be the set of all $k$-paths of~$G$ and let $\PP=\PP_1\cup\PP_2\cup\ldots\cup\PP_s$ be the partition of $\PP$ into $s$ equivalence classes where $s$ is the number of equivalence classes.
By~$\PP_i'\subseteq \PP_i$ we denote all $k$-paths of~$G$ in class~$\PP_i$ which are also $k$-paths in the kernel~$G'$.
In other words, $\PP_i'$ is the collection of all $k$-paths in $G'$ such that $\PP_i' \subseteq \PP_i$.

\paragraph{Definition of Extension.}
Next, we define the mapping~${\rmg}$.
Intuitively, for a given signature~${\rmf}_{P}$ an \emph{extension} consists of a set~$I^*\subseteq \FF\subseteq I$ of vertices which are also mapped to indices in~$[k]$ such that the combined function is bijective, that is, each vertex is mapped to a unique index of~$[k]$ and these $k$~vertices form a $k$~path in that specific ordering in~$G$.
%Furthermore, such a path is required to contain at least one vertex which is not contained in the kernel.
%By definition such a vertex is contained in~$I$ and has to be frequent, that is, has to be contained in~$\FF$.

\begin{definition}
For some~$I^* \subseteq I$ we call a mapping~$\rmg: I^* \rightarrow [k]$ an {\em extension} of the signature~$\rmf_{P}$ of~$P$ if 
\begin{itemize}
	\item $\rmg$ is an injective function,
	%%%\item $I^* \setminus V(G')=I^*\setminus I' \neq \emptyset$,
	\item ${\rmg}(u)\ne{\rmf}_P(w)$ for each~$u\in I^*$ and each~$w\in V(P)\cap (X\cup\RR)$, and
	\item the vertices in~$(V(P)\cap (X\cup\RR))\cup I^*$ form a $k$-path in~$G$ in the order specified by~$\rmf_{P}$ and~$\rmg$.
\end{itemize}
\end{definition}

\subsection{Challenges}

The objective of the solution-lifting algorithm is that ``for each equivalence class $\PP_i$ we choose a unique \emph{suitable} $k$-path $P_i'\in\PP_i'$ that is used to enumerate all $k$-paths in~$\PP_i\setminus \PP_i'$ with polynomial-delay and also~$P_i'$''.
For every other $k$-path~$\widetilde{P_i} \in \PP_i' \setminus \{P_i'\}$, the solution-lifting algorithm only outputs $\widetilde{P_i}$.
To make this algorithm work, we have to overcome the following challenges:

\begin{enumerate}
	\item Verify that~$\PP_i'\ne\emptyset$, that is, for each equivalence class~$\PP_i$ there exists at least one $k$-path $P'$ which is entirely contained in the kernel, that is, in $G'$.
	\item Define the notation of \emph{suitable} $k$-paths of $G'$.
	\item Given a $k$-path~$P_i'\in\PP_i'$, check whether $P_i'$ is a suitable $k$-path or not in polynomial time.
	\item Ensure polynomial delay for the enumeration of all $k$-paths in~$\PP_i\setminus \PP_i'$.
\end{enumerate}

\subsection{Resolving the Four Challenges}
\label{sec-diss-solve-challenges}

This section is devoted to describe how we resolve each of the above mentioned challenges.
The following lemma illustrates a crucial structural characterization that resolves the first challenge.

\begin{lemma}
\label{lemma:dissociation-challenge-1-eqv-class}
If $\PP_i$ is a nonempty equivalence class of $k$-paths in $G$, then $\PP_i' \neq \emptyset$.
\end{lemma}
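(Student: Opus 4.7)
Fix any $P \in \PP_i$; the plan is to construct an equivalent $k$-path $P' \in \PP_i'$, that is, a $k$-path of $G'$ with $\rmf_{P'} = \rmf_P$. Since the kernel keeps all vertices of $X$ and every rare vertex, $X \cup \RR \subseteq V(G')$, so each $v \in V(P) \cap (X \cup \RR)$ already sits in $G'$. Place each such $v$ at position $\rmf_P(v)$ in $P'$. This forces $\rmf_{P'} = \rmf_P$ automatically and leaves only the positions currently occupied by frequent vertices of $P$ to be filled.

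Decompose $P$ into its maximal \emph{frequent gaps}, the maximal contiguous subsequences of $P$ whose vertices lie entirely in $\FF$. Each such gap is a single connected component of $G[\FF]$ and therefore has size $1$ or $2$, and its flanking vertices in $P$ pin down the required replacement type. An internal size-$1$ gap between $u,w \in X$ demands a frequent common neighbor of $u,w$, which forces $(u,w)$ to be $1$-frequent (otherwise every common neighbor of $(u,w)$ would already be in a rare component); an internal size-$2$ gap between $u,w$ demands a $(u,w)$-connecting frequent component, forcing $(u,w)$ to be $2$-frequent; an endpoint gap adjacent to $u \in X$ demands a frequent neighbor or neighboring pair of $u$, forcing $u$ to be $0$-frequent. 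In each of these three cases the marking scheme guarantees at least $p = 2|X|+3$ suitable replacement candidates in $G'$.

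Process the frequent gaps of $P$ greedily in any fixed order, replacing each with a candidate drawn from $G'$ whose vertices have not yet been committed to $P'$. Since $P$ contains at most $|X|$ modulator vertices, its maximal $I$-segments number at most $|X|+1$, so there are at most $|X|+1$ frequent gaps in total. When processing the current gap, at most $|X|$ gaps have been handled earlier, so at most $2|X|$ vertices of $I' \cap \FF$ are blocked. The key step is to show the pool of candidates is never exhausted. For a $1$-frequent pair the $p$ kept common neighbors may be clustered in pairs inside size-$2$ components, so a single blocked component can eliminate up to two candidates at once; nevertheless $2|X|$ blocked vertices remove at most $2|X|$ candidates, leaving $p - 2|X| = 3$ available. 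The $2$-frequent case (where each blocked vertex lives in at most one of the $p$ kept $(u,w)$-connecting components) and the $0$-frequent case are analogous and strictly easier. Choosing any unblocked candidate for each gap yields a $k$-path $P' \in \PP_i'$.

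The hard part I expect is exactly the counting in the third step: verifying that $p = 2|X|+3$ is sharp enough to survive the worst-case \emph{double-blocking} that can occur at a $1$-frequent pair, and justifying this tightness matches the paper's earlier remark on why $|X|+2$ would not suffice. Once the counting is in hand, everything else reduces to routine bookkeeping over the three marking rules and the observation that the signature is preserved by construction.
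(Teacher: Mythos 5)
Your proposal follows the same construction as the paper's proof: keep the $X \cup \RR$ vertices at their signature-determined positions, decompose the remaining frequent vertices of $P$ into component-sized gaps (each of length 1 or 2, flanked by modulator vertices), and greedily replace each gap by an unblocked candidate from $G'$, with the counting $p - 2|X| = 3 > 0$ guaranteeing availability. The paper packages the counting step as the ``prolongation property'' (Lemma~\ref{lemma:diss-prolongation-property}) and then re-invokes essentially the same $p$ vs.\ $2|X|+2$ count inside the lemma's proof, whereas you inline it; this is a presentational rather than a substantive difference, so the two proofs are essentially identical.
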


\begin{proof}
Let $P_i\in\PP_i$.
We prove the lemma by showing that there is a $k$-path $P_i'$ in $G'$ that is equivalent to $P_i$.
This is sufficient to prove that $\PP_i' \neq \emptyset$.
By definition for each $P_i,\widehat{P_i}\in\PP_i$, ${\rmf}_{P_i} = {\rmf}_{\widehat{P_i}}$.
Therefore, both $P_i$ and~$\widehat{P_i}$ coincide on all non-frequent vertices.
It means that we have~$Z=V(P_i)\setminus \FF=V(\widehat{P_i})\setminus \FF$ where $\FF$ is the set of all frequent vertices.
And also for each~$w\in Z$, $\pos_{P_i}(w)=\pos_{\widehat{P_i}}(w)$.
This implies that both~$P_i$ and~$\widehat{P_i}$ have the same sequences of consecutive indices such that the vertices at these indices stem from~$\FF$ (frequent vertices).
Let~$g$ now be such a sequence of length~$t\in[0,2]$ and let~$v_\ell\in P_i\cap (X\cup\RR)$ be the vertex left of this sequence and let~$v_r\in P_i\cap (X\cup\RR)$ be the vertex right of this sequence (by definition these vertices are in the modulator~$X$).
Observe that the vertices~$v_\ell$ and~$v_r$ are $t$-frequent by definition.
Also, note that~$v_\ell$ or~$v_r$ might not exists, if~$P$ starts/ends with~$g$; in particular both might not exist if~$k\le 2$.
Due to \Cref{lemma:diss-prolongation-property}, $P_i$ and  $\widehat{P_i}$ admit prolongation property.
We exploit this prolongation property to construct~$P_i'$ as follows.
According to the construction of~$G'$ (kernelization algorithm), there are at least $q\ge p= 2\cdot |X|+3$ paths~$Q_1, Q_2, \ldots, Q_q$ in~$G'$ of length~$2+t$ from~$v_\ell$ to~$v_r$ that are internally vertex-disjoint.
Since each $k$-path consists of at most $|X|$~vertices from~$X$, there are at most $|X|+1$~sequences.
Since each such sequence contains at most $2$~vertices, at most $2\cdot |X|+2=p-1$~paths~$Q_i$ are hit by any $k$-path.
Hence, there is at least one path~$Q_j$ which is not hit.
Thus there exists also some~$P_i'\in\PP_i'$ which contains $Q_j$ as subpath.
\end{proof}

\paragraph{Defining the notion of suitable $k$-path in $G'$.}
Now, we move on to define the notion of ``suitable $k$-path'' in $G'$ that addresses the second challenge.
Recall that $G'$ (the graph outputted by the kernelization algorithm) is an induced subgraph of the input graph $G$.
We maintain the relative ordering of the vertices in~$I'=I\cap V(G')$ as they were in~$V(G)$.
It is possible for us to encode these vertices using $\OO(\log_2 ({\diss(G)}))$~bits and by construction,~$G'$ has $\OO(|X|^3)$~vertices, hence $\OO({\diss(G)}^3)$~vertices.
Therefore,~$G'$ can be encoded using $\OO({\diss(G)}^3\log_2 ({\diss(G)}))$~bits. 
We exploit this small encoding of the vertices in~$I'$ as follows.

Informally speaking, a $k$-path~$P_i'\in\PP_i'$ is \emph{suitable} if and only if the ordering of the frequent vertices of~$P_i'$ is minimal among all $k$-paths in the equivalence class~$\PP_i'$.
Formally, let~$\PP_i$ be an equivalence class of the $k$-paths in~$G$ and let~$\PP_i'\subseteq \PP_i$ be the corresponding equivalence class of $k$-paths in~$G'$ (recall that~$\PP_i'\ne\emptyset$ for each~$i$ according to Challenge~$1.$).
By definition of~$\PP_i'$, any two paths~$P_1',P_2'\in\PP_i'$ only differ in some frequent vertices.
Let~$P_j'\in\PP_i'$ be a $k$-path and $F_j$ be the sequence of frequent vertices appearing in $P_j'$ according to the corresponding extension~${\rmg}$.
Then, $\order(P_j')$ is the sequence of $|F_j|$~indices in which the $x$th entry corresponds to the index of the vertex in~$I'$ that appears at the $x$th position in~$F_j$. 
Note that each vector~$\order$ has the same length for a fixed equivalence class~$\PP_i'$.
Furthermore, observe that for any two distinct $k$-paths $P_1',P_2'\in\PP_i'$ we have~$\order(P_1')\ne \order(P_2')$

Based on~$\order$, we define an ordering~$\triangleleft$ of the $k$-paths in~$\PP_i'$.

\begin{definition}
Let $P_1',P_2'\in\PP_i'$.
We write~$\order(P_1')\triangleleft\order(P_2')$ if and only if~$\order(P_1')$ is lexicographically smaller than~$\order(P_2')$.
We say~$P_1'\triangleleft P_2'$ if and only if~$\order(P_1')\triangleleft\order(P_2')$.
\end{definition}

Informally speaking, $\order(P_1')$ is lexicographically smaller than $\order(P_2')$ if the sequence of indices appearing in $\order(P_1')$ is lexicographically smaller than the sequence of indices appearing in $\order(P_2')$.
Clearly,~$\triangleleft$ is a total ordering of the $k$-paths in~$\PP_i'$.
This allows us to define the suitable $k$-path of each equivalence class.

\begin{definition}
A path $P_i' \in {\PP}_i'$ is \emph{suitable} if and only if $P_i'$ is the minimal path of~$\PP_i'$ with respect to $\triangleleft$.
\end{definition}

Since~$\triangleleft$ is a total ordering, the minimal path is well-defined and unique.

\paragraph{Check whether a $k$-path $P_i'\in\PP_i'$ is suitable.}
Let~$P_i\in\PP_i'$ with ordering~$\order(P_i)$ of its frequent vertices.
We have to check whether there is another $k$-path~$P_i'\in\PP_i'$ such that~$\order(P_i')\triangleleft \order(P_i)$.
Informally, we try to find~$P_i'$ by replacing the frequent vertices of~$P_i$ by frequent vertices with smaller indices in~$G'$ in increasing order of positions.
Recall that each connected component of~$G[I]$ (and thus also~$G[I']$) is frequent or rare.
In the following we do \emph{not} replace each frequent vertex of~$P_i$ individually, instead we replace consecutive sequences of frequent vertices in~$P_i$ (recall that they either have length~$1$ or~$2$).
With this intuition, our next lemma illustrates a proof how we can check in polynomial time whether a given $k$-path $P_i$ is suitable or not.

\begin{lemma}
\label{lemma:algo-suitable-check}
Given a $k$-path $P_i \in \PP_i'$ of $G'$, there is an $\OO(|X|\cdot n)$-time algorithm that correctly decides whether $P_i$ is suitable or not.
\end{lemma}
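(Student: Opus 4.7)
The plan is to greedily check, block by block, whether $P_i$'s choice of frequent component is lexicographically smallest at each position. First I would decompose $P_i$ into its \emph{blocks}: maximal runs of frequent vertices in $P_i$, each flanked by vertices of $X \cup \RR$ (or located at an endpoint of $P_i$). By construction each block has exactly $1$ or $2$ vertices and realizes a connected component of $G'[I']$; for every block $B_j$ I record its flanking vertices $v_\ell, v_r \in X$ together with its type $t \in \{0,1,2\}$ (or the single flank $v_\ell$ if $B_j$ is at an endpoint).

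Next I would process the blocks in path-order $B_1,\ldots,B_r$. During this pass I maintain the set $U$ of frequent vertices already used by $P_i$ in blocks $B_1,\ldots,B_{j-1}$. For the current block $B_j$ I compute the candidate set $C_j$ of components in $G'[I']$ that can fill it (common neighbors of $v_\ell, v_r$ for $t=1$; $(v_\ell,v_r)$-connecting $2$-components for $t=2$; component-partners of a neighbor of $v_\ell$ for $t=0$), restrict $C_j$ to candidates disjoint from $U$, and locate the one with smallest index under the fixed ordering of $I'$. If this minimum is strictly smaller than the component $P_i$ uses at $B_j$, I return \emph{not suitable}; otherwise I add $P_i$'s block-$B_j$ vertices to $U$ and continue. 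If no block yields an improvement, I return \emph{suitable}. The running time is immediate: computing $C_j$, filtering against $U$, and extracting the minimum takes $\OO(n)$ per block, and there are $\OO(|X|)$ blocks, giving $\OO(|X|\cdot n)$ overall.

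The substantive correctness claim is that $P_i$ is suitable iff the greedy reports no improvement. The ``if'' direction is immediate, since matching the minimum at every block means $P_i$ coincides with the greedy outcome, which is the lex-minimum element of $\PP_i'$. For ``only if'', suppose a strictly smaller $P^* \in \PP_i'$ exists; let $j$ be the first block where $P^*$ and $P_i$ differ. Then $P^*$ agrees with $P_i$ on $B_1,\ldots,B_{j-1}$, so its choice at $B_j$ is a valid element of $C_j \setminus U$ with strictly smaller index than $P_i$'s; hence the greedy's minimum at $B_j$ is also strictly smaller, and the algorithm reports \emph{not suitable}. The hard part, and the step where the parameter $p = 2|X|+3$ is essential, is showing that when the greedy detects improvement at $B_j$, some lex-smaller path actually lives in $\PP_i'$, i.e., the partial substitution can be completed. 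I would prove this by induction on $j' \geq j$: at block $B_{j'}$, the set of already-used frequent vertices has size at most $2(|X|+1) = 2|X|+2$ (since a $k$-path contains at most $|X|+1$ blocks of size at most $2$), whereas $|C_{j'}| \geq p = 2|X|+3$ by the construction of $G'$, so at least one candidate component remains available. This is exactly the pigeonhole argument already used for the prolongation property in \Cref{lemma:diss-prolongation-property}, and it guarantees that the greedy's substitution extends to a full path $P^* \in \PP_i'$ with $P^* \triangleleft P_i$.
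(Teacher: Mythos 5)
Your proposal is correct and follows essentially the same route as the paper: process blocks (maximal runs of frequent vertices) left to right, at each block compare $P_i$'s first vertex against the minimum-index candidate not already used, and invoke the prolongation property (pigeonhole on $p = 2|X|+3$ versus at most $2(|X|+1)$ blocked vertices) to certify that a detected improvement always extends to a full $k$-path in $G'$. The only cosmetic difference is that the paper phrases the iteration over positions of $\order(P_i)$ rather than explicitly over blocks, but it then immediately restricts attention to the first vertex of each consecutive frequent run, which is exactly your block decomposition.
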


\begin{proof}	
Let $P_i$ be a $k$-path in $G'$ and $P_i \in \PP_i'$ and let $F_i\subseteq\FF\subseteq I$ denotes the set of frequent vertices appearing in $P_i$ and let~$\order(P_i)$ be the sequence of~$F_i$ as the appear in~$P_i$.
Consider position~$j$ of~$\order(P_i)$, that is, the index of the vertex of~$F_i$ appearing at the $j$th position in $\order(P_i)$.
Additionally, we assume that we already verified that there is no lexicographically smaller $k$-path~$P_i'\in\PP_i'$ having a vertex with smaller index in~$I'$ in a position of~$\order(P_i)$ preceding~$j$.
Furthermore, let~$p_j$ be the corresponding $j$th vertex in $F_i$ according to~$\order(P_i)$ (in other words, the $j$th frequent vertex in~$P_i$).
Suppose that~$v_1\in V(P_i)\cap X$ is the vertex of~$P_i$ preceding~$p_j$ (recall that we want to replace a consecutive sequences of frequent vertices in~$G[I']$ in a single step and that all vertices of a connected component in~$G[I]$ are either all rare or all frequent).
Also, let~$v_2\in V(P_i)\cap X$ be the first vertex in the modulator~$X$ in~$P_i$ after~$p_j$.
Note that~$v_2$ is the vertex appearing directly after $p_j$ or  the vertex of~$P_i$ appearing two positions after~$p_j$ and this is determined by the ${\rmf}_{P_i}$.
In contrast, $v_1$ is the vertex of the modulator appearing in~$P_i$ such that~$p_j \in \FF$ is the vertex appearing directly after~$v_1$.
Here, the other option that~$p_j$ appears two positions after~$v_1$ is not possible since we replace a consecutive sequence of frequent vertices in a single step.
Also note that~$v_1$ or~$v_2$ might not exists if~$P_i$ starts/ends with frequent vertices; in particular if~$k\le 2$ both~$v_1$ and~$v_2$ might not exists.
Without loss of generality, assume these both~$v_1$ and~$v_2$ exist.
This implies that~$v_1$ and~$v_2$ are  $t$-frequent vertices for some $t \in [1, 2]$

Now, let~$U_j$ be the set of sequences of frequent vertices (which all have length 1 or 2 determined by the structure of~$P_i$) which can be plugged in for vertex~$p_j$ (and possibly also for its frequent successor in~$P_i$).
Note that if~$v_2$ is the successor of~$p_j$ then~$U_j$ is precisely the set of common frequent neighbors of~$v_1$ and~$v_2$ which are not already used in a position proceeding~$j$.
Formally, $U_j=\{w:w\in (N_{G'}(v_1)\cap N_{G'}(v_2)\cap \FF)\setminus (\bigcup_{j'<j}U_{j'})\}$ where~$U_{j'}$ is the set of frequent vertices used in position~$j'$ in~$\order(P_i)$.
%\todo[inline]{Diptapriyo: Got confused with $v_r, v_1, v_2$. I think we always consider the pair $v_1, v_2 \in X$ in these lines.}
Otherwise, if~$v_2$ is not the successor of~$p_j$ then~$U_j$ is defined similarly; $U_j$ is the set of all frequent connected components (having size~$2$) in~$G[I']$ such that one vertex is a neighbor of~$v_1$ and the other is a neighbor of~$v_2$ and none of these vertices is already contained in~$U_{j'}$ for any~$j'<j$.

For completeness, note that if~$v_1$ or~$v_2$ does not exist, then the definition of $U_j$ is analogous.
For example, if $v_1$ exists but $v_2$ does not, then $p_j$ is either the last vertex or the penultimate vertex of $P_i$ such that the vertex next to $p_j$ appears in the same component of $G[I']$ that is frequent.
If~$p_j$ is the last vertex of $P_i$, then $U_j$ is the set of frequent vertices adjacent to $v_1$ but not in $U_{j'}$ for any $j' < j$.
In case, $p_j$ is the penultimate vertex of $P_i$, then $U_j$ is the set of frequent components that are adjacent to $v_1$ but not in $U_{j'}$ for any $j' < j$.
If $v_1$ does not exist, but $v_2$ exists, then $p_j$ is either the first vertex of $P_i$ or the second vertex of $P_i$. The definition of $U_j$ is analogous.
If both $v_1, v_2$ do not exist, then  $k\le 2$.
Then, then~$U_1$ is any path of length~$t$ in~$G[I']$.

From the prolongation property (\Cref{lemma:diss-prolongation-property}), we on the one side conclude that~$U_j$ without the already used frequent vertices is non-empty, and on the other hand that for each position after~$j$ in~$\order(P_i)$ we can also find frequent vertices, that is, there are frequent vertices that can be used in the position $j$ of $\order(P_i)$.
This guarantees us that this partial solution, where we only plugged in frequent vertices until position~$j-1$, can always be extended to some $k$-path of~$G'$:
Note that since~$V(G')\in\OO(|X^3|)\in\OO(n)$ and each connected component in~$G[I']$ has size at most~$2$, we have~$U_j\in\OO(|X|^3)$ and~$U_j$ can also be computed in $\OO(|X|^3)\in\OO(n)$~time in both cases (whether the sequence has length~$1$ or~$2$).
Now, we simply have to check whether~$p_j$ has the smallest index in~$G'$ among all vertices in~$U_j$ (we check the first vertex in each sequence).
If~$p_j$ is the smallest, then we cannot replace~$p_j$ by a vertex with smaller index in~$G'$ and hence we consider the next position~$j+1$ in $\order(P_i)$.
Otherwise, if~$p_j$ is not the smallest, that is, there is a vertex~$q_j$ which has smaller index, then by the choice of the number of frequent vertices, we know that there is still a $k$-path and we have verified that~$P_i$ is \emph{not} suitable.
Since we have to invoke this test at most~$|X|+1$~times, we can check in $\OO(|X|\cdot|X|^3)\in\OO(|X|\cdot n)$~time whether~$P_i$ is suitable or not.
\end{proof}

\paragraph{Solution-Lifting Algorithm.}
Suppose we are given some $k$-path~$P'\in\PP_i'$.
First we invoke \Cref{lemma:algo-suitable-check} to check in $\OO(|X|^4)\in\OO(|X|\cdot n)$-time whether~$P'$ is suitable or not.
If~$P'\in\PP_i'$ is \emph{not} suitable, then we only output~$P'$ and no other $k$-path.
Clearly, this case is doable with constant delay.

It remains to consider the case that~$P'$ is suitable.
Similar to the proof ideas of \Cref{lemma:algo-suitable-check}, we again replace each consecutive sequences of frequent vertices in~$P'$ in a single step.
The enumeration is done in two phases. 
In Phase~1, we fix the smallest index~$i$ of~$\order(P')$ which is replaced by a frequent non-kernel vertex.
In Phase~2, we replace the remaining frequent vertices of~$P'$.
More precisely, in Phase~2.1, we replace the frequent vertices of~$P'$ in~$\order(P')$ starting at position~$i+1$ by frequent vertices (which might be contained in the kernel).
Note that we only require that the first such vertex we replace is a non-kernel vertex, that is, stems from~$V(G)\setminus V(G')$, to ensure that we enumerate a $k$-path which contains at least one non-kernel vertex and that we do not miss any $k$-path having the same signature and the same rare vertices at the same positions in~$P'$.
Then in Phase~2.2, we replace the frequent vertices of~$P'$ in~$\order(P')$ from position~$i-1$ down to position~$1$ by frequent vertices from the kernel.
Here, it is important to only choose kernel vertices to not violate our guess of index~$i$ in Phase~1.
More precisely, the replacement of the $j$th sequence works as follows: 
We insert one possible candidate sequence~$c_j$ and then we replace the next sequence. 
After all $k$-path in~$\PP_i$ having sequence~$c_j$ at position~$j$ are enumerated, we then consider the next candidate sequence~$c_{j'}$ for position~$j$.
This process ensures that all $k$-paths in~$\PP_i\setminus \PP_i'$ are enumerated.
%Note, that in one possibility we do not change any of these vertices to not miss a $k$-path of~$G$.
Finally, we also output path~$P'$.
The following lemma gives both the phases in detail.

\begin{lemma}
\label{lemma:solution-lifting-suitable-path}
Let $P'$ be a suitable $k$-path in $G'$.
There exists an enumeration algorithm that outputs $P'$ itself and all $k$-paths of $G$ that are equivalent to $P'$ and contain at least one vertex from $V(G) \setminus V(G')$ with $\OO(n\cdot |X|)$-delay.
\end{lemma}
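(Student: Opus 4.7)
My plan is to realize the two-phase scheme sketched just before the lemma, then argue correctness, non-duplication, completeness, and the $\OO(n\cdot |X|)$-delay bound.

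I will first have the algorithm simply output $P'$ itself. Next, I iterate through the positions $i \in \{1,\ldots,|\order(P')|\}$; since $P'$ contains at most $|X|+1$ maximal runs of frequent vertices, there are $\OO(|X|)$ choices of $i$. For each fixed $i$, Phase~1 enumerates every frequent sequence $c_i \in U_i$ (defined exactly as in the proof of \Cref{lemma:algo-suitable-check}) whose first vertex lies in $V(G)\setminus V(G')$. For each such $(i, c_i)$, the algorithm then runs a backtracking DFS that produces every completion of the remaining positions of $\order(P')$: Phase~2.1 fills positions $j > i$ in increasing order using \emph{any} sequence in $U_j$ (kernel or non-kernel), and Phase~2.2 fills positions $j < i$ in decreasing order using \emph{only} sequences in $U_j \cap V(G')$. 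At every leaf of this DFS, the algorithm outputs the corresponding $k$-path $P^*$ of $G$ obtained by substituting the chosen sequences into the slots of $\order(P')$ while leaving the signature $\rmf_{P'}$ untouched. The sets $U_j$ at each node can be computed in $\OO(n)$ time exactly as in \Cref{lemma:algo-suitable-check}, and the prolongation property (\Cref{lemma:diss-prolongation-property}), combined with the fact that $|V(G')|\in \OO(|X|^3)$ and that we never use more than $|X|$ already-used sequences, guarantees $U_j \ne \emptyset$ at every intermediate step, so every branch extends to a full $k$-path.

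For correctness of each output, preserving $\rmf_{P'}$ and only replacing frequent slots gives an equivalent $k$-path of $G$; Phase~1 forces at least one vertex of $V(G)\setminus V(G')$ into the path. For non-duplication, I will argue that the outer index $i$ is recoverable from the output as ``the smallest frequent-run position of $\order(P')$ at which the output path uses a vertex outside $V(G')$'': Phase~2.2 uses only kernel sequences at positions $<i$, and Phase~2.1 may reuse kernel sequences at positions $>i$, but that does not affect the minimum. Hence two distinct $i$'s cannot produce the same output, and within a fixed $i$ the DFS is itself injective because two different assignments of sequences to positions differ at some position. For completeness, given any $k$-path $P^{\ast}\in \PP_i \setminus \PP_i'$, let $i^{\ast}$ be the smallest position of $\order(P')$ at which $P^{\ast}$ uses a non-kernel sequence; the outer loop will select $i = i^{\ast}$, Phase~1 will at some point pick the corresponding non-kernel sequence from $U_{i^\ast}$, and the DFS will reach the assignment matching $P^\ast$ because every individual sequence of $P^\ast$ at positions $j\ne i^\ast$ is a valid member of the appropriate candidate set (kernel-only for $j<i^\ast$ since $i^\ast$ was chosen minimal, arbitrary for $j>i^\ast$).

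Finally, I will bound the delay. Between two consecutive outputs the algorithm performs at most $\OO(|X|)$ backtracking steps (either descending or ascending in the DFS, or advancing the outer loops), and each step requires recomputing one candidate set $U_j$ in $\OO(n)$ time and checking one candidate, giving total delay $\OO(n\cdot |X|)$. The initial output of $P'$ is likewise immediate. The subtle point I expect to be the hardest is the duplication argument: one must verify that both Phases~2.1 and~2.2 do not accidentally generate the same path under two different $(i,c_i)$-choices, which is exactly why Phase~2.2 is restricted to kernel sequences; I will make this explicit by the ``smallest non-kernel position'' invariant described above.
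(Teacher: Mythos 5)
Your proposal is correct and follows essentially the same approach as the paper's proof: a two-phase DFS in which Phase~1 fixes the leftmost position $i$ to receive a non-kernel candidate, Phase~2.1 fills positions $j>i$ with arbitrary candidates, Phase~2.2 fills positions $j<i$ with kernel-only candidates, uniqueness is argued via the ``smallest non-kernel position'' invariant, and the prolongation property guarantees each partial assignment extends to a full $k$-path, yielding the $\OO(n\cdot|X|)$ delay. The paper additionally spells out the explicit backtracking order $\pi=(1,\ldots,i-1,s,\ldots,i)$, but that is only a concrete implementation of your DFS.
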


\begin{proof}
Let~$s$ be the number of consecutive sequences of frequent vertices in~$P'$ and let~$g_i$ the $i$th such sequence. Recall that~$s\le|X|+1$ and~$g_i$ has length~1 or~2.
By definition, the vertex~$v^i_1$ proceeding the first vertex in~$g_i$ and the vertex~$v^i_2$ following the last vertex of~$g_i$ are both from the modulator~$X$.
Note that for~$g_1$ ($g_s$) vertex~$v^1_1$ ($v^s_2$) might not exists if~$P'$ starts with~$g_1$ (ends with~$g_s$). 

Now, we perform Phase~1:
For each~$i\in[s]$ we replace~$g_i$ by a sequence of frequent non-kernel vertices.
To do so, we find in $\OO(n)$-time all candidate consecutive sequences~$c_i$ of frequent non-kernel vertices of the same length as~$g_i$ such that the first vertex of~$c_i$ is adjacent to~$v^i_1$ and the last vertex of~$c_i$ is adjacent to~$v^i_2$.
Note that if one of~$v^i_1$ or~$v^i_2$ does not exist, then the corresponding adjacency requirement is not needed.
For example, if both~$v^i_1$ and~$v^2_i$ do not exists, then if~$g_i$ has length~1, then it is any non-kernel vertex and if~$g_i$ has length~2, then it is any size~2 component of~$G[I]$ not containing any kernel vertex. 
Afterwards, for each such candidate~$c_i$ we replace~$g_i$ with~$c_i$ to obtain a new $k$-path~$P_i'$ which is the input of Phase~2.
Clearly, $P_i'$ is a $k$-path in~$G$ but not in~$G'$.
Note that when we created path~$P_i'$ we directly go to Phase~2.
In this phase we output all $k$-paths of~$G$ which are not $k$-paths of~$G'$ which contain sequence~$g_j$ at th $j$th sequence.
Afterwards, we compute the next such path with Phase~1.

Now, we are in a position to start the Phase~2. In this phase, we replace the remaining sequences~$g_j$ with~$j\ne i$.
The Phase~2.1 works as follows.
We basically do the same as in Phase~$1$; but now the vertices of the candidates for~$c_j$ can also be contained in the kernel, that is, also the current candidate of~$P_i'$ used in this position is allowed.
But all vertices used to replace sequences~$g_{j'}$ for~$j'\in[i,j-1]$ cannot be used anymore to ensure that no vertex appears more than once in the path.
More precisely, we first consider all possibilities for candidates~$c_j$ where~$c_j$ is any sequence of vertices from~$G[I]$ which can be inserted into~$g_j$ (except the already used vertices) and create a new $k$-path~$P_j'$ for each such possibility. 
Here, we assume that all candidates have some arbitrary but fixed ordering in which they are processed; this can be achieved by the labels of the vertices.
Again, after we created path~$P_j'$, we increase increase~$j$ by one and do the same, until~$j=s$.
Note that since we replace~$g_j$ by all possible candidates~$c_j$ (except the already used vertices), the resulting vertex sequence might contain some vertices twice and it might thus be not a valid $k$-path.
This is not an issue, since Phase~2 ensures that the vertices appearing twice are replaced by different vertices which are not used in the sequence.
More precisely, our algorithm ensures that each replacement~$c_{j'}$ for some sequence~$g_{j'}$ (for any~$j'\in[s]$) uses different vertices than all previous replacement.
Consequently, if all $s$~positions are considered it is guaranteed that the sequence is a $k$-path.

Phase~2.2 is almost identical:
The only difference is that all candidate sequences~$c_j$ are only allowed to consist of kernel vertices which are not already used to replace some other sequences~$g_{j'}$ to not violate our guess of Phase~1 and to ensure that each vertex appears at most once in the path.
After all $s$~sequences have been replaced by suitable candidates, we output the resulting $k$-path.

Next, we explain the backtracking once a $k$-path~$P$ is enumerated.
Consider the sequence~$\pi:=\{1,2,\ldots, i-2,i-1,s,s-1,\ldots,i+1,i\}$, that is, the reverse ordering of the indices considered in both phases to replace the sequences.
For any~$i\in[s]$, let~$c_i$ be the $i$th sequence of~$P$ which was used to replace sequence~$g_i$ of~$P'$.
Furthermore, let~$C_i$ be the set of all candidate sequences which can be used for~$g_i$ and let~$\phi_i$ be the ordering of~$C_i$ which is used in the corresponding phase.
Now, let~$j$ be the smallest index according to~$\pi$ such that~$c_j$ is not the last sequence of~$C_j$ which does not contain any vertices which is used in any~$c_{j'}$ where~$j'>j$.
Note that~$j$ can be determined in $\OO(n\cdot |X|)$~time.
If~$j$ does not exist, then all $k$-paths of~$G$ which are equivalent to~$P'$ which contain at least one non-kernel vertex are enumerated.
Otherwise, we replace~$c_j$ by the next such candidate~$c'_j$ of~$C_j$ according to~$\phi_j$ which does not use any vertices of any~$c_{j'}$ for some~$j'>j$.
Now, we restart Phase~$2$ with the $j-1$th entry of~$\pi$.
In other words, we reset~$\phi_{j'}$ to the first sequence in~$C_{j'}$ for any entry~$j'$ of~$\phi$ which corresponds to a smaller index than~$j$ in~$\pi$.

We next show that for any guess we make in any phase, there is always a $k$-path in~$G$ which respects all these guesses, that is, there is a $k$-path for any possibility of replacing the current candidate sequence~$g_j$ of~$P'$ by any candidate sequence~$c_j$ which is not already used.
More precisely, we consider index~$j\in[s]$ and show that there exists at least one candidate~$c_j$.
Recall that the corresponding sequence~$g_j$ of~$P'$ is frequent.
The prolongation property ensures that independently of which vertices are used to replace the sequences~$g_z$ with~$z\ne j$, there exists at least one candidate~$c_j$ which is not hit by the other replacements.
Thus, there exists always at least one candidate.
Hence, each branch we create leads to at least one $k$-path.
Note that each of these $k$-paths contains at least one vertex from~$V(I\setminus I')$ because of our replacement in Phase~1.

It remains to analyze the delay of this algorithm.
In Phase~$1$, for some given index~$i$ we can check in $\OO(n)$~time whether there exists an appropriate non-kernel vertex. 
If yes, we do this replacement and continue with Phase~2, and if not we consider the next index.
Since there are $\OO(|X|)$~choices for~$i$, Phase~1 requires at most $\OO(n\cdot |X|)$~time to terminate or to provide a valid candidate for Phase~2.
Let~$P$ be a $k$-path which is an input for Phase~2.
For each~$g_j$ there exist there are at most $\OO(n)$~candidates, that is, all vertices from~$I$ (or~$I'$ for the indices before the chosen index of Phase~1).
Thus, in $\OO(n)$~time the next candidate for~$c_j$ can be determined or we can end this branch of the enumeration.
Since the number of indices is bounded by~$|X|+1$, we obtain an overall delay of~$\OO(n\cdot |X|)$.
\end{proof}

%\paragraph{Main Theorem.}
After each of the above mentioned four challenges have been resolved, we are ready to finally prove the correctness of the solution-lifting algorithm and henceforth show the main result for the parameter dissociation number~${\diss}$.

{\thmTwo*}

%\begin{theorem}
%\label{theorem:k-path-diss-enum-kernel}
%{\enumkpath} parameterized by ${\diss}$ admits a $\OO({\diss}\cdot n)$-delay enumeration kernelization with $\OO({\diss}^3)$ vertices.
%\end{theorem}

\begin{proof}
Our enumeration kernelization has two parts: the first part is the kernelization algorithm exploiting the marking scheme and the second part is the solution-lifting algorithm.
The fact that the kernel has $\OO({\diss}^3)$~vertices follows directly from our marking scheme (\Cref{obs-kernel-diss}).
Using \Cref{lemma:algo-suitable-check}, given a $k$-path $P'$ of $G'$, we can check whether it is a suitable $k$-path or not in $\OO(n\cdot |X|)$-time.
If $P'$ is a suitable $k$-path, then due to \Cref{lemma:solution-lifting-suitable-path}, we can justify the delay bound of~$\OO({\diss}\cdot n)$.

It remains to show that each $k$-path~$P$ of~$G$ is enumerated exactly once.
We distinguish the cases whether~$P$ is also a $k$-path in~$G'$ or not.
If~$P$ is also a $k$-path in~$G'$, then our algorithm enumerates~$P$ exactly once:
each $k$-path of~$G'$ is enumerated exactly once when~$P$ is the input of the corresponding call of the solution-lifting algorithm and all other $k$-paths which are enumerated contain at least one vertex of~$V(G)\setminus V(G')$ (this is ensured by the fourth challenge).

Hence, it remains to consider the case that~$P$ is \emph{not} a $k$-path in~$G'$.
We begin by showing that~$P$ is enumerated at least once.
Let~$\rmf_{P}$ be the signature of~$P$ and let~$\rmg(P)$ be the corresponding extension using the frequent vertices~$\FF_P$.
According to the definition of~$P$ we have~$\FF_P\setminus \FF'\ne\emptyset$ where~$\FF'$ are the frequent vertices in~$G'$.
Let~$f\in \FF_P\setminus\FF'$ be the vertex with lowest index according to~$\rmg(P)$.
According to challenge~$1$, there exists a $k$-path~$\widehat{P}$ in~$G'$ which is suitable and equivalent to~$P$, that is,~$P$ and~$\widehat{P}$ only differ in some frequent vertices.
Consider the solution-lifting algorithm with the choice of~$\widehat{P}$ and the choice that~$\widehat{f}$, the frequent vertex at the same position as~$f$ in~$\widehat{P}$, is the (frequent) vertex with lowest index in~$\widehat{P}$ which is replaced by the (frequent) non-kernel vertex~$f$.
From now on, the solution-lifting algorithm chooses the frequent vertices in~$P$ to replace the corresponding frequent vertices in~$\widehat{P}$.
Thus,~$P$ is enumerated at least once.

Second, we show that~$P$ is enumerated at most once.
Assume towards a contradiction that~$P$ is enumerated at least twice.
Since the solution-lifting algorithm outputs only $k$-paths having non-kernel vertices for suitable $k$-paths of~$G'$ and since for each such path only some frequent vertices are exchanged, we conclude that both enumerations of~$P$ result from the same $k$-path~$\widehat{P}$ of~$G'$ which is suitable and equivalent to~$P$.
Furthermore, both times, the same first frequent vertex of~$\widehat{P}$ needs to be replaced by the corresponding vertex of~$P$ in Phase~1.
For Phase~2, consider the step of the algorithm where~$P$ is enumerated the first time and let~$f$ be the last vertex which was replaced to obtain~$P$ and assume~$f$ is at position~$j$.
Since~$f$ is only plugged once into~$j$ as ensured by our back-tracking procedure and in each further iteration of the algorithm in each position the next candidate is used (according to the given fixed order), we conclude that~$P$ was enumerated only once.
\end{proof}

Note that the same approach also works for the  larger parameter vertex cover~${\vc}$ which  we  studied in  \Cref{sec:k-path-vc}.
This  would then result in a  kernel having $\OO({\vc}^3)$ vertices which is  larger than the $\OO({\vc}^2)$~vertex  kernel which we presented in \Cref{thm:k-path-vc-result}.
Note, however,  that  the delay of \Cref{thm:k-path-vc-result} is larger than the delay  of  $\OO({\vc}\cdot n)$ of  the adaption of the  above  approach for dissociation number.

\subsection{Extension to Larger Component Size}

Our technique to obtain the enumeration kernel for the dissociation number is not limited to this specific parameter.
In fact, this technique also works works for the smaller parameter $r$-component order connectivity~$r$-${\coc}$.
This parameter is the size of a minimal modulator such that the number of vertices in each remaining component of is at most~$r$, where~$r$ is some fixed constant.
Clearly, the dissociation number corresponds to the special case~$r=2$.
Next, we argue which adaptions of the technique for the dissociation number are necessary to obtain an analog result for the $r$-${\coc}$~number.

Initially, we use an $(r+1)$-approximation (instead of a 3-approximation) to obtain a minimal $r$-${\coc}$~set.
In the marking scheme we again partition the vertices in~$I=V(G)\setminus X$ into \emph{rare} and \emph{frequent} vertices.
Now, we set~$p=r\cdot |X|+1$ instead of~$p=2\cdot |X|+1$.
Also, for each two vertices in the modulator~$X$ we perform the marking process for each~$i\in[r]$ (instead of~$i\in[2]$) to mark these vertices are $i$-rare or $i$-frequent, respectively.
Note that verifying whether there is a path of length~$i\in[r]$ in a  connected component having $r$~vertices  requires $\OO(r^i)$~time.
Thus, this check requires $\OO(2^r)$~time.
Then, the resulting kernel~$G'$ consists of~$X$, all rare vertices, and up to $p$~connected components containing the internal vertices of the paths which make each two vertices~$x_1,x_2\in X$ $i$-frequent for~$i\in[r]$.
The single neighbors are treated similar.
Analog to \Cref{obs-kernel-diss}, this marking scheme has the result that~$G'$ has size~$\OO(r^2\cdot |X|^3)=\OO(|X|^3)$ since~$r$ is a constant.
For  the  running time, observe  that for each  of  the  $|X|^2$~pairs  of  vertices  from  the  modulator and each  choice  of~$i\in[r]$ we  need $\OO(2^r\cdot n)$~time to check whether this  pair is  $i$-frequent.
Furthermore,~$G'$ is an induced subgraph of~$G$ and thus each $k$-path of~$G'$ is also a $k$-path of~$G$.
Again, the choice of~$p$ guarantees us the \emph{prolongation} property, with the only distinction that now~$t\in[r]$ instead of~$t\in[2]$.

For the solution-lifting algorithm, we exploit the same definitions of the mappings~${\rmf}$ and~${\rmg}$.
Also, the definition of equivalent paths is analog.
Furthermore, the solutions and proofs to the four challenges is similar, the only difference is that we need to consider the case~$t\in[r]$ instead of~$t\in[2]$.
Now the analysis of the running time of challenge~3 and~4 is slightly more comprehensive: whenever we checked all possibilities for the first vertex in some sequence~$g_i$; for example for the vertex with minimal index in~$P$ which is a non-kernel vertex, we now need to find a path of length up to~$r$ in a connected component of size at most~$r$.
Via brute-force such a path can be found in $\OO(2^r)$~time.
Thus, the running time for challenges~3 and the delay of challenge~4 increases to~$\OO(|X|\cdot |X|^3\cdot 2^r)=\OO(n\cdot |X|\cdot 2^r)$.
Note that both times are still polynomial since~$r$ is a constant.

Hence, we obtain the following.

{\corTwo*}
%\begin{corollary}
%\label{cor:k-path-r-coc-enum-kernel}
%{\enumkpath} parameterized by $r$-${\coc}$ for constant~$r$ admits a $\OO(r$-${\coc}\cdot n\cdot 2^r)$-delay enumeration kernelization with $\OO((r$-${\coc})^3)$ vertices.
%Furthermore, the kernel  can  be computed  in  $\OO(2^r\cdot  r\cdot (r$-${\coc})^2\cdot n)$~time.
%\end{corollary}

Observe that the factor of~$2^r$ in  the running time  of  the  kernelization is unavoidable since we need to  solve the  problem  of finding a path of length~$i\in[r]$ in  components  of size~$r$.
Recall that for  \textsc{Long-Path}, according  to \Cref{prop:lower-bound-max-cp}, there is  no  polynomial  kernel  for  $\mcp$, the size of a largest connected component, unless  {\nka}.
Note that~$\mcp$ is  a  larger parameter than the  vertex integrity  and  thus  \textsc{Long-Path} has no  polynomial kernel  for the  vertex integrity unless {\nka}.
Thus, it  is  unlikely that  \Cref{cor:k-path-r-coc-enum-kernel}  can be lifted  to  the  smaller  parameter vertex  integrity.

\subsection{Extension to Other Path and Cycle Variants}

We now extend our positive result of {\enumkpath} to {\enumkcycle} and the variants where all paths/cycles of length \emph{at least}~$k$ need to be outputted.
In all adaptions we exploit that each path of length at least~$k$, has length at most~$(r+1)\cdot (|X|+1)$: the path can contain at most all vertices from the modulator and the vertices of at most $(|X|+1)$~connected components of the remaining graph.
Thus, in the kernelization we can preserve the length of a longest path.
We start with {\enumkpathAll}.

\begin{corollary}
\label{cor-diss-k-path-at-least}
{\enumkpathAll} parameterized by $r$-${\coc}$ for constant~$r$ admits a $\OO(r$-${\coc}\cdot n\cdot 2^r)$-delay enumeration kernelization with $\OO((r$-${\coc})^3)$ vertices.
Furthermore, the kernel  can  be computed  in  $\OO(2^r\cdot  r\cdot (r$-${\coc})^2\cdot n)$~time.
\end{corollary}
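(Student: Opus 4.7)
The plan is to adapt the $r$-$\coc$ kernelization for {\enumkpath} (Corollary~\ref{cor:k-path-r-coc-enum-kernel}) essentially verbatim. First I would observe that any path of length at least~$k$ in~$G$ contains at most $r$-${\coc}$ modulator vertices and at most $(r\text{-}\coc)+1$ connected components of~$G[I]$, so such a path has length at most $(r+1)(r\text{-}\coc+1)$. Hence the longest-path length is bounded by the parameter, and the same marking scheme (declaring every pair $(u,w)\in \binom{X}{2}$ as $i$-rare/$i$-frequent for $i\in[r]$ with threshold $p=r\cdot|X|+1$, plus the $0$-rare/$0$-frequent treatment of single modulator vertices) produces a graph $G'$ with $\OO((r\text{-}\coc)^3)$ vertices in $\OO(2^r\cdot r\cdot (r\text{-}\coc)^2\cdot n)$ time, exactly as in the proof of Corollary~\ref{cor:k-path-r-coc-enum-kernel}. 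Since $G'$ is an induced subgraph of~$G$, every long path of~$G'$ is a long path of~$G$.

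Next I would lift the notions of signature, extension, and equivalence to paths of arbitrary length~$\ell\ge k$: for a path~$P=(v_1,\ldots,v_\ell)$ the signature~${\rmf}_P\colon V(P)\cap(X\cup\RR)\to [\ell]$ maps each modulator or rare vertex to its position, and two paths are equivalent iff they have the same length and the same signature. The key structural ingredient, the prolongation property (\Cref{lemma:diss-prolongation-property}), carries over directly because its proof only uses the bound on the number of consecutive frequent sequences, which is still at most $|X|+1$. Likewise, the analogue of \Cref{lemma:dissociation-challenge-1-eqv-class} shows that every nonempty equivalence class of long paths of~$G$ contains at least one long path of~$G'$.

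For the solution-lifting algorithm, given an $\ell$-path $P'$ of~$G'$ with $\ell\ge k$ I would declare $P'$ \emph{suitable} if it is the $\triangleleft$-minimum (in the ordering on extensions defined in \Cref{sec-diss-solve-challenges}) among all $\ell$-paths of~$G'$ equivalent to~$P'$. Suitability is checked by the same routine as in \Cref{lemma:algo-suitable-check}, applied with $\ell$ instead of~$k$, in $\OO(r\text{-}\coc\cdot n)$ time; if $P'$ is not suitable we only output~$P'$. If $P'$ is suitable, I would run the two-phase replacement procedure of \Cref{lemma:solution-lifting-suitable-path} with~$\ell$ in place of~$k$: Phase~1 fixes the leftmost position whose frequent block is replaced by a non-kernel block, and Phase~2 enumerates all other replacements of frequent blocks by compatible blocks (of length up to~$r$) from~$G[I]$, with back-tracking in the same order~$\pi$. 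Enumerating blocks of length up to~$r$ inside a component of size~$r$ costs~$\OO(2^r)$, and the number of sequences to process is $\OO(r\text{-}\coc)$, yielding a delay of $\OO(r\text{-}\coc\cdot n\cdot 2^r)$.

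The main obstacle, as for Corollary~\ref{cor:vc-at-least-k-path}, is ensuring that each long path of~$G$ is output exactly once. I would handle this by the same partitioning argument used in the proof of \Cref{theorem:k-path-diss-enum-kernel}: equivalence classes are now indexed by (length, signature) pairs; within each class the unique suitable kernel path is the only one that triggers the Phase~1/Phase~2 enumeration, while every other kernel long path outputs only itself. The back-tracking order guarantees that inside a single triggered enumeration every non-kernel long path is produced exactly once. Plugging these components together yields the claimed $\OO(r\text{-}\coc\cdot n\cdot 2^r)$ delay and $\OO((r\text{-}\coc)^3)$-vertex kernel, establishing the corollary.
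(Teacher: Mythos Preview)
Your proposal is correct and follows essentially the same approach as the paper: both observe the $(r+1)(|X|+1)$ upper bound on path length, reuse the marking scheme and kernelization of Corollary~\ref{cor:k-path-r-coc-enum-kernel} unchanged, extend the signature/extension/equivalence definitions to $\ell$-paths for every $\ell\ge k$, and run the solution-lifting algorithm per length. The only minor slip is that the suitability check for the $r$-$\coc$ parameter should carry the $2^r$ factor (as in the extension subsection), not just $\OO(r\text{-}\coc\cdot n)$, but this is absorbed into the stated overall delay anyway.
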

\begin{proof}
Because of the above argumentation about the maximal path length, the marking scheme and the subsequent kernelization works analog.
Now, our definitions of signature (\Cref{def-diss-sig}) and extension (\Cref{def-diss-ext}) need to be extended:
We not only define them for $k$-paths, instead we define both of them for $k^*$-paths where~$k\le k^*\le (r+1)\cdot (|X|+1)$.
Afterwards, the proof is analog.
\end{proof}

Next, we adapt our result for enumerating cycles of length~$k$.

\begin{proposition}
\label{prop-diss-k-cycle}
{\enumkcycle} parameterized by $r$-${\coc}$ for constant~$r$ admits a $\OO(r$-${\coc}\cdot n\cdot 2^r)$-delay enumeration kernelization with $\OO((r$-${\coc})^3)$ vertices.
Furthermore, the kernel  can  be computed  in  $\OO(2^r\cdot  r\cdot (r$-${\coc})^2\cdot n)$~time.
\end{proposition}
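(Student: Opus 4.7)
The plan is to adapt the enumeration kernelization for \enumkpath{} parameterized by $r$-${\coc}$ from \Cref{cor:k-path-r-coc-enum-kernel} to the cycle variant, following the same path-to-cycle transition used for vertex cover in \Cref{corollary:vc-k-cycle-result}. First I would apply essentially the same marking scheme as in \Cref{sec:k-path-disoc}: compute a constant-factor approximation $X$ of an optimal $r$-${\coc}$ modulator, partition $I = V(G)\setminus X$ into rare and frequent vertices using the threshold $p = r\cdot |X| + 1$, and build the kernel $G'$ by keeping $X$, all rare components, and at most $p$ representative components per $i$-frequent pair of modulator vertices for each $i \in [r]$. Since a $k$-cycle cannot traverse a pendant vertex, the $0$-rare/$0$-frequent marking used for path endpoints is unnecessary, so the resulting kernel has $\OO((r$-${\coc})^3)$ vertices and can be computed in $\OO(2^r \cdot r \cdot (r$-${\coc})^2 \cdot n)$ time. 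The prolongation property of \Cref{lemma:diss-prolongation-property} transfers unchanged since a $k$-cycle still contains at most $|X|$ vertices of $X$, hence at most $|X|$ consecutive sequences of frequent vertices between them.

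Next, I would redefine the signature of a $k$-cycle $C$ as the mapping $\rmf_C \colon V(C) \cap (X \cup \RR) \to [k]$ sending each such vertex to its position in $C$, normalized to a canonical form to account for cycles' rotational and reflective symmetries. Concretely, fix the starting vertex to be the smallest-indexed element of $V(C) \cap (X \cup \RR)$ and orient so that its neighbor in this set with smaller index lies at position~$2$. Two cycles are then equivalent if their canonical signatures agree; the extension, the equivalence classes $\PP_i$ and $\PP_i'$, the lexicographic order~$\triangleleft$, and the notion of suitable cycle carry over mutatis mutandis from \Cref{sec-diss-solve-challenges}. In particular, the analogue of \Cref{lemma:dissociation-challenge-1-eqv-class} still holds, since every equivalence class can be witnessed in $G'$ by using the prolongation property to swap each used frequent component for a kept representative.

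The solution-lifting algorithm then mirrors \Cref{lemma:solution-lifting-suitable-path}: given a $k$-cycle $C'$ of $G'$, check suitability in $\OO(r$-${\coc}\cdot n)$ time via the canonical-signature analogue of \Cref{lemma:algo-suitable-check}; if $C'$ is not suitable, output only $C'$, and if it is, perform the two-phase replacement of frequent sequences (first fixing the smallest index at which a non-kernel sequence is inserted, then completing the remaining frequent sequences). Each of the at most $|X|+1$ frequent sequences has length at most $r$ and lies inside a component of at most $r$ vertices, so its candidate replacements can be enumerated by brute force in $\OO(2^r)$ time, giving an overall delay of $\OO(r$-${\coc}\cdot n \cdot 2^r)$; the prolongation property guarantees that every partial replacement extends to a valid $k$-cycle.

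The main obstacle is the careful treatment of cyclic symmetries: one must verify that replacing a frequent sequence never alters the canonical-form choice, which holds because that choice depends only on $V(C) \cap (X \cup \RR)$, a set preserved under equivalence. A small additional case to handle is cycles lying entirely inside $G[I]$; such a cycle must stay within a single component of $G[I]$, forcing $k \le r$, and these can be enumerated directly in $\OO(r!\cdot n)$ time, which is $\OO(n)$ for constant $r$. With these adaptations, correctness and the claimed size, time, and delay bounds follow exactly as in \Cref{theorem:k-path-diss-enum-kernel} and \Cref{cor:k-path-r-coc-enum-kernel}.
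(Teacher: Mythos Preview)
Your proposal is correct and follows essentially the same approach as the paper: reuse the marking scheme and kernelization from the path version, drop the endpoint ($0$-rare/$0$-frequent) marking since cycles have no pendant endpoints, define a canonical signature by fixing a start vertex and an orientation, and then run the same suitability check and two-phase solution-lifting algorithm. Your treatment is in fact slightly more careful than the paper's in two places: you make the canonical orientation depend only on $V(C)\cap(X\cup\RR)$ (the paper picks the immediate cycle-neighbor of $v^*$ with smaller index, which could be a frequent vertex), and you explicitly handle the degenerate case of a $k$-cycle lying entirely inside a single component of $G[I]$, which the paper's proof does not mention.
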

\begin{proof}
The marking scheme and the kernelization work analog.
The definition of the signature is slightly adjusted since a cycle has no designated starting vertex:
Let~$C$ be a $k$-cycle and let~$v^*\in C\cap(X\cup\RR)$ be the vertex of the modulator or rare with minimal index.
Clearly, $v^*$ has 2~neighbors~$v_1$ and~$v_2$ in~$C$.
Let~$v'\in\{v_1,v_2\}$ be the vertex with minimal index.
Now,~$C$ is transformed into a $k$-path~$P_C$ which starts with~$v^*$, has~$v'$ as its second vertex and uses the same vertices of~$C$ in the same order.
Afterwards, the definition of suitable and of the equivalence classes is analog.

In the solution-lifting algorithm, we do not have the case of single neighbors anymore; they occurred if the path started/ended with vertices from~$I$.
Afterwards, the proof is analog.
\end{proof}

Now, the result for {\enumkcycleAll} follows by combining the adaptations for paths of length at least~$k$ (\Cref{cor-diss-k-path-at-least}) and $k$-cycles (\Cref{prop-diss-k-cycle}).

\begin{corollary}
\label{cor-diss-k-cycle-at-least}
{\enumkcycleAll} parameterized by $r$-${\coc}$ for constant~$r$ admits a $\OO(r$-${\coc}\cdot n\cdot 2^r)$-delay enumeration kernelization with $\OO((r$-${\coc})^3)$ vertices.
Furthermore, the kernel  can  be computed  in  $\OO(2^r\cdot  r\cdot (r$-${\coc})^2\cdot n)$~time.
\end{corollary}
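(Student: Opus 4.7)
The plan is to combine the two adaptations that were carried out separately for \Cref{cor-diss-k-path-at-least} (paths of length at least $k$) and \Cref{prop-diss-k-cycle} ($k$-cycles). The key observation is that every cycle of length at least $k$ has length at most $(r+1)\cdot(|X|+1)$, since such a cycle visits at most $|X|$ vertices of the modulator together with the vertices of at most $|X|+1$ connected components of $G[I]$, each of which has at most $r$ vertices. Hence, the marking scheme described in \Cref{theorem:k-path-diss-enum-kernel} together with the generalization to constant~$r$ from \Cref{cor:k-path-r-coc-enum-kernel} still preserves at least one cycle from every equivalence class, in the same way that it preserved paths, because the prolongation property is needed only up to length~$r$ between two modulator vertices and this is independent of whether we are dealing with paths or cycles.

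First, I would run exactly the marking scheme of \Cref{cor:k-path-r-coc-enum-kernel}: compute an $(r+1)$-approximate $r$-\coc{} set~$X$, classify pairs $(u,w)\in\binom{X}{2}$ as $i$-rare or $i$-frequent for every $i\in[r]$ using the threshold $p=r\cdot|X|+1$, and keep all rare components plus at most~$p$ frequent components per pair. The size bound of $\OO((r\text{-}{\coc})^3)$ vertices and the computation time of $\OO(2^r\cdot r\cdot(r\text{-}{\coc})^2\cdot n)$ follow verbatim from the discussion preceding \Cref{cor:k-path-r-coc-enum-kernel}, and cycles enjoy the same prolongation property as paths since the argument only uses that a cycle of length at most $(r+1)(|X|+1)$ contains at most $|X|+1$ maximal frequent subsequences.

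Next, I would adjust the notion of signature so that it handles both cycles and the ``at least~$k$'' variant simultaneously. For a cycle~$C$ of length $k^*\in[k,(r+1)\cdot(|X|+1)]$, I pick as canonical starting vertex the vertex $v^*\in V(C)\cap(X\cup\RR)$ of smallest index and as its successor the neighbor $v'$ of $v^*$ in~$C$ of smaller index, exactly as in the proof of \Cref{prop-diss-k-cycle}; this turns $C$ into a linear sequence~$P_C$, and I then define the signature $\rmf_C:=\rmf_{P_C}$ and the extension of $C$ as the extension of $P_C$. Two cycles are equivalent if their associated linearizations have the same signature, and suitability is transferred through $P_C$ as well. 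Because each equivalence class of cycles of length $k^*$ now corresponds one-to-one with an equivalence class of suitable ``linearized'' sequences, \Cref{lemma:dissociation-challenge-1-eqv-class} and \Cref{lemma:algo-suitable-check} apply without change, noting only that $k$ in those statements is replaced by the actual length $k^*$ of the current cycle.

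Finally, the solution-lifting algorithm is essentially the one of \Cref{lemma:solution-lifting-suitable-path} run for every possible length $k^*\in[k,(r+1)\cdot(|X|+1)]$: for a suitable linearized cycle we replace the maximal consecutive frequent subsequences by other candidate frequent subsequences of the same length, where each candidate is now a path on up to $r$ vertices inside a component of $G[I]$ and can be enumerated in $\OO(2^r)$ time. The only issue to verify is that no cycle is output twice even though each cycle admits several linearizations; this is ensured by the canonical choice of~$v^*$ and~$v'$, since two different linearizations of the same cycle give different choices of smallest-index modulator/rare vertex (or different smallest-index successor), so they produce different signatures and are processed in different equivalence classes at most once. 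The main obstacle, and the one I would check carefully, is precisely this uniqueness argument: one needs to exclude the degenerate case in which $V(C)\cap(X\cup\RR)=\emptyset$, but then $C$ lies entirely in $G[I]$ and hence has length at most~$r$, so only finitely many cycles need to be enumerated by brute force and can be handled as a separate easy case. The delay bound $\OO(r\text{-}{\coc}\cdot n\cdot 2^r)$ follows exactly as in the proof of \Cref{lemma:solution-lifting-suitable-path} combined with the $2^r$ factor from \Cref{cor:k-path-r-coc-enum-kernel}.
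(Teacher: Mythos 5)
Your overall approach matches the paper's exactly: the paper proves this corollary by a one-sentence combination of \Cref{cor-diss-k-path-at-least} (paths of length at least~$k$) and \Cref{prop-diss-k-cycle} ($k$-cycles), and you reconstruct precisely that combination, fleshing out how the marking scheme, signatures, canonical linearizations, and the length-ranging solution-lifting fit together. Your bound $(r+1)(|X|+1)$ on the cycle length, the reuse of the prolongation property, and the handling of the degenerate case $V(C)\cap(X\cup\RR)=\emptyset$ (brute force on cycles of length $\le r$) are all consistent with what the paper assumes implicitly.

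One sentence of your uniqueness argument is, however, incorrectly reasoned. You write that ``two different linearizations of the same cycle give different choices of smallest-index modulator/rare vertex (or different smallest-index successor), so they produce different signatures.'' This is false: all $2k^*$ linearizations of a fixed cycle~$C$ have the same vertex set, hence the same $v^*\in V(C)\cap(X\cup\RR)$ of minimum index and the same smaller-index neighbor~$v'$. The correct statement is the opposite: the canonical choice ensures that each cycle is associated with \emph{exactly one} linearization~$P_C$, namely the one starting at $v^*$ with second vertex~$v'$; non-canonical linearizations are simply never formed. The subtle point that actually needs care (and which neither you nor the terse proof in the paper spells out) is that when the solution-lifting replaces the frequent vertices at positions~$2$ and~$k^*$ of~$P_{C'}$ by non-kernel vertices, the resulting sequence~$P$ need not be the canonical linearization of the cycle~$C_P$ it induces (the replaced position-$2$ vertex might end up having a larger label than the replaced position-$k^*$ vertex), so a constraint on those two endpoints of the enumeration, or a cheap rejection test, is needed to avoid emitting the same cycle from two distinct signatures. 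This is a minor, fixable technicality and does not change the asymptotic delay, but your justification for uniqueness as written would not survive scrutiny.
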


\section{Parameterization by Distance to Clique}
\label{sec:k-path-clique}

%\paragraph%%Clique Deletion Set size

\iflong
We present a polynomial-delay enumeration kernel for {\enumkpath} parameterized by the vertex deletion distance~$\cvd(G)$ of the graph~$G$ to a clique.
Again, for an input instance~$(G,k)$ we initially compute a $2$-approximation~$X$ of~$\cvd(G)$ by greedily adding~$2K_1$s to~$X$.
Let~$\ell=|X|$ and $C=V(G)\setminus X$ be the remaining clique.
Again, we assume an arbitrary but fixed ordering~$\prec$ on the vertices of~$C$.
We exploit the ordering~$\prec$ to define a minimal element in each equivalence class of $k'$-paths in~$G'$ which is then used in the solution-lifting algorithm.

In sharp contrast to the vertex cover number~$\vc(G)$ (see \Cref{sec:k-path-vc}), and more general the dissociation number~$\diss(G)$ (see \Cref{sec:k-path-disoc}), the vertex deletion distance~$\cvd(G)$ to a clique does \emph{not} provide any bound for the length~$k$ of a path.
Hence, in the kernelization we modify both the graph~$G$ \emph{and} the parameter~$k$.

\paragraph{Strategy.}
Initially, we employ a marking scheme similar to the one we used for the dissociation number~$\diss(G)$ (see \Cref{sec:k-path-disoc}).
More precisely, we mark sufficiently many vertices in~$C$ as \emph{rare} and \emph{frequent} such that all possible structures of a $k$-path can be preserved in the kernel.
Our kernelization comprises two steps:
First, we remove unmarked vertices of the clique and simultaneously we decrease~$k$ by the number of removed vertices.
This ensures that the new parameter~$k'$ is bounded in~$\ell\in\OO(\cvd(G))$.
Now, the remaining clique might still be unbounded in~$\cvd(G)$.
Thus, in a second step, we remove unmarked vertices of the clique until also the size of the remaining clique is bounded in some function depending only on~$\cvd(G)$.
It is necessary to do this in two steps since we have to ensure that all signatures are preserved ($k'$ does not get too small) and that we do not obtain an instance with no $k'$-path ($k'$ is larger than the number of vertices in the kernel). 
%In contrast to~$\vc(G)$ (see \Cref{sec:k-path-vc}) and~$\diss(G)$ (see \Cref{sec:k-path-disoc}) the decrease of~$k$ is now necessary to preserve all structures of $k$-paths.
%The decrease of~$k$ is also the reason for the above case distinction whether~$k$ is much smaller than~$|C|$: if yes,~$k'$ would be smaller than zero.

Our solution lifting algorithm uses similar ideas as the solution lifting algorithm for the dissociation number~$\diss(G)$ (see \Cref{sec:k-path-disoc}):
Initially, we define the \emph{signature}~$\rmf_{P'}$ of each $k'$-path~$P'$ in~$G'$.
Afterwards, based on these signatures we define equivalence classes and define a suitable $k'$-path of each equivalence class.
Then, we present a base version of the solution lifting algorithm \texttt{SolLift}.
This base version is adapted to obtain an algorithm \texttt{SolLiftTest} to check whether an $k'$-path is suitable or not.
Afterwards, \texttt{SolLift} and \texttt{SolLiftTest} are used to verify that the equivalence classes of the reduced graph~$G'$ are identical to those of the input graph~$G$.
In contrast to~$\diss(G)$, we cannot pick the suitable path~$P'$ in~$G'$ of an equivalence class to enumerate all $k$-paths in~$G$ having the same signature and output the path itself if~$P'$ is not suitable:
$P'$ has only length~$k'$ and thus we have to add $k-k'$~vertices to~$P'$.
We need to do this in a clever way to ensure that each $k$-path in~$G$ is only enumerated once.
For this we use two further variants \texttt{SolLiftS} for suitable paths and \texttt{SolLiftNS} for non-suitable paths.
With each call of \texttt{SolLiftNS} we output \emph{exactly one} $k$-path having signature~$\rmf_{P'}$, and with \texttt{SolLiftS} we output all $k$-paths having signature~$\rmf_{P'}$ not outputted by any call of \texttt{SolLiftNS}.
%%%%%%%%%%%%%%%%%%%%%%%%%%%%%%%%%%%%%%%%%%%%%%%%%%%%%%%%%%%%%

\else

\fi

\iflong
\subsection{Marking Scheme and Kernelization}

\paragraph{Marking Scheme.}
Similar to~$\diss(G)$ we mark sufficiently many common neighbors in~$C$ of two vertices in the modulator~$X$, and we also mark sufficiently many neighbors in~$C$ of each vertex in~$X$.
Let~$x,y\in X$ and~$p=2(\ell+1)$.
If~$|N(x)\cap N(y)\cap C|\le p$, then mark all these vertices as \emph{rare}, and otherwise, mark~$p$ of these vertices (arbitrarily) as \emph{frequent}.
We treat the case~$N(x)\cap C$ similar: if~$|N(x)\cap C|\le p$, then mark all these vertices as \emph{rare}, and otherwise, mark~$p$ of these vertices (arbitrarily) as \emph{frequent}.
Now, by~$\mathcal{R}$ we denote all vertices marked as \emph{rare} for at least one vertex~$x\in X$ or for at least one pair of vertices~$x,y\in X$ and by abusing notation we denote all vertices in~$\mathcal{R}$ as \emph{rare}.
Again, by abusing notation, all remaining marked vertices are denoted by~$\mathcal{F}$ and called \emph{frequent}.
The number of vertices in~$\mathcal{R}\cup\mathcal{F}$ can be bounded as for~$\diss(G)$ (see \Cref{obs-kernel-diss}).

\begin{observation}
\label{obs-mark-cvd}
The above-described marking scheme marks at most $2p\cdot\ell^2\le 4\cdot (\ell+1)^3$~vertices as rare or frequent.

\end{observation}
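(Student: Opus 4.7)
The plan is a direct counting argument: I simply sum, over all ``reasons to mark'', the number of vertices that any single reason can contribute.

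First I would observe that the marking scheme has exactly two kinds of triggers. The first kind is one per unordered pair $\{x,y\} \in \binom{X}{2}$: it contributes the common neighborhood $N(x) \cap N(y) \cap C$ if this set has size at most $p$, and otherwise exactly $p$ arbitrarily chosen vertices. In either case the number of vertices contributed by this trigger is at most $p$. The second kind is one per single vertex $x \in X$: it contributes $N(x) \cap C$ if this set has size at most $p$, and otherwise exactly $p$ arbitrarily chosen vertices, so again at most $p$ vertices. Since a vertex of $C$ ends up in $\mathcal{R} \cup \mathcal{F}$ only if it is contributed by at least one such trigger, the size of $\mathcal{R} \cup \mathcal{F}$ is bounded by the sum of the contributions.

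Next I would add these up. There are $\binom{\ell}{2}$ pair-triggers and $\ell$ singleton-triggers, so
\[
|\mathcal{R} \cup \mathcal{F}| \;\le\; p \cdot \binom{\ell}{2} + p \cdot \ell \;=\; p \cdot \tfrac{\ell(\ell+1)}{2} \;\le\; 2p\cdot \ell^2.
\]
Substituting $p = 2(\ell+1)$ gives $2p\cdot \ell^2 = 4(\ell+1)\cdot \ell^2 \le 4(\ell+1)^3$, which is the claimed bound.

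There is no real obstacle here; the only thing to be a bit careful about is to make sure we are not double-counting in a way that would \emph{break} the upper bound (double-counting only helps us), and to verify that no additional vertex enters $\mathcal{R} \cup \mathcal{F}$ outside of the two trigger types listed. Since the marking scheme has been defined explicitly above by those two cases, this is immediate.
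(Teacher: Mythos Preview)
Your argument is correct and is essentially the same counting argument as the paper's own proof: both sum the at-most-$p$ contributions over all pair-triggers and all singleton-triggers in $X$ and bound the result by $2p\ell^2 \le 4(\ell+1)^3$. The only cosmetic difference is that you use $\binom{\ell}{2}$ unordered pairs where the paper more loosely uses $\ell^2$.
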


\begin{proof}
For each pair of vertices in the modulator~$X$ at most $p$~common neighbors in the clique are marked as rare or frequent.
This results in at most $p\cdot \ell^2$~marked vertices.
Furthermore, for each vertex in~$X$ at most $p$~neighbors in the clique are marked.
Thus, the claimed bound follows.
\end{proof}

\paragraph{Kernelization.}
We may safely assume that~$k\le |V(G)|=|C|+|X|$ since otherwise there is no $k$-path in~$G$.
Let~$q=\min(|C|-4\cdot (\ell+1)^3,k-4\ell)$. 
We remove $q$~unmarked vertices of~$C$ arbitrarily, resulting in graph~$G''$ with partition~$V(G'')=X\cup C''$.
Simultaneously, we set~$k'=k-q$.
Afterwards, if~$|C''|> 4\cdot(\ell+1)^3$, we remove $|C''|-4\cdot(\ell+1)^3$~unmarked vertices of~$C''$ arbitrarily, resulting in graph~$G'$ with partition~$V(G')=X\cup C'$.
Note that for the vertices in~$C'$, we maintain their relative ordering implied by the labels of all vertices in~$C$.
Observe that for each removal there are sufficiently many unmarked vertices in~$C$ or~$C''$ since by \Cref{obs-mark-cvd} there are at most $4\cdot(\ell+1)^3$~vertices marked in~$C$.

\begin{lemma}
\label{lem-kernel-dist-clique}
The graph~$G'$ has $\OO(\cvd(G)^3)$~vertices and~$k'\in\OO(\cvd(G))$.
\end{lemma}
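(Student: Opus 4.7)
The plan has two parts, one for each claimed bound. The first part, bounding $|V(G')|$, is essentially bookkeeping on top of \Cref{obs-mark-cvd}. The second part, bounding $k'$, requires a small case split on which term realizes the minimum defining $q$.

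For the vertex count, I invoke \Cref{obs-mark-cvd} to conclude that at most $4(\ell+1)^3$ vertices of $C$ are marked as rare or frequent. Both kernelization steps remove only unmarked clique vertices, and the second step continues until $|C'| \le 4(\ell+1)^3$. Since $X$ is preserved entirely, I obtain
\[
|V(G')| \;=\; |X| + |C'| \;\le\; \ell + 4(\ell+1)^3 \;\in\; \OO(\ell^3).
\]
Because $X$ is a $2$-approximation of a minimum clique-vertex-deletion set, $\ell \le 2\,\cvd(G)$, and the claimed bound $|V(G')| \in \OO(\cvd(G)^3)$ follows.

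For the bound on $k'$, I unfold $k' = k - q$ with $q = \min(|C|-4(\ell+1)^3,\, k-4\ell)$ and case-split on which argument attains the minimum. If $q = k - 4\ell$, then directly $k' = 4\ell \in \OO(\cvd(G))$, which is exactly the claimed bound. Otherwise $q = |C| - 4(\ell+1)^3$, so the first step already drives the remaining clique down to $|C''| = 4(\ell+1)^3$ (the second step is vacuous); using the standing assumption $k \le |V(G)| = |C| + \ell$, this case yields
\[
k' \;=\; k - |C| + 4(\ell+1)^3 \;\le\; \ell + 4(\ell+1)^3.
\]

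The main obstacle is tightening this second case from a cubic to a linear bound on $k'$. I expect to recover $\OO(\cvd(G))$ by exploiting an additional invariant coupling the two steps: the inequality $|C|-4(\ell+1)^3 \le k-4\ell$ occurs exactly when the available supply of unmarked clique vertices is already saturated, so the prolongation-style argument used later for signatures should force $k$ itself to be within $\OO(\ell)$ of $|C|$. If no such refinement can be supplied, the statement should instead read $k' \in \OO(\cvd(G)^3)$, which is immediate from $k' \le |V(G')|$ and the first bound.
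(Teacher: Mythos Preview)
Your treatment of the vertex count is correct and matches the paper's argument.

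For the bound on $k'$, you have spotted a genuine issue. The paper's own proof asserts $k' = \min(k, 4\ell)$, but this fails in general: if $|C| \le 4(\ell+1)^3$ then the first argument in the minimum defining $q$ is nonpositive, so effectively $q = 0$ and $k' = k$; taking $k$ close to $|C| + \ell$ with $|C|$ near $4(\ell+1)^3$ gives $k'$ of order~$\ell^3$, not~$\ell$. Your hoped-for prolongation-style refinement cannot rescue the linear bound, because in this regime nothing forces $k$ to be within $\OO(\ell)$ of $|C|$; the clique is simply already small enough that the first reduction step is vacuous. So the lemma as stated is too strong.

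Your fallback is the correct repair: the statement should read $k' \in \OO(\cvd(G)^3)$, and this is all the rest of the paper actually needs (only a polynomial bound on the kernel size is required). The inequality $k' \le |V(G')|$ does need one more line than you indicate, however. The first step preserves the difference $|V| - k$, so $k' \le |V(G'')|$, but the second step may shrink $G''$ further without adjusting $k'$. The missing observation is that the second step activates only when $q < |C| - 4(\ell+1)^3$; since $q$ is (the nonnegative part of) the minimum of $|C|-4(\ell+1)^3$ and $k-4\ell$, this forces $q = \max(0, k-4\ell)$ and hence $k' = \min(k, 4\ell) \le 4\ell \le |V(G')|$. If the second step is vacuous, then $|V(G')| = |V(G'')| \ge k'$.
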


\begin{proof}
Our kernelization algorithm ensures that~$|C'|\le 4\cdot(\ell+1)^3$.
Since~$X$ is a 2-approximation of~$\cvd(G)$, we have~$|X|\le 2\cdot\ell$.
Thus,~$G'$ has at most $2\cdot\ell+4\cdot(\ell+1)^3\in\OO(\ell^3)$~vertices.
Furthermore, note that~$k'=\min(k,4\cdot\ell)$.
Hence,~$k'\le 4\cdot\ell$.
The statement follows since~$\ell\le 2\cdot\cvd(G)$.
\end{proof}

\subsection{Signature and Equivalence Classes}

Similar to \Cref{sec:k-path-vc,sec:k-path-disoc} we define the mappings~${\rmf}$ and~${\rmg}$, and the notation of equivalent $k$-paths to establish relations between paths in~$G$ and $k'$-paths in~$G'$.

\paragraph{Definition of Signature.}

Here, the signature~$\rmf_{P}$ is not simply the mapping of vertices in the modulator and of the rare vertices to their index in~$P$ (as in \Cref{sec:k-path-disoc}).
We need a more refined definition since now~$k$ might be much bigger than~$k'$.
For this, we replace consecutive subsequences of vertices in~$P$ where no vertex is in the modulator~$X$ or rare by new symbols or remove it. 
Next, we provide a formal definition.
For this, let~$g_C$ and~$g_N$ be two new \emph{gap-symbols}.
These two symbols are used to encode different connection types between two vertices of the modulator to the clique.
More precisely, symbol~$g_C$ is used if between two vertices of the modulator in~$P$ there is \emph{exactly one} common neighbor of these two vertices.
Symbol~$g_N$ is used to encode the other case, that is, there are many vertices between two vertices from the modulator in~$P$ (including a rare vertex).
An example is shown in \Cref{fig-clique-example-signature}.

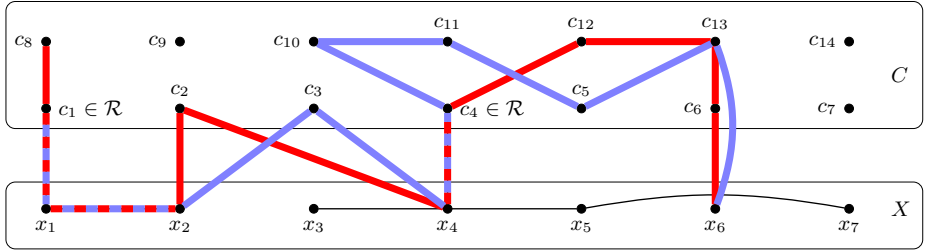
\begin{figure}[t]
\centering
\begin{adjustbox}{width=\textwidth}
\begin{tikzpicture}
\draw[rounded corners] (-0.6, -0.6) rectangle (13.1, 0.4) {};
\node[label=below:{$x_1$}](X1) at (0, 0) [shape = circle, draw, fill=black, scale=0.11ex]{};
\node[label=below:{$x_2$}](X2) at (2, 0) [shape = circle, draw, fill=black, scale=0.11ex]{};
\node[label=below:{$x_3$}](X3) at (4, 0) [shape = circle, draw, fill=black, scale=0.11ex]{};
\node[label=below:{$x_4$}](X4) at (6, 0) [shape = circle, draw, fill=black, scale=0.11ex]{};
\node[label=below:{$x_5$}](X5) at (8, 0) [shape = circle, draw, fill=black, scale=0.11ex]{};
\node[label=below:{$x_6$}](X6) at (10, 0) [shape = circle, draw, fill=black, scale=0.11ex]{};
\node[label=below:{$x_7$}](X7) at (12, 0) [shape = circle, draw, fill=black, scale=0.11ex]{};
\node[label=right:{$X$}](X) at (12.4, 0) []{};

\draw[rounded corners] (-0.6, 1.2) rectangle (13.1, 3.1) {};
\node[label=right:{$c_1\in\RR$}](C1) at (0, 1.5) [shape = circle, draw, fill=black, scale=0.11ex]{};
\node[label=above:{$c_2$}](C2) at (2, 1.5) [shape = circle, draw, fill=black, scale=0.11ex]{};
\node[label=above:{$c_3$}](C3) at (4, 1.5) [shape = circle, draw, fill=black, scale=0.11ex]{};
%\node[label=above:{$c_3\in\FF$}](C31) at (4, 1.6) {};
\node[label=right:{$c_4\in\RR$}](C4) at (6, 1.5) [shape = circle, draw, fill=black, scale=0.11ex]{};
\node[label=above:{$c_5$}](C5) at (8, 1.5) [shape = circle, draw, fill=black, scale=0.11ex]{};
\node[label=left:{$c_6$}](C6) at (10, 1.5) [shape = circle, draw, fill=black, scale=0.11ex]{};
\node[label=left:{$c_7$}](C7) at (12, 1.5) [shape = circle, draw, fill=black, scale=0.11ex]{};

\node[label=left:{$c_8$}](C8) at (0, 2.5) [shape = circle, draw, fill=black, scale=0.11ex]{};
\node[label=left:{$c_9$}](C9) at (2, 2.5) [shape = circle, draw, fill=black, scale=0.11ex]{};
\node[label=left:{$c_{10}$}](C10) at (4, 2.5) [shape = circle, draw, fill=black, scale=0.11ex]{};
\node[label=above:{$c_{11}$}](C11) at (6, 2.5) [shape = circle, draw, fill=black, scale=0.11ex]{};
\node[label=above:{$c_{12}$}](C12) at (8, 2.5) [shape = circle, draw, fill=black, scale=0.11ex]{};
\node[label=above:{$c_{13}$}](C13) at (10, 2.5) [shape = circle, draw, fill=black, scale=0.11ex]{};
%\node[label=above:{$c_{13}\in\FF$}](C131) at (10, 2.6) {};
\node[label=left:{$c_{14}$}](C14) at (12, 2.5) [shape = circle, draw, fill=black, scale=0.11ex]{};
\node[label=right:{$C$}](C) at (12.4, 2) []{};

\path [-,line width=0.2mm](X1) edge (X2);
\path [-,line width=0.2mm](X3) edge (X5);
\path [-,line width=0.2mm, bend right=10](X7) edge (X5);

\path [-,line width=0.2mm](X1) edge (C8);
\path [-,line width=0.2mm](X2) edge (C2);
\path [-,line width=0.2mm](X4) edge (C2);
\path [-,line width=0.2mm](X2) edge (C3);
\path [-,line width=0.2mm](X4) edge (C3);
\path [-,line width=0.2mm](X4) edge (C4);
\path [-,line width=0.2mm](X6) edge (C6);
\path [-,line width=0.2mm, bend right=20](X6) edge (C13);

\draw[dashed,dash pattern=on 5pt off 5pt,red,line width=3pt]  (C1) to (X1);
\draw[dashed,dash pattern=on 5pt off 5pt,blue!50,dash phase=5pt,line width=3pt]  (C1) to (X1);

\draw[dashed,dash pattern=on 5pt off 5pt,red,line width=3pt]  (X1) to (X2);
\draw[dashed,dash pattern=on 5pt off 5pt,blue!50,dash phase=5pt,line width=3pt]  (X1) to (X2);

\draw[dashed,dash pattern=on 5pt off 5pt,red,line width=3pt]  (X4) to (C4);
\draw[dashed,dash pattern=on 5pt off 5pt,blue!50,dash phase=5pt,line width=3pt]  (X4) to (C4);

\draw[red,line width=3pt]  (C1) to (C8);
\draw[red,line width=3pt]  (X2) to (C2);
\draw[red,line width=3pt]  (C2) to (X4);
\draw[red,line width=3pt]  (C4) to (C12);
\draw[red,line width=3pt]  (C12) to (C13);
\draw[red,line width=3pt]  (C13) to (C6);
\draw[red,line width=3pt]  (C6) to (X6);

\draw[blue!50,line width=3pt]  (X2) to (C3);
\draw[blue!50,line width=3pt]  (X4) to (C3);
\draw[blue!50,line width=3pt]  (C4) to (C10);
\draw[blue!50,line width=3pt]  (C11) to (C10);
\draw[blue!50,line width=3pt]  (C5) to (C11);
\draw[blue!50,line width=3pt]  (C5) to (C13);
\draw[blue!50,line width=3pt, bend right=20]  (X6) to (C13);

%\draw[black,dotted,line width=1.2] (C1) circle (5.5pt);
%\draw[black,dotted,line width=1.2] (C2) circle (5.5pt);
%\draw[black,dotted,line width=1.2] (C3) circle (5.5pt);
%\draw[black,dotted,line width=1.2] (C4) circle (5.5pt);
%\draw[black,dotted,line width=1.2] (C6) circle (5.5pt);
%\draw[black,dotted,line width=1.2] (C13) circle (5.5pt);

%\node[label=right:{$\in\RR$}](T) at (0, 1.5) {};
%\node[label=right:{$\in\RR$}](T) at (6, 1.5) {};
%\node[label=right:{$\in\FF$}](T) at (2, 1.6) {};
%\node[label=right:{$\in\FF$}](T) at (4, 1.6) {};
%\node[label=right:{$\in\FF$}](T) at (10.1, 1.5) {};
%\node[label=right:{$\in\FF$}](T) at (10.0, 2.5) {};
\end{tikzpicture}
\end{adjustbox}

\caption{A graph~$G$ with clique~$C$ and modulator~$X$ where~$c_1$ and~$c_4$ are rare, and all other clique vertices are frequent. 
The two paths~$P_1=(c_8,c_1,x_1,x_2,c_2,x_4,c_4,c_{12},c_{13},c_6,x_6)$ (in \textcolor{red}{red}) and~$P_2=(c_1,x_1,x_2,c_3,x_4,c_4,c_{10},c_{11},c_5,c_{13},x_6)$ (in \textcolor{blue!50}{blue}) have the same signature~$(c_1,x_1,x_2,g_C,x_4,c_4,g_N,x_6)$.}
\label{fig-clique-example-signature}
\end{figure}

\begin{definition}
\label{def-sig-dist-clique}
For each consecutive subsequence~$(v_i, s,v_{j})$ of~$P$ where~$s=(v_{i+1},\ldots, v_{j-1})$ such that~$v_i,v_{j}\in X\cup\mathcal{R}$ and~$v_q\notin X\cup\mathcal{R}$ for~$q\in[i+1,j-1]$, we do the following:

%for this, we distinguish whether~$v_i$ and~$v_{j}$ are contained in~$X$ or~$\mathcal{R}$.

\begin{enumerate}
\item If~$v_i,v_{j}\in\mathcal{R}$: We remove~$s$ from~$P$, and we do not add any gap-symbols.

\item If~$v_i\in X$ and~$v_{j}\in\mathcal{R}$, or~$v_i\in \mathcal{R}$ and~$v_{j}\in X$: We replace~$s$ by~$(g_N)$.
We treat the cases that~$P_i$ starts with a consecutive sequence~$s$ of non-rare clique vertices followed by a vertex from the modulator~$X$ or that~$P_i$ ends with a consecutive sequence~$s$ of non-rare clique vertices after a vertex from the modulator~$X$ analog: we replace~$s$ by~$(g_N)$.

\item If~$v_i,v_{j}\in X$: If~$s$ has length~$1$, we replace~$s$ by~$(g_C)$.
Otherwise, if~$s$ has length at least~$2$, we replace~$s$ by~$(g_Ng_N)$.
Here, it is important to add two symbols~$g_N$ and not only one.
\end{enumerate}

The resulting sequence is denoted as \emph{signature}~$\rmf_{P}$.
\end{definition}

Note that~$\rmf_{P}$ only consists of gap-symbols~$g_N$ and~$g_C$, rare vertices, and vertices from the modulator~$X$.
Also note that~$\rmf_{P}$ can be an empty sequence; this is the case if~$P$ only consists of non-rare clique vertices. 
Observe that each consecutive subsequence of vertices from~$P$ not containing any vertex from~$X\cup\mathcal{R}$ is replaced by a (possible empty) sequence of gap-symbols having the same or smaller length.
Thus,~$\rmf_{P}$ contains at most $|P|$~symbols/vertices.
Furthermore, for each vertex in~$X\cap P$, we add at most $2$~gap-symbols.
Thus,~$\rmf_{P}$ contains at most $2\cdot |X|$~gap-symbols.
Also, note that~$\rmf_{P}$ can be computed in linear time.
The signature of a $k'$-path~$P'$ in~$G'$ is defined analogously.

\paragraph{Equivalent paths and Equivalence classes.}
Again, we exploit~$\rmf_{P}$ and~$\rmg_{P}$ to define equivalence classes of $k$-paths in~$G$ and $k'$-paths in~$G'$.

\begin{definition}
\label{def-equivalence-dist-clique}
Let~$P_1$ and~$P_2$ be two $k$-paths.
$P_1$ and~$P_2$ are \emph{equivalent} if and only if~$\rmf_{P_1}=\rmf_{P_2}$.
\end{definition}

This definition also applies for two $k'$-paths and a combination of both.
Analog to \Cref{sec:k-path-disoc}, based on the notion of equivalent paths we now define equivalence classes:
Let~$\mathcal{P}$ be the set of all $k$-paths in~$G$.
Furthermore, let~$\mathcal{P}=\mathcal{P}_1\cup\mathcal{P}_2\cup\ldots\cup\mathcal{P}_s$ be the partition into equivalence classes where~$s$ is the number of  equivalence classes.
Analog, let~$\mathcal{P'}=\mathcal{P'}_1\cup\ldots\cup\mathcal{P'}_{s'}$ be the equivalence classes of $k'$-paths in~$G'$.

\paragraph{Definition of Extension.}

Next, we define the mapping \emph{extension}~$\rmg(P)$ of the signature~$\rmf_{P}$ of a $k$-path~$P$ in~$G$ or a $k'$-path~$P$ in~$G'$.

\begin{definition}
\label{def-ext-dist-clique}
Let~$P'$ be a $k'$-path in~$G'$. 
A path~$P=\rmg(P')$ is an \emph{extension} of~$P'$ if~$P$ is a $k$-path in~$G$ with~$\rmf_{P}=\rmf_{P'}$.
\end{definition}

An extension in~$G'$ is defined analogously.
Observe that for constructing an extension we need to remove the gap-symbols and add consecutive sequences of (non-)kernel vertices from the clique without violating the constraint that the resulting sequence is a $k$-path in~$G$ (or in~$G'$, respectively) having the same signature.
These properties are ensured by our solution lifting algorithm which first replaces the gap-symbols by (non-)kernel vertices from the clique such that the resulting sequence is a path of length~$|\rmf_{P}|$.
This is always possible by our definition of frequent vertices, as we show.
Afterwards, we only have to add consecutive sequences of (non-)kernel vertices from the clique to obtain a $k$-path.
Note that in both steps, we cannot add rare vertices since they are part of the signature.

\subsection{Challenges}
The basic idea of the solution lifting algorithm is analog to the one for the dissociation number~$\diss(G)$ (see \Cref{sec:k-path-disoc}): for each equivalence class~$\PP_i$ we choose a \emph{suitable} $k'$-path~$P_i'\in\PP_i'$ which is used to enumerate all $k$-paths in~$\PP_i$ with polynomial-delay.
Now, we face more obstacles.
First, it is not clear that there is a one to one correspondence between the equivalence class $\PP$ and~$\PP'$.
Second, since~$k'$ might be smaller than~$k$, we cannot simply output the $k'$-path~$P_i'$ if~$P_i'$ is not suitable.
Instead, we have to output at least one $k$-path of the equivalence class~$\PP_i$ corresponding to the equivalence class~$\PP_i'$ containing path~$P_i'$.
More precisely, we design our solution algorithm such that the algorithm outputs \emph{exactly one} equivalent $k$-path of~$G$ if~$P_i'$ is not suitable.
%\todo[inline]{F: I would remove the following; I added the sentence above.
%{\color{blue} DM: Looks okay to me.}}
Consequently, we cannot output all $k$-paths in~$\PP_i$ if~$P_i'$ is suitable.
Overall, we need to overcome the following challenges (some of which are similar to \Cref{sec:k-path-disoc}):

\begin{enumerate}
\item Verify that there is a one-to-one correspondence between equivalence classes~$\PP_i'$ and~$\PP_i$, that is,
for each $k$-path~$P$ in~$G$ there exists at least one $k'$-path~$P'$ is~$G'$ such that~$\rmf_{P}=\rmf_{P'}$.

\item Define the notation of \emph{suitable}.

\item Check in polynomial time for some $k'$-path~$P_i'\in\PP_i'$ whether~$P_i'$ is suitable or not.

\item If~$P_i'\in\PP_i'$ is \emph{not} suitable, output \emph{exactly one} $k$-path~$P\in\PP_i$ in polynomial time, and otherwise, if~$P_i'$ is suitable, then enumerate all $k$-paths in~$\PP_i$ with polynomial delay without outputting the $k$-paths outputted by non-suitable paths in~$\PP_i'$. 
This ensures that each $k$-path of~$G$ is enumerated exactly once.
\end{enumerate}

\subsection{Resolving the Four Challenges}
\label{sec-clique-solve-challenges}

First, we present our definition of a suitable path.
Second, we present a base variant of the solution lifting algorithm \texttt{SolLift}.
This base version is then adapted to test whether a $k'$-path is suitable.
The actual solution lifting algorithm used to enumerate $k$-paths consists of two further variants of \texttt{SolLift}; one for suitable $k'$-paths, and one for non suitable $k'$-paths.

\paragraph{Definition of suitable.}
Recall that in the kernel we maintain the relative ordering of the vertices~$V(G')$ as they were in~$V(G)$.
We can encode the vertices~$V(G')$ using $\OO(\log ({\cvd(G)}))$~bits and by construction,~$G'$ has $\OO(|X|^3)$~vertices, hence $\OO({\cvd(G)}^3)$~vertices.
Therefore,~$G'$ can be encoded using $\OO({\cvd(G)}^3\log ({\cvd(G)}))$~bits and this assumption is justified.

Let~$P_i'\in\PP_i'$.
By~$\orderrr(P_i')$ we denote the sequence of indices of~$P_i'$.
We also use a second sequence~$\orderr(P_i')$ for our definition of suitable:
We remove all non-rare clique vertices from~$P_i'$ whose predecessor and successor in~$P_i'$ are both from~$C'$.
This results in the sequence~$Q_i'$ and the resulting sequence of indices is denoted by~$\orderr(P_i')$.
Observe that~$Q_i'$ is a $|\rmf_{P_i'}|$-path in~$G'$:
First, as $P_i'$ is a path in~$G'$ and we only remove non-rare vertices from $C'$ whose predecessor and successor are in $C'$, after each removal we still have a path.
Thus,~$Q_i'$ is a path in~$G'$.
Second, observe that both~$Q_i'$ and~$\rmf_{P_i'}$ contain all vertices of~$P_i'\cap(X\cup\RR)$.
Furthermore, since each gap-symbol in~$\rmf_{P_i'}$ has at least one neighbor in $X$ and a gap-symbol is only inserted into~$\rmf_{P_i'}$ for a maximal sequence~$s$ of vertices in~$P_i'$ not containing any vertex of~$X\cup\RR$ such that the predecessor or the successor of~$s$ is from~$X$, we conclude that~$|Q_i'|=|\rmf_{P_i'}|$.
For example, in the example of \Cref{fig-clique-example-signature}, we have~$\orderr(P_i')=(c_1,x_1,x_2,c_3,x_4,c_4,c_{13},x_6)$.

Based on~$\orderr$ and~$\orderrr$, we define a ordering~$\triangleleft$ of all $k'$-paths in~$\PP_i'$.

\begin{definition}
\label{def-ordering-dist-clique}
Let~$P_1',P_2'\in\PP_i'$.
We write~$\orderr(P_1')\triangleleft\orderr(P_2')$ (and~$\orderrr(P_1')\triangleleft\orderrr(P_2')$) if and only if~$\orderr(P_1')$ ($\orderrr(P_1')$) is lexicographically smaller than~$\orderr(P_2')$ ($\orderrr(P_2')$).
We say~$P_1'\triangleleft P_2'$ if and only if $a)$~$\orderr(P_1')\triangleleft \orderr(P_i')$, or $b)$~$\orderr(P_i')=\orderr(P_2')$, and~$\orderrr(P_1')\triangleleft \orderrr(P_2')$.
\end{definition}

Note that~$\orderrr(P_1')= \orderrr(P_2')$ implies~$P_1'=P_2'$.
Thus,~$\triangleleft$ is a total ordering of~$\PP_i'$.
This allows as to define the suitable path of each equivalence class.

\begin{definition}
\label{def-suitable-dist-clique}
A path~$P_i'$ is \emph{suitable} if and only if~$P_i'$ is the minimal path of~$\PP_i'$ with respect to~$\triangleleft$.
\end{definition}

Since~$\triangleleft$ is a total ordering, the minimal path is well-defined and unique.
By~$\pTwo(\PP_i')$ we denote the minimal path of~$\PP_i'$ and by~$\pOne(\PP_i')$ we denote the path corresponding to~$\orderr$ of this minimal path.
Both notations are used in \texttt{SolLiftTest} to check whether a path~$P_i'$ is suitable or not.

\paragraph{Solution Lifting Algorithm.}
Let~$P_i'\in\PP_i'$ be a $k'$-path in~$G'$ with signature~$\rmf_{P_i'}$.
Recall that in order to obtain a $k$-path $P_i$ from $\rmf_{P_i'}$, we need to remove the gap-symbols and add consecutive sequences of vertices from~$C$ which are not rare.
As discussed earlier, ideally one wants to insert each non-rare clique vertex in each gap; but this is not possible:
A vertex from the modulator~$X$ in~$P_i'$ cannot have each vertex from~$C$ as predecessor or successor; only some frequent vertices are possible at this position ensuring the path property.
Because of this technical hurdle, our solution lifting algorithm consists of two phases, which we next describe informally.

\begin{enumerate}[{Phase} 1:]
\item By definition, the predecessor and successor of a gap-symbol~$g_C$ in~$\rmf_{P_i'}$ are two vertices from the modulator~$X$.
Thus, we have to replace~$g_C$ by a common neighbor of these two vertices.
Similar, either the predecessor or the successor (not both, see \Cref{def-sig-dist-clique}) of a gap-symbol~$g_N$ in~$\rmf_{P_i'}$ is from the modulator~$X$ (this is well-defined).
Consequently, we have to replace~$g_N$ by a neighbor of this vertex.
Since we marked sufficiently many vertices as frequent, such replacements are always possible.
Let~$Q_i'$ be the resulting $|\rmf_{P_i'}|$-path.

\item 
Now, we add further non-rare clique vertices to~$Q_i'$ to obtain a $k$-path~$P_i$.
More precisely, we add (possible empty) sequences of non-rare clique vertices between any two clique vertices of~$Q_i'$.
\end{enumerate}

Next, we provide formal details for both phases.

\paragraph{Phase~1.} Recall that the aim of Phase~$1$ is to replace each gap-symbol~$g_N$ or~$g_C$ in~$\rmf_{P_i'}$ by \emph{exactly one} non-rare clique vertex.
Also, recall that the total number of gap-symbols in~$\rmf_{P_i'}$ is at most~$2\cdot |X|<p$; the threshold for marking clique vertices as rare or frequent.
Next, we describe how we replace each gap-symbol~$g_C$ and~$g_N$.
We replace them from lowest to highest index in~$\rmf_{P_i'}$.
Let~$L$ be the current set of non-rare clique vertices we used for this replacement.
Initially,~$L=\emptyset$.

Let~$u,w\in X\cap P_i'$ be the predecessor and successor of~$g_C$, respectively, and let~$D_C=(N(u)\cap N(w)\cap C)\setminus (\mathcal{R}\cup L)$ be the common neighbors of~$u$ and~$w$ in the clique which are not rare and not already used.
For each vertex~$v\in D_C$, we branch in the possibility of replacing~$g_C$ with~$v$.
Then, we add~$v$ to~$L$.
We show that each of these branches leads to at least one valid extension which is a $k$-path of $G$.
For gap-symbol~$g_N$ we proceed similarly:
By definition of~$g_N$, either its predecessor or successor (not both, see \Cref{def-sig-dist-clique}), say~$u$, in~$P_i'$ is from the modulator~$X$.
Let~$D_N=(N(u)\cap C)\setminus(\mathcal{R}\cup L)$ be the non-rare neighbors of~$u$ in the clique which are not already used.
Again, for each~$v\in D_N$, we branch in the possibility of replacing~$g_N$ with~$v$ and we show that each of these branches leads to at least one valid extension, that is, a $k$-path in~$G$.
Afterwards, we add~$v$ to~$L$.

It remains to show that for each gap-symbol there is at least one vertex we can replace this gap-symbol with, that is,~$|D_C|\ge 1$ and~$|D_N|\ge 1$ for each step of the replacement.
%By definition of the gap-symbols~$g_C$ and~$g_N$, there was at least one non-rare vertex~$v$ from the clique~$C$ in the position of that symbol.
%Thus,~$v\in D_C\cup L$ or~$v\in D_N\cup L$, respectively.
By our definition of frequent vertices, we have~$|D_C\cup L|\ge p$ and~$|D_N\cup L|\ge p$.
Since~$L$ contains at most one vertex for each gap-symbol in~$\rmf_{P_i'}$, we obtain that~$|L|\le 2\cdot|X|<p$.
Hence,~$|D_C|\ge 1$ and~$|D_N|\ge 1$ for each branching.
Thus, each branching leads to a $|\rmf_{P_i'}|$-path in~$G$ containing no gap-symbols.

%\todo[inline]{Diptapriyo: it seems that the case of replacing two gap symbols $g_N g_N$ case is missing in both the phases.}

\paragraph{Phase~2.} 
Recall that the aim of this phase is to add non-rare clique vertices to~$Q_i'$ to obtain $k$-paths of~$G$ having the same signature as~$P_i'$.
For this, we add (possible empty) sequences of (non-)kernel clique vertices to~$Q_i'$ in the following cases: $a)$ between two consecutive clique vertices from~$Q_i'$, and $b)$ in the front/end of~$Q_i'$ if the first/last vertex of~$Q_i'$ is from the clique.
For this, let~$p_1,p_2,\ldots, p_t$ with~$t\le |Q_i'|+1$ be the positions of~$Q_i'$ such that~$p_i$ is the $i$th position in~$Q_i'$ fulfilling~$a)$ or~$b)$.

Let~$L=C\setminus(\mathcal{R}\cup V(Q_i'))$ be the set of non-rare clique vertices not already contained in~$Q_i'$.
Note that~$|L|\ge k-|Q_i'|$, that is, there are sufficiently many vertices to add:
By definition of our kernel we have~$|C\setminus C'|\ge k-k'$ and by definition of path~$Q_i'$ we know that all vertices in~$V(P_i')\setminus V(Q_i')$ are available in~$L$ and contained in~$C'$.

Intuitively, for each~$Z\subseteq L$ with~$|Z|=k-|Q_i'|$ and each ordering~$\pi$ of~$Z$ we enumerate all $k$-paths which can be obtained by adding the vertices in~$Z$ according to ordering~$\pi$ to~$Q_i'$ in the positions~$p_1,\ldots, p_t$, that is, into each~$p_i$ we insert a (possibly empty) subsequent sequence of~$\pi(Z)$ such that each vertex in~$Z$ is assigned to exactly one position.

Now, we provide a precise description how Phase $2$ works algorithmically.
The algorithm starts with the lexicographically smallest subset~$L'$ of~$L$ having $k'-|Q_i'|$~vertices and in each further iteration the algorithm considers the (unique) successor subset~$L'$ according to this lexicographical order.
Furthermore, the algorithm starts with the lexicographical smallest ordering~$\pi_1$ of~$L'$ and all orderings are sorted according to the lexicographical ordering of the resulting sequence.
In the following, let~$Z$ be such a set.
Consequently, in each further iteration the algorithm chooses the (unique) successor ordering~$\pi_j$.
Note that all possibilities how many vertices of~$Z$ in order~$\pi$ are added into each positions~$p_1,\ldots , p_t$ can be described by a vector~$\vec{v}$ having~$t$ entries with values from~$0$ to~$|Z|$ such that the sum of all entries is~$|Z|$.
Let~$\mathcal{V}$ be the set of all such vectors and let~$\chi$ be an arbitrary but fixed ordering of~$\mathcal{V}$.
Now, for fixed vertices~$Z$ and fixed ordering~$\pi$ the algorithm adds~$Z$ in order~$\pi$ into positions~$p_1,\ldots, p_t$ for all possibilities~$\mathcal{V}$ in order of~$\chi$.

Note that if~$\rmf_{P_i'}$ is empty, then there is exactly one position~$p_1$.
The resulting sequence~$P$ has length~$k$.
In fact,~$P$ is also a $k$-path in~$G$:~$Q_i'$ is a path in~$G$ and between each two clique vertices of~$Q_i'$ (and start/end if first/last vertex is from the clique), we add a consecutive sequence of clique vertices.
Thus, after each addition the resulting sequence is still a path.
We denote this algorithm as \texttt{SolLift}.

\paragraph{Check whether~$P_i'\in\PP_i'$ is suitable.}
Intuitively, the verification whether~$P_i'\in\PP_i'$ is suitable is done by modifying the two phases of \texttt{SolLift} as follows:
First, only kernel vertices from~$C'$ can be added instead of all clique vertices in~$C$.
Second, the output is a $k'$-path of~$\PP_i'$, and not all $k$-paths in~$\PP_i$.
For this, in both phases and each step of them only the vertex with smallest index is chosen.
This algorithm is referred to as \texttt{SolLiftTest} and the unique and well-defined output of \texttt{SolLiftTest} with input~$P_i'\in\PP_i'$ is a $k'$-path~$P_i^*\in\PP_i'$.
Next, we provide a formal description.

\begin{lemma}
\label{lemma:algo-suitable-check-clique}
Given a $k$-path $P_i \in \PP_i'$ of $G'$, there is an $\OO(n\cdot\cvd(G))$-time algorithm \texttt{SolLiftTest} that correctly decides whether $P_i$ is suitable or not.
\end{lemma}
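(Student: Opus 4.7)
The plan is to implement \texttt{SolLiftTest} as a deterministic variant of \texttt{SolLift} restricted to $C'$ that, at every nondeterministic branching point, takes the kernel choice of smallest index. Running this produces a uniquely determined $k'$-path $P_i^* \in \PP_i'$; by \Cref{def-suitable-dist-clique}, it suffices to check whether $P_i^* = P_i$. The construction proceeds in two phases mirroring \texttt{SolLift}, followed by a final comparison.

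In Phase 1, iterate over the gap-symbols of $\rmf_{P_i}$ from left to right, while maintaining the set $L$ of already-used non-rare kernel clique vertices. For each $g_C$ with neighboring modulator vertices $u,w$, pick the smallest-index element of $(N(u)\cap N(w)\cap C')\setminus(\RR\cup L)$; for each $g_N$ with modulator-neighbor $u$, pick the smallest-index element of $(N(u)\cap C')\setminus(\RR\cup L)$. The marking scheme guarantees each such candidate set is nonempty (there are at least $p > 2|X|$ frequent candidates), so Phase 1 terminates and yields a $|\rmf_{P_i}|$-path $Q_i^*$ whose index sequence equals $\pOne(\PP_i')$. Each of the $\OO(\cvd(G))$ iterations scans $\OO(n)$ neighbors, for a subtotal of $\OO(n\cdot\cvd(G))$.

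In Phase 2, extend $Q_i^*$ to a $k'$-path by inserting non-rare clique vertices of $C'\setminus V(Q_i^*)$ into the admissible slots $p_1,\ldots,p_t$. Because $C$ is a clique and each admissible slot is flanked only by clique vertices, any unused non-rare vertex of $C'$ may be placed at any slot without violating adjacency. Thus to minimize $\orderrr$ lexicographically, make a single left-to-right pass over the target positions of the final $k'$-path: at each position, compare the next vertex of $Q_i^*$ with the smallest unused padding vertex, and place the smaller-index option, provided enough padding capacity remains in the remaining slots to still consume all $k' - |Q_i^*|$ padding vertices. A standard exchange argument shows this pass produces the lexicographically smallest completion, which is exactly $\pTwo(\PP_i')$. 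Sorting the $\OO(\cvd(G))$ candidate padding vertices once and streaming through the $\OO(\cvd(G))$ positions gives Phase 2 in $\OO(n\cdot\cvd(G))$ time.

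Finally, compare $P_i$ and $P_i^*$ position by position in $\OO(\cvd(G))$ time and return \emph{suitable} iff they agree; the answer is correct by the definition of suitability together with the uniqueness of $\pTwo(\PP_i')$. The main subtlety is justifying that both greedy passes actually compute $\pTwo(\PP_i')$: Phase 1 is correct because each greedy commitment only fixes the next coordinate of $\orderr$, while the marking-threshold $p$ ensures that every partial choice still admits a completion; Phase 2 correctness is the clique-based exchange argument above. Summing up, the total running time is $\OO(n\cdot\cvd(G))$.
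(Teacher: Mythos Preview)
Your proposal is correct and follows essentially the same approach as the paper: both run a deterministic, smallest-index variant of \texttt{SolLift} restricted to $C'$ to produce the unique minimal path $\pTwo(\PP_i')$, and then compare to $P_i$. Phase~1 is identical in both. For Phase~2, the paper processes the slots $p_1,\ldots,p_t$ in order and at slot $p_j$ inserts all remaining padding vertices whose index is below that of the right-flanking vertex $a_j$; you instead do a position-by-position merge of the $Q_i^*$ sequence with the sorted padding stream, subject to a capacity constraint. Both implementations compute the lexicographically smallest $\orderrr$ among completions with $\orderr=\pOne(\PP_i')$, so the difference is purely presentational. One small imprecision in your write-up: your greedy rule ``compare the next $Q_i^*$ vertex with the smallest padding vertex and place the smaller'' must additionally refuse to place padding when the current position is not inside an admissible slot (e.g., when the most recently placed vertex is in $X$); your capacity clause only guards the opposite direction. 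This is easy to patch and does not affect the argument.
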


\begin{proof}
Let~$\rmf_{P_i'}$ be the signature of~$P_i'$.
Recall that in Phase~1 of \texttt{SolLift} each gap-symbol is replaced by each possibility of a fitting non-rare clique vertex of~$C'$ (in this application only the set~$C'$ is used instead of~$C$).
Phase~$1$ is modified as follows: 
Let~$L_j\subseteq C'\setminus \RR$ be the set of vertices which can be used to replace the $j$th gap-symbol~$g_j$ (recall that they are ordered according to their position in~$\rmf_{P_i'}$).
We replace~$g_j$ by the vertex with smallest index in~$L_j\setminus(\bigcup_{j'<j}L_{j'})$, that is, we use the vertex with smallest index in~$L_j$ not already used to replace an earlier gap-symbol.
Let~$Q_i'$ be the resulting $|\rmf_{P_i'}|$-path in~$G'$.
%By~$\orderr(P_i')$ we denote the sequence of indices of~$Q_i'$.

We now argue that~$Q_i'=\pOne(\PP_i')$ via induction on the indices:
Assume that~$Q_i'$ and~$\pOne(\PP_i')$ coincide on the first $(j-1)$~vertices.
Now, consider the $j$th vertex~$v_j$ of~$Q_i'$.
If~$v_j\in X\cup \RR$, then since~$\rmf_{Q_i'}=\rmf_{\pOne(\PP_i')}$, $v_j$ is also the $j$th vertex of~$\pOne(\PP_i')$.
Otherwise, by the definition of \texttt{SolLiftTest}, vertex~$v_j$ is the vertex with minimal index in~$C'\setminus\RR$ which is not already used in a smaller index of~$Q_i'$.
Hence,~$v_j$ is also the $j$th vertex of~$\pOne(\PP_i')$.
Thus,~$Q_i'=\pOne(\PP_i')$.

Phase~2 is modified as follows:
Let~$L=C'\setminus(\RR\cup V(Q_i'))$ be the non-rare kernel vertices of the clique which are not already used in path~$Q_i'$.
Now, by definition,~$P_i^*$ is the lexicographically smallest $k'$-path which can be outputted by Phase~$2$ of \texttt{SolLift}.
More precisely, let~$p_1,\ldots, p_t$ be the positions in which vertices of~$L$ can be inserted into~$Q_i'$ to obtain an $k'$-path, according to Phase~2 of \texttt{SolLift}.
By~$a_j$ we denote the index of the vertex in~$Q_i'$ which is right of position~$p_j$ (if~$p_t$ is the last position in~$Q_i'$, that is, if no vertex to the right exists, then we set~$a_t=\infty$).
Let~$L_j=\{v\in L: \text{ index of } v \text{ is smaller than } a_j\}\setminus\{\bigcup_{j'<j}L_{j'}\}$ be the vertices which which have a smaller index than~$a_j$, the index of the right vertex of position~$p_j$ without the vertices which can be inserted in a smaller position.
Note that some~$L_{j_2}$ might be empty, if for example~$a_{j_1}<a_{j_2}$ for some~$j_1< j_2$.
Next, let~$\tau_j$ be the ordering of~$L_j$ returning the smallest sequence.
Now,~$P_i^*$ is the path obtained by adding all vertices of~$L_j$ in order~$\tau_j$ at position~$p_j$ to~$Q_i'$ starting with~$j=1$ and then increasing~$j$ by one in each step.
We stop this process if the resulting path has length~$k'$.
This means that for the last added~$L_j$ we might only add a subset of~$L_j$.
%By~$\orderrr(P_i')$ we denote the sequence of indices of~$P_i^*$.

\textbf{Running Time.}
Observe that for each gap-symbol the set of candidate vertices of~$C'$ can be computed in $\OO(|C'|)\in\OO(n)$~time.
Since there are at most $2\cdot |X|\in\OO(\cvd(G))$~gap-symbols in~$P_i'$, Phase~1 requires $\OO(n\cdot\cvd(G))$~time.
For Phase~2, the set~$L=C'\setminus(\RR\cup V(Q_i'))$ of vertices which can be added into positions~$p_1,\ldots,p_t$, where~$Q_i'$ is the result from Phase~1, can be computed in $\OO(k')\in\OO(n)$~time.
Observe that, by exploiting the ordering~$\prec$ on the vertices of~$G$, we can assign each vertex an unique index of~$[n]$ and thus set~$L$ can be sorted ascendingly according to the indices in linear time.
Furthermore, positions~$p_1,\ldots, p_t$ can be determined in $\OO(k')\in\OO(n)$~time.
Thus, the lexicographically smallest $k'$-path based on~$Q_i'$ can be computed in $\OO(n\cdot \cvd(G))$~time.
\end{proof}

It remains to show that~$P_i^*=\pTwo(\PP_i')$.
Since~$P_i',\pTwo(\PP_i')\in\PP_i'$, both $k'$-paths use the same vertices of~$X\cup \RR$ in the same order.
Let~$v_j$ be the $j$th of these vertices.
We show that~$P_i^*=\pTwo(\PP_i')$ via induction on~$j$.
For simplicity, we assume that both~$P_i^*$ and~$\pTwo(\PP_i')$ end with a vertex of~$X\cup\RR$; the other case can be handled similar.
Assume the statement is already shown for the $(j-1)$th such vertex and let~$L_{j-1}\subseteq C'\setminus \RR$ be the vertices already contained in both~$Q_i'$ and~$\pTwo(\PP_i')$.
In \texttt{SolLiftTest} all vertices in~$C'\setminus (\RR\cup L_{j-1})$ are inserted in increasing order of their indices until a vertex with higher index than the $j$th vertex of~$(X\cup\RR)\cap P_i'$ is reached.
Clearly, this leads to the lexicographically smallest sequence.
Thus,~$P_i^*=\pTwo(\PP_i')$.

%\paragraph{Equivalence Classes~$\PP$ and~$\PP'$ are identical.}

\begin{lemma}
\label{lemma:eqv-class-dist-to-clique-same}
Let $\PP$ and $\PP'$ denote the equivalence classes of $k$-paths of $G$ and $k'$-paths of $G'$ respectively.
Then, $|\PP| = |\PP'|$.
\end{lemma}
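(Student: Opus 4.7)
My plan is to prove $|\PP| = |\PP'|$ by exhibiting a bijection $\phi\colon \PP' \to \PP$ that matches equivalence classes by their common signature. By \Cref{def-equivalence-dist-clique}, each equivalence class in $\PP$ (respectively $\PP'$) is uniquely identified by its signature, so the goal reduces to showing that the set of signatures realized by some $k$-path of $G$ coincides with the set of signatures realized by some $k'$-path of $G'$. The map $\phi$ is then defined by $\phi(\PP_i') = \PP_j$ where $\PP_i'$ and $\PP_j$ share the same signature, and injectivity of $\phi$ is immediate from the uniqueness of signatures per equivalence class.

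The first inclusion, that every signature realized in $G'$ is also realized in $G$, I would prove using the solution-lifting algorithm \texttt{SolLift} of \Cref{sec-clique-solve-challenges}. For any $P' \in \PP_i'$ with signature $\sigma$, Phase~1 of \texttt{SolLift} substitutes each gap-symbol of $\sigma$ by a concrete non-rare clique vertex of $G'$, yielding a $|\sigma|$-path $Q'$ of $G'$ with $\rmf_{Q'} = \sigma$; Phase~2 then enlarges $Q'$ by inserting $k - |\sigma|$ further non-rare clique vertices drawn from $C$ into flexible positions (inside $g_N$ or $g_Ng_N$ gaps, or between two consecutive rare vertices), producing a $k$-path $P$ of $G$ with $\rmf_P = \sigma$. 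Hence $\sigma$ is realized by the equivalence class of $P$ in $\PP$, establishing the inclusion.

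The main obstacle is surjectivity of $\phi$: I must show that every signature $\sigma$ realized by a $k$-path of $G$ is also realized by a $k'$-path of $G'$. My construction mirrors \texttt{SolLift} but is carried out inside $G'$. Starting from any $P \in \PP_j$, retain all vertices of $V(P) \cap (X \cup \RR)$ at their original positions (they all lie in $V(G')$); replace every gap-symbol of $\sigma$ by an appropriate frequent kernel vertex from $\FF \cap V(G')$ fulfilling the same role, namely a common neighbor of two modulators for $g_C$, a neighbor of a single modulator for a standalone $g_N$, and a pair of vertices of $C' \cap \FF$ for $g_Ng_N$. The marking scheme supplies $p = 2(\ell+1) > 2|X|$ candidates per role, so the at most $2|X|$ replacements required can always be made with distinct frequent vertices. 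The resulting path has length $|\sigma|$ and signature $\sigma$; I would then pad it by inserting further vertices from $C' \setminus \RR$ into the flexible positions of $\sigma$ to reach length exactly $k'$.

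The crux is to certify that this padding can be completed, which amounts to establishing $|\sigma| \le k' \le |\sigma| + (\text{flexible capacity in } G')$. The upper bound follows from the size bound $|V(G')| = \OO(\cvd(G)^3)$ of \Cref{lem-kernel-dist-clique} together with the abundance of non-rare vertices in $C' \setminus \RR$ guaranteed by the marking scheme. The lower bound $|\sigma| \le k'$ is the delicate part: it hinges on the observation that the slack $k - |\sigma|$ in the given $k$-path $P$ equals the number of non-rare clique vertices of $P$ lying in extensible gaps of $\sigma$, and that the kernelization's choice $q = \min(|C| - 4(\ell+1)^3,\, k - 4\ell)$ removes only unmarked clique vertices, each of which may be accounted for by this slack; hence $q \le k - |\sigma|$, i.e.\ $|\sigma| \le k - q = k'$. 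Combining both inclusions yields the bijection and the claimed equality.
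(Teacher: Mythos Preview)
Your overall strategy—matching equivalence classes by their signatures, using \texttt{SolLift} for the direction $\PP' \to \PP$ and a mirror construction inside $G'$ for $\PP \to \PP'$—is the same route the paper takes. The paper phrases the second direction as ``apply \texttt{SolLiftTest} to the given $k$-path,'' which is exactly your hand-built Phase~1/Phase~2 construction restricted to $C'$.

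The genuine gap is your justification of $|\sigma| \le k'$, which you rewrite as $q \le k - |\sigma|$ and then argue by saying the $q$ removed vertices ``may be accounted for by this slack.'' This does not work: the $q$ vertices deleted in the kernelization are \emph{arbitrary} unmarked clique vertices and need not lie in the particular path $P$ at all, so there is no accounting relationship between them and the compression $k - |\sigma|$ of $P$. Concretely, a $k$-path whose non-rare clique vertices all sit in a single terminal run of length $r$ has compression $k - |\sigma| = r-1$, while $q$ depends only on $|C|$, $k$, and $\ell$ and can exceed $r-1$. The paper's argument at this point is different: it uses that $k' \ge 4\ell$ (since $q \le k - 4\ell$) and bounds $|\sigma|$ directly in terms of $\ell$ from the structure of signatures (modulator vertices, rare vertices, and at most $2\ell$ gap symbols). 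To complete your proof along either route you still need to control how many rare clique vertices a single $k$-path can contain, a point you have not addressed.
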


\begin{proof}
We show that there is a one-to-one correspondence between equivalence classes~$\PP_i$ and~$\PP_i'$.
First, consider a $k'$-path~$P_i'\in\PP_i'$.
By running algorithm \texttt{SolLift} on~$P_i'$ we obtain at least one $k$-path~$P_i$ with~$\rmf_{P_i'}=\rmf_{P_i}$.
Hence, the statement is verified.
Second, consider a $k$-path~$P_i\in\PP_i$.
We use \texttt{SolLiftTest} on input~$P_i$ to obtain a $k'$-path~$P_i'$ with~$\rmf_{P_i}=\rmf_{P_i}$:
For \texttt{SolLiftTest} it is not important whether its input is a $k$-path or a $k'$-path; in both cases the path is reduced to its signature~$\rmf(\cdot)$ and then a $k'$-path~$P_i'\in\PP_i'$ is returned.
Also note that each vertex in~$\rmf_{P_i'}$ is either rare or contained in the modulator.
Furthermore, by definition there are sufficiently many frequent vertices in the clique~$C'$ to replace each gap-symbol in~$\rmf_{P_i'}$ by a frequent vertex of~$C'$.
Finally, by definition,~$C'$ contains sufficiently many vertices such that in Phase~2 of \texttt{SolLiftTest} a $k'$-path~$P_i'$ can be computed.
Note that here we exploited that~$k'=4\cdot\ell$ if~$k>4\cdot\ell$ and~$k'=k$ if~$k\le 4\cdot\ell$, which implies that~$k'\ge |\rmf_{P'}|$ for each $k$-path~$P'$ in~$G$.

Thus, there is a one-to-one correspondence between the classes~$\PP_i$ and~$\PP_i'$.
\end{proof}

\paragraph{\texttt{SolLiftNS}, a variant of \texttt{SolLift} for a non suitable path~$P_i'\in\PP_i'$.}
The aim of \texttt{SolLiftNS} is to output exactly one $k$-path of~$\PP_i$, the equivalence class corresponding to~$\PP_i'$.
For this, the input of \texttt{SolLiftNS} is~$P_i'$ and not~$\rmf_{P_i'}$.
Hence, Phase~1 is not necessary; instead \texttt{SolLiftNS} only consists of an adaptation of Phase~2 of \texttt{SolLift}:
Let~$p^*$ be the last position of~$P_i'$ in which non-rare clique vertices can be added without changing the signature.
Furthermore, let~$Z= C\setminus (\RR\cup V(P_i'))$ be the set of non-rare clique vertices which can be added to~$P_i'$, let~$\pi$ be the ascending ordering of~$Z$, and let~$Z'\subseteq Z$ be the first $k-k'$~vertices of~$Z$ according to~$\pi$.
The vertices in~$Z'$ are added into~$p^*$ according to~$\pi$.
Note that the result is a $k$-path in~$G$ contained in the equivalence class~$\PP_i$.

By \texttt{SolLiftNS}$(P_i')$ we denote the unique $k$-path of~$G$ which is outputted when $k'$-path~$P_i'$ is given as an input for algorithm \texttt{SolLiftNS}.

\paragraph{\texttt{SolLiftS}, a variant of \texttt{SolLift} for a suitable path~$P_i'\in\PP_i'$.}
The aim of \texttt{SolLiftS} is to output all $k$-paths of~$\PP_i$ which are not outputted by \texttt{SolLiftNS} with some non-suitable path of~$\PP_i'$ as input.
\texttt{SolLift} is modified as follows:
Phase~1 is not changed, that is, \texttt{SolLiftS} branches into each possibility to create a $|\rmf_{P_i'}|$-path~$Q_i'$ by replacing each gap-symbol in~$\rmf_{P_i'}$ by a non-rare clique vertex.
Phase~2 is divided into two parts:
In Phase~2.1, all possibilities for creating a $k'$-path~$P_i^*$ from~$Q_i'$ are considered.
This is done similar as Phase~2 in \texttt{SolLift}; but now only $k'-|\rmf_{P_i'}|$~vertices are added instead of~$k-|\rmf_{P_i'}|$.
For Phase~2.2, let~$P^*=\texttt{SolLiftNS}(P_i^*)$ if~$P_i^*$ is a $k'$-path in~$G'$, and otherwise let~$P^*=\emptyset$.
Now, Phase~2.2 works similar to Phase~2.1 (and Phase~2 of \texttt{SolLift}): 
All possibilities of adding $k-k'$~vertices of~$C\setminus(\RR\cup V(P_i^*))$ in any order in each distribution to positions of~$P_i^*$ without changing the signature (as described in \texttt{SolLift}) are considered.
Let~$P_i$ be one such path.
If~$P_i=P^*$, then nothing is outputted, and if~$P_i\ne P^*$, then~$P_i$ is outputted.

\paragraph{Main Theorem.}
We now show that our solution lifting algorithm is correct and achieves polynomial delay.

%\begin{theorem}
%\label{thm-k-path-cvd}
%{\enumkpath} parameterized by $\cvd(G)$ admits a $\OO(n\cdot (\log(n)+\cvd(G)))$-delay enumeration kernelization with $\OO(\cvd(G)^3)$~vertices.
%\end{theorem}

{\thmThree*}

\begin{proof}
\textbf{Size bound.} The size bound follows from \Cref{lem-kernel-dist-clique}.

\textbf{Correctness.} 
Next, we show that each $k$-path~$P$ of~$G$ is enumerated exactly once.
In challenge~$1$ we verified that~$\PP_i\ne\emptyset$ if and only if~$\PP_i'\ne\emptyset$.
Thus, for each $k$-path in~$G$ there exists at least one $k'$-path in~$G'$ having the same signature.
Furthermore, algorithms \texttt{SolLiftS} and \texttt{SolLiftNS} on an input path~$P_i'\in\PP_i'$ only output $k$-paths of~$\PP_i$.
By \texttt{SolLiftNS}$(P_i')$ we denote the output of \texttt{SolLiftNS} for a non suitable path~$P_i'$.

For two different non suitable paths~$P_i',P_i''\in\PP_i'$, we have \texttt{SolLiftNS}$(P_i')\ne \texttt{SolLiftNS}(P_i'')$:
Assume towards a contradiction that \texttt{SolLiftNS}$(P_i')= \texttt{SolLiftNS}(P_i'')$.
Since~$\rmf_{P_i'}=\rmf_{P_i''}$, the positions~$p_1, \ldots, p_t$ in which non-rare and not already used clique vertices can be added is identical. 
Thus, also the last such position~$p^*=p_t$ is identical.
In this position \texttt{SolLiftNS} adds the $k-k'$~vertices with smallest index in~$C\setminus (\RR\cup V(P_i'))$ in ascending order.
Since also these vertices are identical and added in the same order, we obtain that~$P_i'=P_i''$, a contradiction.

Hence, the $k$-paths of~$\PP_i$ outputted by \texttt{SolLiftNS} with input of non suitable $k'$~paths of~$\PP_i'$ are pairwise distinct.
Furthermore, for the unique suitable $k'$-path of~$\PP_i'$, \texttt{SolLiftS} outputs all $k$-paths of~$\PP_i$ which are not already outputted by any call of \texttt{SolLiftNS}.
Thus, each $k$-path of~$G$ is enumerated exactly once.

\textbf{Delay.}
Let~$P_i'\in\PP_i'$ be some $k'$-path.
Before we compute $k$-paths having the same signature as~$P_i'$, we need to check with \texttt{SolLiftTest} whether~$P_i'$ is suitable or not which can be checked in $\OO(n\cdot\cvd(G))$~time, according to \Cref{lemma:algo-suitable-check-clique}.

Now, depending on whether~$P_i'$ is suitable or not we use \texttt{SolLiftS} or \texttt{SolLiftNS}, respectively.
First, we consider the case that~$P_i'$ is not suitable.
The last position~$p^*$ in which non-rare clique vertices can be added without changing the signature can be computed in $\OO(k')\in\OO(n)$~time.
Furthermore, the set~$Z=C\setminus(\RR\cup V(P_i'))$ of vertices we can add, can also be computed in $\OO(n)$~time and they can be sorted ascendingly according to their indices in linear time since we can use the ordering~$\prec$ to assign each vertex a unique index of~$[n]$.
Now, the $k-k'$~vertices with lowest indices of~$Z$ can be inserted into~$p^*$ in $\OO(n)$~time.
Thus, \texttt{SolLiftNS} outputs a $k$-path in $\OO(n)$~time.

Second, we consider the case that~$P_i'$ is suitable, that is, we need to bound the delay of \texttt{SolLiftS}.
Initially, we bound the time to compute the first/next/last $|\rmf_{P_i'}|$-path which is a valid output of Phase~1 of \texttt{SolLiftS}.
In $\OO(n)$~time all vertices of~$C$ are sorted ascendingly according to their indices.
Similar to \texttt{SolLiftTest}, for each gap-symbol in $\rmf_{P_i'}$ the set of vertices we can use to replace them can be computed in $\OO(n)$~time per gap-symbol, yielding a overall running time of~$\OO(n\cdot\cvd(G))$ to compute these sets initially.
Each such candidate set is sorted ascendingly according to the indices.
Subsequently, all $|\rmf_{P_i'}|$-paths having the same signature as~$P_i'$ can be enumerated with a delay of~$\OO(n\cdot\cvd(G))$:
Define a lexicographical ordering of such $|\rmf_{P_i'}|$-paths according to the orderings of the candidates sets, where the first criteria is the index of that vertex which replaced the first gap-symbol and so on.
Thus, all such $|\rmf_{P_i'}|$-paths can be enumerated with $\OO(n\cdot\cvd(G))$~delay.

It remains to bound the delay of Phase~2.
The analysis of Phases~2.1 and~2.2 is similar, so we only analyze the delay for Phase~2.1; the only difference is that the output of Phase~2.1 is a $k'$-path and that the output of Phase~2.2 is a $k$-path.
For this, let~$Q_i'$ be a $|\rmf_{P_i'}|$-path with~$\rmf_{Q_i'}=\rmf_{P_i'}$ outputted after Phase~1.
Furthermore, let~$L=C\setminus(\RR\cup V(Q_i'))$ be the set of non-rare clique vertices which can be added to~$Q_i'$.
Clearly,~$L$ is computable in $\OO(n)$~time and can be sorted ascendingly in linear time.
Also, positions~$p_1, \ldots, p_t$ in which the vertices of~$L$ can be added without changing the signature of~$Q_i'$ can be determined in $\OO(k')\in\OO(n)$~time.
Recall that initially, \texttt{SolLiftS} starts with the lexicographically smallest subset~$L'$ of~$L$ having $k'-|Q_i'|$~vertices and in each further iteration \texttt{SolLiftS} considers the (unique) successor subset~$L'$ according to this lexicographical order.
Clearly, the first/next such subset~$L'$ can be computed in $\OO(n)$~time.
Also, recall that \texttt{SolLiftS} starts with the lexicographical ordering~$\pi_1$ of~$L'$ and all orderings are sorted according to the lexicographical ordering of the resulting sequence.
Consequently, in each further iteration \texttt{SolLiftS} chooses the (unique) successor ordering~$\pi_j$.
Clearly, the first/next such ordering can be computed in $\OO(n)$~time.
Also recall that  all possibilities of adding the vertices~$L'$ in order~$\pi$ to positions~$p_1,\ldots, p_t$ are described by a vector~$\vec{v}$ of length~$t$ and~$\mathcal{V}$ is the set of all such vectors and~$\chi$ is an arbitrary but fixed ordering of~$\mathcal{V}$.
For each fixed~$L'$ and fixed~$\pi$, \texttt{SolLiftS} considers all vectors in~$\mathcal{V}$ according to~$\chi$.
Clearly, the next such distribution of the vertices of~$L'$ according to~$\chi$ can be computed in $\OO(n)$~time.
Thus, in overall $\OO(n\cdot \log(n))$~time the first/next/last $k'$-path~$P_i^*$ is computed.
Hence, the delay of Phase~2.1 (and also Phase~2.2) is $\OO(n)$.

Recall that if~$P_i^*$ is a path in~$G'$, not each $k$-path enumerated by Phase~2.2 is outputted: the unique $k$-path enumerated by \texttt{SolLiftNS} with input~$P_i^*$ is not outputted.
Thus, if~$P_i^*$ is a path in~$G'$, we check for each enumerated $k$-path~$P_i$ of \texttt{SolLiftS} whether~$P_i=\texttt{SolLiftNS}(P_i^*)$ and only output~$P_i$ if~$P_i\ne \texttt{SolLiftNS}(P_i^*)$.
We now show that if~$P_i$ is not outputted (since~$P_i$ is the output of one call of \texttt{SolLiftNS}), then the next $k$-path~$P_i'$ computed by \texttt{SolLiftS} is not the output of any call of \texttt{SolLiftNS} and is thus outputted.
If this property is verified, we can safely conclude that we only get an additional factor of~2 for the delay and thus we have shown that the overall delay of our algorithm is~$\OO(n\cdot \cvd(G))$.

It remains to show this property.
%Note that we can safely assume that the clique~$C$ is much larger the the kernel bound; since otherwise the instance itself is the desired kernel.
Assume that~$P_i$ is not outputted since~$P_i$ is the output of one call of \texttt{SolLiftNS}.
Recall that in \texttt{SolLiftNS} all non-rare clique vertices are added in the last possible position in ascending order.
Hence, in \texttt{SolLiftS} the last possible vector of~$\mathcal{V}$ according to~$\chi$ was used.
Consequently, the next ordering~$\pi$ of the vertices~$L'$ is considered.
If there is another ordering, the subsequent ordering is not ascending and thus a $k$-path which is not the output of any call of \texttt{SolLiftNS} is produced.
So assume that there is no next ordering~$\pi$.
This implies that~$|L'|=1$ since the ascending order is the first order,
Hence, the algorithm now considers the next vertex in~$C\setminus(\RR\cup V(Q_i'))$.
If such a vertex exists, then the subsequent $k$-path cannot be the output of a call of \texttt{SolLiftNS} and otherwise, if no such vertex exists, the algorithm \texttt{SolLiftS} terminates.
Thus, this property is verified.
\end{proof}

\subsection{Extension to other Path and Cycle Variants}

We now extend our positive result of {\enumkpath} to {\enumkcycle} and the variants where all paths/cycles of length \emph{at least}~$k$ need to be outputted.
We start with {\enumkpathAll}.

\begin{proposition}
\label{prop-clique-k-path-at-least}
{\enumkpathAll} parameterized by $\cvd(G)$ admits an $\OO(n\cdot \cvd(G))$-delay enumeration kernelization with $\OO(\cvd(G)^3)$~vertices.
\end{proposition}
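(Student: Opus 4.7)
The plan is to reuse the kernelization of Theorem~\ref{thm-k-path-cvd} essentially verbatim and to adapt its solution-lifting algorithm. The marking scheme and the two-step removal of unmarked clique vertices produce a kernel $(G', k')$ with $|V(G')| \in \OO(\cvd(G)^3)$ and $k' = \min(k, 4(\ell+1))$. This suffices because the signature $\rmf_P$ of any path $P$ in $G$, regardless of $|V(P)|$, has at most $|X| + |\RR|$ non-gap symbols and at most $2|X|$ gap symbols; hence every signature realizable by some path of length $\geq k$ in $G$ is also realizable by paths in $G'$, possibly at several lengths in the range $[k', |V(G')|]$.

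The definitions of signature and equivalence (Definitions~\ref{def-sig-dist-clique} and \ref{def-equivalence-dist-clique}) extend unchanged to paths of arbitrary length. I would form equivalence classes $\PP_i$ of paths of length $\geq k$ in $G$ and equivalence classes $\PP_i'$ of paths of length $\geq k'$ in $G'$; an adaptation of the argument of Lemma~\ref{lemma:eqv-class-dist-to-clique-same} (using \texttt{SolLift} and \texttt{SolLiftTest} while allowing Phase~2 to insert any number of additional non-rare clique vertices rather than exactly $k - k'$) shows $|\PP| = |\PP'|$, and in particular that $\PP_i \neq \emptyset$ iff $\PP_i' \neq \emptyset$.

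For the solution lifting, the outer loop iterates over each $\ell' \in [k', |V(G')|]$ and considers each $\ell'$-path $P'$ of $G'$ exactly once. I extend the total order $\triangleleft$ to compare paths of possibly different lengths by first comparing length (smaller first) and then applying the current $\triangleleft$; the suitable representative of each class is the unique $\triangleleft$-minimum across all lengths. The variants \texttt{SolLiftNS} and \texttt{SolLiftS} are adapted as follows: \texttt{SolLiftNS} takes a non-suitable $P'$ and outputs exactly one path in $G$ of length $\geq k$ sharing $\rmf_{P'}$, produced by appending the lexicographically smallest sequence of unused non-rare clique vertices at a fixed canonical position of $P'$ until length $\geq k$ is reached; \texttt{SolLiftS}, invoked on the suitable $P'$, enumerates all remaining paths in $G$ of length $\geq k$ with the given signature, skipping precisely the outputs of the \texttt{SolLiftNS} calls for its equivalence class.

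The main obstacle will be enforcing polynomial delay. Phase~2 of \texttt{SolLiftS} must now additionally iterate over the number $t$ of non-rare clique vertices inserted, ranging over $\max(0, k - |V(Q_i')|) \leq t \leq |C| - |V(Q_i')|$, as well as over the choice of subsets and their orderings and distributions, while skipping those extensions already emitted by \texttt{SolLiftNS} for paths in $G'$ of different lengths that share the same signature. This requires refining the nested lexicographic enumeration used in the proof of Theorem~\ref{thm-k-path-cvd} so that the ``skip'' step can be decided in $\OO(n \cdot \cvd(G))$ time per output by comparing against a succinct canonical witness, but by the same accounting as before the delay remains $\OO(n \cdot \cvd(G))$, yielding the claimed bound.
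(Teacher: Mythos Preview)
Your overall plan mirrors the paper's: keep the kernelization unchanged, extend the order $\triangleleft$ by length-first, and adapt \texttt{SolLiftNS}/\texttt{SolLiftS}. However, your adaptation of \texttt{SolLiftNS} is not injective across input lengths, and this breaks the partition requirement of Definition~\ref{defn:poly-delay-enum-kernel}.

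Concretely, your \texttt{SolLiftNS} appends unused non-rare clique vertices at a fixed position of $P'$ ``until length $\geq k$ is reached'', so every output has length exactly $k$. Consider a signature $(x_1,g_N,g_N,x_2)$ with $k'=4$, $k=6$, and clique vertices $c_1,c_2,\ldots$ in increasing order, all non-rare and adjacent to both $x_1,x_2$. Take the non-suitable paths $P_1'=(x_1,c_2,c_1,x_2)$ and $P_2'=(x_1,c_2,c_3,c_1,x_2)$. For $P_1'$ the last insertion position is between $c_2$ and $c_1$; the two smallest unused vertices are $c_3,c_4$, giving $(x_1,c_2,c_3,c_4,c_1,x_2)$. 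For $P_2'$ the last insertion position is between $c_3$ and $c_1$; the smallest unused vertex is $c_4$, giving the \emph{same} path $(x_1,c_2,c_3,c_4,c_1,x_2)$. Hence two distinct non-suitable kernel solutions map to the same $k$-path of $G$, so $\{S_s\}$ is not a partition.

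The paper avoids this by having \texttt{SolLiftNS} always append exactly $k-k'$ vertices, so a $(k'+z)$-path in $G'$ is lifted to a $(k+z)$-path in $G$; distinct input lengths then yield distinct output lengths, and equal input lengths are handled by the argument already present in Theorem~\ref{thm-k-path-cvd}. Correspondingly, \texttt{SolLiftS} in the paper builds an intermediate $(k'+z)$-path in Phase~2.1 and adds exactly $k-k'$ further vertices in Phase~2.2, sweeping $z=0,1,2,\ldots$; the skip test is then against \texttt{SolLiftNS} of the current intermediate $(k'+z)$-path only, which keeps the per-output check in $\OO(n\cdot\cvd(G))$ without the cross-length bookkeeping you anticipate. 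Fixing your \texttt{SolLiftNS} to add a fixed number $k-k'$ of vertices (rather than ``enough to reach $k$'') repairs the gap and makes your argument coincide with the paper's.
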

\begin{proof}
The marking scheme, the kernelization, the definition of signature, the definition of the equivalence classes, and algorithm \texttt{SolLift} is unchanged.
Now, the argumentation that an equivalence class~$\PP_i$ of the original instance is non-empty if and only if the corresponding equivalence class~$\PP_i'$ of the kernel is non-empty is slightly different:
By algorithm \texttt{SolLift} for each non-empty class~$\PP_i'$ one obtains at least one $k$-path of~$\PP_i$.
Now, let~$P_i\in\PP_i$ of length at least~$k$.
Algorithm \texttt{SolLiftTest} computes a $k'$-path~$P_i'$ with the same signature.
Hence, if~$\PP_i\ne\emptyset$, then there exists at least one $k'$-path in~$\PP_i'$.
Thus, there is a one-to-one-correspondence between the classes~$\PP_i$ and~$\PP_i'$.
Since for each non-empty class~$\PP_i'$ contains at least one $k'$-path, we can define the suitable path similar:
For~$P_1', P_2'\in\PP_i'$ with~$|P_1'|<|P_2'|$, we let~$P_1'\triangleleft P_2'$ and if~$|P_1'|=|P_2'|$, then we use the existing definition described in \Cref{sec-clique-solve-challenges}.

Next we adapt algorithms \texttt{SolLiftNS} and \texttt{SolLiftS}.
For each $(k'+z)$-path~$P_i'$ in~$G$, where~$z\ge 0$, \texttt{SolLiftNS} adds exactly $k-k'$~clique vertices to~$P_i'$ as described in \Cref{sec-clique-solve-challenges}.
Note that for distinct paths~$P_1'$ and~$P_2'$ of length at least~$k'$ we have \texttt{SolLiftNS}$(P_1')\ne$ \texttt{SolLiftNS}$(P_2')$: if~$P_1'$ and~$P_2'$ have different length, then clearly also the outputs have different length, and if~$P_1'$ and~$P_2'$ have the same length, then the outputs are different as shown in \Cref{thm-k-path-cvd}.
Let~$P_i'$ be the suitable $k'$-path of class~$\PP_i'$.

Next, we adapt \texttt{SolLiftS}.
Phase~1 is not changed, that is, a $|\rmf_{P_i'}|$-path~$Q_i'$ is computed where~$P_i'$ is the suitable path of class~$\PP_i'$.
For Phase~2, recall that $L= C\setminus (\RR\cup V(Q_i'))$ is the set of candidate vertices which can be added in Phase~2.
In \Cref{sec-clique-solve-challenges}, \texttt{SolLiftS} for each~$Z\subseteq L$ of size exactly~$k-|Q_i'|$, enumerated all $k$-paths where exactly the vertices of~$Z$ where added to~$Q_i'$.
We enhance Phase~2 as follows:
Initially, all $k$-paths of~$\PP_i$ are enumerated by using each~$Z$ of size~$k-|Q_i'|$.
Again, in Phase~2.1 we first compute a $k'$-path~$P_i''$ and if~$P_i''$ is a path in~$G'$ we do not enumerate the output of \texttt{SolLiftNS}$(P_i'')$ again. 
Then, we do the same for each~$Z$ of size~$k-|Q_i'|+1$ and continue until~$Z=L$.
Again, in Phase~2.1 we first compute a $k'+1$-path~$P_i''$ (and so on) and if~$P_i''$ is a path in~$G'$ we do not enumerate the output of \texttt{SolLiftNS}$(P_i'')$ again.
Thus, in Phase~2.2 always $k-k'$~vertices are added. 
In other words, initially all $k$-paths of~$\PP_i$ are enumerated, then all $(k+1)$-paths of~$\PP_i$ until the longest paths of~$\PP_i$ are enumerated and each time we skip the outputs of \texttt{SolLiftNS}.

The correctness is similar:
\texttt{SolLiftS} and \texttt{SolLiftNS} only output paths of~$G$ from the same equivalence class as the input.
Furthermore, as argued above, the outputs for distinct inputs of \texttt{SolLiftNS} are distinct.
Also, \texttt{SolLiftS} enumerated all paths of length at least~$k$ in~$G$ from the equivalence class of the input which are not already outputted by any call of \texttt{SolLiftNS}.
Thus, each solution is enumerated exactly once.

The analysis of the delay is similar.
The only difference is that \texttt{SolLiftS} now has several iterations to output paths of length at least~$k$; this does not change the delay.
\end{proof}

Next, we adapt our result for $k$-cycles.

\begin{proposition}
\label{prop-clique-k-cycle}
{\enumkcycle} parameterized by $\cvd(G)$ admits an $\OO(n\cdot \cvd(G))$-delay enumeration kernelization with $\OO(\cvd(G)^3)$~vertices.
\end{proposition}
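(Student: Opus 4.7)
The plan is to mimic the three-algorithm architecture used in \Cref{thm-k-path-cvd}---kernelization via rare/frequent marking, signatures with gap symbols~$g_C$ and~$g_N$, and a trio of solution-lifting procedures \texttt{SolLift}, \texttt{SolLiftS}, \texttt{SolLiftNS}---while addressing the two genuinely new issues for cycles: a $k$-cycle has no designated starting vertex, and it must close. As a preprocessing step, I would assume $k\ge 3$ (otherwise the problem is trivial) and handle the degenerate case $V(C)\cap X=\emptyset$ separately (a $k$-cycle entirely inside the clique always exists when $|C|\ge k$, so we only need to report one representative plus the kernelization enumerates all cyclic permutations).

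First, I would keep the marking scheme from \Cref{sec:k-path-clique} unchanged, with the same threshold $p=2(\ell+1)$, and likewise keep the two-step kernelization that first shrinks both $|C|$ and $k$ by removing unmarked clique vertices and then trims $C'$ down to $\OO(\ell^3)$ vertices. By \Cref{lem-kernel-dist-clique} the kernel has $\OO(\cvd(G)^3)$ vertices and $k'\in\OO(\cvd(G))$, and each $k'$-cycle of $G'$ is a $k'$-cycle of $G$ because $G'$ is induced. The marking scheme guarantees enough frequent common-neighbors/neighbors to realize any gap-configuration, which is the only property that Phase~$1$ of the lifting algorithms exploits; no change is needed there.

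The key conceptual change is the signature. For a $k$-cycle $P$, let $v^*\in V(P)\cap(X\cup\RR)$ be the vertex of minimum index under~$\prec$ (which exists unless $P\subseteq C$, handled separately), and let $v'$ be the neighbor of $v^*$ in $P$ of smaller index. Rotate and reflect $P$ so that $v^*$ is first and $v'$ is second, and then apply the path-signature rule of \Cref{def-sig-dist-clique}, but with the additional rule that the trailing segment between the last $X\cup\RR$-vertex and $v^*$ (wrapping around the cycle) is encoded via the same $g_C$/$g_N$/$g_Ng_N$ replacement. This canonicalisation ensures that two cycles equivalent under rotation/reflection receive identical signatures, and the proof that $|\PP|=|\PP'|$ carries over verbatim after checking that \texttt{SolLift} produces closed walks: at the very end of Phase~$2$ one has to additionally verify that the chosen replacement for the wrap-around gap uses vertices that are different from those already used, which again follows from $|L|\ge p-2|X|>0$.

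With the signature fixed, the definition of suitability via $\triangleleft$ (\Cref{def-ordering-dist-clique,def-suitable-dist-clique}) and the algorithm \texttt{SolLiftTest} transfer unchanged, because after canonicalisation every $k'$-cycle in $\PP_i'$ behaves like a $k'$-path with fixed first two vertices. Then \texttt{SolLiftNS} and \texttt{SolLiftS} are adapted exactly as in \Cref{thm-k-path-cvd}, except that the distributions $\vec{v}$ in Phase~$2$ now have one more slot (the wrap-around gap between the last vertex and $v^*$ when the last signature entry is in $X\cup\RR$, or two fewer slots when the signature ends with $g_Ng_N$). Correctness of the one-to-one partition of $\sol(G)$ then follows exactly as in the proof of \Cref{thm-k-path-cvd}: for two distinct non-suitable cycles $P_1',P_2'\in\PP_i'$ we have \texttt{SolLiftNS}$(P_1')\ne\texttt{SolLiftNS}(P_2')$ by the same argument (the last position $p^*$ and the ascending insertion of the $k-k'$ smallest vertices of $C\setminus(\RR\cup V(P_i'))$ uniquely determines the output), and \texttt{SolLiftS} avoids reproducing any output of \texttt{SolLiftNS} by the same skip-test. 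The delay analysis is identical.

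The main obstacle I anticipate is the canonicalisation: I must choose a rotation/reflection rule that is compatible both with the ordering $\triangleleft$ used to define suitability and with the insertion procedure in Phase~$2$, so that \texttt{SolLift}, \texttt{SolLiftTest}, \texttt{SolLiftS}, and \texttt{SolLiftNS} all produce canonically-oriented cycles. If the signature canonicalisation is not chosen carefully, a single cyclic equivalence class in $\PP_i$ might correspond to several signatures and the partition property of the solution lifting would break. Choosing $v^*$ as the minimum-index vertex of $X\cup\RR$ together with the smaller-index neighbour tie-breaker resolves this, and the corner case $V(P)\cap(X\cup\RR)=\emptyset$ is handled by a separate one-shot enumeration of all $k$-cycles in the clique, whose number can be output with $\OO(n)$ delay by a trivial rotation scheme.
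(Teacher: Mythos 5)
Your proposal is correct and follows essentially the same approach as the paper: canonicalise the cycle by starting at the minimum-index vertex of $X\cup\RR$ with the smaller-index neighbour second, apply the path signature to the resulting sequence, treat the trailing wrap-around gap specially, and carry the marking scheme, kernelization, suitability notion, and \texttt{SolLift}/\texttt{SolLiftTest}/\texttt{SolLiftS}/\texttt{SolLiftNS} over unchanged. The paper also explicitly notes the degenerate empty-signature case (when the cycle meets no vertex of $X\cup\RR$), which you handle by a separate direct enumeration; this matches the paper's treatment.
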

\begin{proof}
The marking scheme and the kernelization work analog.
The definition of the signature is slightly adjusted since a cycle has no designated starting vertex:
Let~$C$ be a $k$-cycle and let~$v^*\in C\cap(X\cup\RR)$ be the vertex of the modulator or rare with minimal index.
Clearly, $v^*$ has 2~neighbors~$v_1$ and~$v_2$ in~$C$.
Let~$v'\in\{v_1,v_2\}$ be the vertex with minimal index.
Now,~$C$ is transformed into a $k$-path~$P_C$ which starts with~$v^*$, has~$v'$ as its second vertex and uses the same vertices of~$C$ in the same order.
Note that if~$C\cap(X\cup\RR)=\emptyset$ then~$P_C$ is an empty path.
Afterwards, the definition of suitable and of the equivalence classes is analog.

In the solution lifting algorithm \texttt{SolLift} and its variants we need to treat the case that~$\rmf_{C}$ ends with a gap-symbol~$g_C$ or~$g_N$ differently:
Since~$C$ is a cycle, this last gap-symbol indicates a neighbor of the first vertex~$v^*$.
Hence, when replacing that gap-symbol this has to be considered.
Afterwards, the proof is analog.
\end{proof}

Now, the result for {\enumkcycleAll} follows by using the adaptations for paths of length at least~$k$ (\Cref{prop-clique-k-path-at-least}) and $k$-cycles (\Cref{prop-clique-k-cycle}).

\begin{corollary}
\label{cor-clique-k-cycle-at-least}
{\enumkcycleAll} parameterized by $\cvd(G)$ admits a $\OO(n\cdot \cvd(G))$-delay enumeration kernelization with $\OO(\cvd(G)^3)$~vertices.
\end{corollary}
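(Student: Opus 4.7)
The plan is to combine the adaptations already used for \Cref{prop-clique-k-path-at-least} (paths of length at least $k$) and \Cref{prop-clique-k-cycle} ($k$-cycles) and apply them on top of the kernelization of \Cref{thm-k-path-cvd}. The kernelization and marking scheme themselves need no change: we compute the 2-approximate modulator $X$, mark rare and frequent clique vertices, form $G'$ by removing unmarked clique vertices (simultaneously decreasing $k$ to $k'$ in the first removal stage, then trimming unmarked clique vertices in the second stage). By \Cref{lem-kernel-dist-clique}, this yields $\OO(\cvd(G)^3)$ vertices and $k'\in\OO(\cvd(G))$, and the same bound applies here because the longest cycle length is also bounded by $|V(G)|$ and the marking argument only relies on how often a cycle can swap between $X$ and $C$, which is still at most $2|X|$.

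Next I would set up signatures and equivalence classes for cycles of length at least $k$, merging the two prior adaptations. As in \Cref{prop-clique-k-cycle}, to kill the cyclic rotation ambiguity I would pick the vertex $v^*\in C\cap(X\cup\RR)$ of minimum index and among its two cyclic neighbors take the one $v'$ of minimum index, and linearize the cycle into a path $P_C$ starting with $(v^*,v')$; if $C\cap(X\cup\RR)=\emptyset$ the signature is empty. The signature $\rmf_{P_C}$ is then built as in \Cref{def-sig-dist-clique}, with the caveat that a trailing gap symbol now encodes a subsequence of clique vertices that ends adjacent to $v^*$ instead of to another modulator/rare vertex. Equivalence classes $\PP_i$ of such cycles in $G$ and $\PP_i'$ in $G'$ are defined by equality of signatures, exactly as before.

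For the solution-lifting algorithm I would reuse the machinery of \Cref{sec-clique-solve-challenges} and lift it in two independent dimensions: (i) allow the output length to range from $k$ up to the maximum feasible length as in \Cref{prop-clique-k-path-at-least}, and (ii) enforce the cyclic closure when replacing the final gap symbol in Phase~1, as in \Cref{prop-clique-k-cycle}. Concretely, the suitable element of $\PP_i'$ is defined by first ordering cycles by length (shortest first) and then by the $\triangleleft$ of \Cref{def-ordering-dist-clique} applied to the linearized path $P_C$. The testing routine \texttt{SolLiftTest} picks, at each gap and each position of Phase~2, the minimum-index admissible clique vertex, but additionally requires that when replacing a gap that wraps back to $v^*$, the chosen vertex is a neighbor of $v^*$; that the candidate sets are non-empty follows from the same $p=2(\ell+1)$ threshold that upper-bounds how many clique vertices any cycle can consume per modulator pair. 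The variants \texttt{SolLiftNS} and \texttt{SolLiftS} then work unchanged in spirit: for a non-suitable $P'\in\PP_i'$ we output exactly one extended cycle (deterministically adding the smallest non-rare clique vertices at the last admissible position), and for the suitable path we enumerate all remaining cycles of all feasible lengths, skipping the unique cycle that would have been produced by \texttt{SolLiftNS} applied to any intermediate $k''$-path created in Phase~2.1 (for $k'\le k''\le $ max length). As in \Cref{thm-k-path-cvd,prop-clique-k-path-at-least,prop-clique-k-cycle}, distinctness across \texttt{SolLiftNS} calls holds because different inputs either differ in length (hence outputs differ in length) or reduce to the case already handled.

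The delay analysis carries over: Phase~1 costs $\OO(n\cdot\cvd(G))$ per enumerated intermediate path because there are at most $2|X|\in\OO(\cvd(G))$ gap symbols and candidate sets can be computed and sorted in $\OO(n)$ each; Phase~2.1 and Phase~2.2 each produce the next solution in $\OO(n)$ time using the lexicographic enumeration over subsets, orderings and distribution vectors from \Cref{sec-clique-solve-challenges}; the outer loop over admissible lengths adds no overhead to the delay. The main obstacle I expect is bookkeeping around the ``wrap-around'' gap: one must verify carefully that the candidate set for the last gap symbol (the one closing the cycle) stays large enough after all previous replacements, which reduces to the same counting argument ($p-|L|\ge 1$) already used in Phase~1 of \texttt{SolLift}, and that the skip-step in \texttt{SolLiftS}, now executed at multiple lengths, still preserves the invariant that the next enumerated cycle is not the output of any \texttt{SolLiftNS} call, so that the delay only grows by a constant factor. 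Putting everything together yields the claimed $\OO(n\cdot\cvd(G))$-delay enumeration kernelization with $\OO(\cvd(G)^3)$ vertices.
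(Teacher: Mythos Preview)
Your proposal is correct and takes essentially the same approach as the paper: the paper's proof of this corollary consists of a single sentence stating that the result follows by combining the adaptations of \Cref{prop-clique-k-path-at-least} and \Cref{prop-clique-k-cycle}, which is precisely your plan. You supply considerably more detail than the paper does (linearization of cycles, the two-dimensional lifting over length and cyclic closure, and the bookkeeping concerns), but all of it faithfully spells out what the paper leaves implicit.
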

\else
\begin{comment}

Our techniques can also be used to show the following.

\begin{proposition}[$\star$]
\label{prop-clique-k-path-at-least-short}
{\enumkpathAll}, {\enumkcycle}, and {\enumkcycleAll} parameterized by $\cvd(G)$ admits a $\OO(n\cdot \cvd(G))$-delay enumeration kernelization with $\OO(\cvd(G)^3)$~vertices.
\end{proposition}

\end{comment}

\fi

\section{Conclusion}
\label{sec:conclusion}

We initiated a systematic study of enumeration kernels for the problems of finding paths (and cycles) with (at least) $k$~vertices, leading to polynomial-delay enumeration kernels of polynomial size (\pdpsk{}s) for the parameters ${\vc}$, ${\diss}$, $r$-${\coc}$, and ${\cvd}$.
It is open whether our positive results can be extended to a \pdpsk{} for the parameters distance to cluster, feedback vertex set, or neighborhood diversity (for which even a polynomial-size kernel for the decision problem seems to be open).
%  Recall that even if we consider the size of a largest connected component (${\mcp}$) as the parameter, we cannot expect to get a polynomial-size (enumeration) kernel.
% This parameter is larger than the vertex integrity, automatically ruling out the existence of polynomial-size (enumeration) kernelization for treedepth and thus also treewidth.
More generally, it is open to develop \pdpsk{}s for other well-studied problems.

%\newpage 
%
%
%
%
% ---- Bibliography ----
%
% BibTeX users should specify bibliography style 'splncs04'.
% References will then be sorted and formatted in the correct style.
%
 \bibliographystyle{splncs04}

\begin{thebibliography}{10}
\providecommand{\url}[1]{\texttt{#1}}
\providecommand{\urlprefix}{URL }
\providecommand{\doi}[1]{https://doi.org/#1}

\bibitem{AdamsonGM24}
Adamson, D., Gawrychowski, P., Manea, F.: Enumerating $m$-length walks in
  directed graphs with constant delay. In: Proceedings of the 16th Latin
  American Theoretical INformatics Symposium ({LATIN}~'24). Lecture Notes in
  Computer Science, vol. 14578, pp. 35--50. Springer (2024)

\bibitem{BirmeleFGMPRS13}
Birmel{\'{e}}, E., Ferreira, R.A., Grossi, R., Marino, A., Pisanti, N., Rizzi,
  R., Sacomoto, G.: Optimal listing of cycles and $st$-paths in undirected
  graphs. In: Khanna, S. (ed.) Proceedings of the Twenty-Fourth Annual
  {ACM-SIAM} Symposium on Discrete Algorithms ({SODA}~'13). pp. 1884--1896.
  {SIAM} (2013)

\bibitem{BjorklundHKK17}
Bj{\"{o}}rklund, A., Husfeldt, T., Kaski, P., Koivisto, M.: Narrow sieves for
  parameterized paths and packings. J. Comput. Syst. Sci.  \textbf{87},
  119--139 (2017)

\bibitem{BlazejCKSSV22}
Blazej, V., Choudhary, P., Knop, D., Schierreich, S., Such{\'{y}}, O., Valla,
  T.: On polynomial kernels for traveling salesperson problem and its
  generalizations. In: Proceedings of the 30th Annual European Symposium on
  Algorithms ({ESA}~'22). LIPIcs, vol.~244, pp. 22:1--22:16. Schloss Dagstuhl -
  Leibniz-Zentrum f{\"{u}}r Informatik (2022)

\bibitem{BodlaenderDFH09}
Bodlaender, H.L., Downey, R.G., Fellows, M.R., Hermelin, D.: On problems
  without polynomial kernels. J. Comput. Syst. Sci.  \textbf{75}(8),  423--434
  (2009)

\bibitem{BodlaenderJK13}
Bodlaender, H.L., Jansen, B.M.P., Kratsch, S.: Kernel bounds for path and cycle
  problems. Theor. Comput. Sci.  \textbf{511},  117--136 (2013)

\bibitem{BohmovaHMPSS18}
B{\"{o}}hmov{\'{a}}, K., H{\"{a}}fliger, L., Mihal{\'{a}}k, M., Pr{\"{o}}ger,
  T., Sacomoto, G., Sagot, M.: Computing and listing $st$-paths in public
  transportation networks. Theory Comput. Syst.  \textbf{62}(3),  600--621
  (2018)

\bibitem{CreignouMMSV17}
Creignou, N., Meier, A., M{\"{u}}ller, J., Schmidt, J., Vollmer, H.: Paradigms
  for parameterized enumeration. Theory Comput. Syst.  \textbf{60}(4),
  737--758 (2017)

\bibitem{CyganFKLMPPS15}
Cygan, M., Fomin, F.V., Kowalik, L., Lokshtanov, D., Marx, D., Pilipczuk, M.,
  Pilipczuk, M., Saurabh, S.: Parameterized Algorithms. Springer (2015)

\bibitem{Damaschke06}
Damaschke, P.: Parameterized enumeration, transversals, and imperfect phylogeny
  reconstruction. Theor. Comput. Sci.  \textbf{351}(3),  337--350 (2006)

\bibitem{Diestel-Book}
Diestel, R.: Graph Theory, 4th Edition, Graduate texts in mathematics,
  vol.~173. Springer (2012)

\bibitem{DowneyF13}
Downey, R.G., Fellows, M.R.: Fundamentals of Parameterized Complexity. Texts in
  Computer Science, Springer (2013)

\bibitem{Eppstein98}
Eppstein, D.: Finding the $k$ shortest paths. {SIAM} J. Comput.
  \textbf{28}(2),  652--673 (1998)

\bibitem{FellowsLMMRS09}
Fellows, M.R., Lokshtanov, D., Misra, N., Mnich, M., Rosamond, F.A., Saurabh,
  S.: The complexity ecology of parameters: An illustration using bounded max
  leaf number. Theory Comput. Syst.  \textbf{45}(4),  822--848 (2009)

\bibitem{FerreiraGRSS14}
Ferreira, R.A., Grossi, R., Rizzi, R., Sacomoto, G., Sagot, M.: Amortized
  \~{O}({\(\vert\)}{V}{\(\vert\)})-delay algorithm for listing chordless cycles
  in undirected graphs. In: Proceedings of the 22nd European Symposium on
  Algorithms ({ESA}~'14). Lecture Notes in Computer Science, vol.~8737, pp.
  418--429. Springer (2014)

\bibitem{FominLLSTZ19}
Fomin, F.V., Le, T., Lokshtanov, D., Saurabh, S., Thomass{\'{e}}, S., Zehavi,
  M.: Subquadratic kernels for implicit 3-hitting set and 3-set packing
  problems. {ACM} Trans. Algorithms  \textbf{15}(1),  13:1--13:44 (2019)

\bibitem{FominLPS16}
Fomin, F.V., Lokshtanov, D., Panolan, F., Saurabh, S.: Efficient computation of
  representative families with applications in parameterized and exact
  algorithms. J. {ACM}  \textbf{63}(4),  29:1--29:60 (2016)

\bibitem{FominLPS17}
Fomin, F.V., Lokshtanov, D., Panolan, F., Saurabh, S.: Representative families
  of product families. {ACM} Trans. Algorithms  \textbf{13}(3),  36:1--36:29
  (2017)

\bibitem{FominLSZ19}
Fomin, F.V., Lokshtanov, D., Saurabh, S., Zehavi, M.: {Kernelization: Theory of
  Parameterized Preprocessing}. Cambridge University Press (2019)

\bibitem{GolovachKKL22}
Golovach, P.A., Komusiewicz, C., Kratsch, D., Le, V.B.: Refined notions of
  parameterized enumeration kernels with applications to matching cut
  enumeration. J. Comput. Syst. Sci.  \textbf{123},  76--102 (2022)

\bibitem{gomes2024matchingmulticutalgorithmscomplexity}
Gomes, G.C.M., Juliano, E., Martins, G., dos Santos, V.F.: Matching (multi)cut:
  Algorithms, complexity, and enumeration. In: Proceedings of the 19th
  International Symposium on Parameterized and Exact Computation ({IPEC}~'24).
  LIPIcs, vol.~321, pp. 25:1--25:15. Schloss Dagstuhl - Leibniz-Zentrum
  f{\"{u}}r Informatik (2024)

\bibitem{GoyalMPZ15}
Goyal, P., Misra, N., Panolan, F., Zehavi, M.: Deterministic algorithms for
  matching and packing problems based on representative sets. {SIAM} J.
  Discret. Math.  \textbf{29}(4),  1815--1836 (2015)

\bibitem{HKSS13}
Ho{\`{a}}ng, C.T., Kaminski, M., Sawada, J., Sritharan, R.: Finding and listing
  induced paths and cycles. Discrete Applied Mathematics  \textbf{161}(4-5),
  633--641 (2013)

\bibitem{Jansen17}
Jansen, B.M.P.: Turing kernelization for finding long paths and cycles in
  restricted graph classes. J. Comput. Syst. Sci.  \textbf{85},  18--37 (2017)

\bibitem{JansenPW19}
Jansen, B.M.P., Pilipczuk, M., Wrochna, M.: Turing kernelization for finding
  long paths in graph classes excluding a topological minor. Algorithmica
  \textbf{81}(10),  3936--3967 (2019)

\bibitem{JansenS23}
Jansen, B.M.P., van~der Steenhoven, B.: Kernelization for counting problems on
  graphs: Preserving the number of minimum solutions. In: Proceedings of the
  18th International Symposium on Parameterized and Exact Computation
  ({IPEC}~'23). LIPIcs, vol.~285, pp. 27:1--27:15. Schloss Dagstuhl -
  Leibniz-Zentrum f{\"{u}}r Informatik (2023)

\bibitem{KobayashiKW21}
Kobayashi, Y., Kurita, K., Wasa, K.: Polynomial-delay enumeration of large
  maximal matchings. CoRR  \textbf{abs/2105.04146} (2021),
  \url{https://arxiv.org/abs/2105.04146}

\bibitem{KobayashiKW22}
Kobayashi, Y., Kurita, K., Wasa, K.: Polynomial-delay and polynomial-space
  enumeration of large maximal matchings. In: Proceedings of the 48th
  International Workshop on Graph-Theoretic Concepts in Computer Science
  ({WG}~'22). Lecture Notes in Computer Science, vol. 13453, pp. 342--355.
  Springer (2022)

\bibitem{LokshtanovMSbhZ24}
Lokshtanov, D., Misra, P., Saurabh, S., Zehavi, M.: Kernelization of counting
  problems. In: Proceedings of the 15th Innovations in Theoretical Computer
  Science Conference ({ITCS}~'24). LIPIcs, vol.~287, pp. 77:1--77:23. Schloss
  Dagstuhl - Leibniz-Zentrum f{\"{u}}r Informatik (2024)

\bibitem{LPRS17}
Lokshtanov, D., Panolan, F., Ramanujan, M.S., Saurabh, S.: Lossy kernelization.
  In: Proceedings of the 49th Annual {ACM} {SIGACT} Symposium on Theory of
  Computing ({STOC}~'17). pp. 224--237. {ACM} (2017)

\bibitem{MajumdarR23}
Majumdar, D.: Enumeration kernels of polynomial size for cuts of bounded
  degree. CoRR  \textbf{abs/2308.01286} (2023),
  \url{https://doi.org/10.48550/arXiv.2308.01286}

\bibitem{Meier20}
Meier, A.: Parametrised enumeration (2020). \doi{10.15488/9427},
  \url{https://doi.org/10.15488/9427}, {H}abilitation thesis, Leibniz
  Universit{\"a}t Hannover

\bibitem{NRT05}
Nishimura, N., Ragde, P., Thilikos, D.M.: Parameterized counting algorithms for
  general graph covering problems. In: Proceedings of the 9th International
  Workshop on Algorithms and Data Structures ({WADS}~'05). Lecture Notes in
  Computer Science, vol.~3608, pp. 99--109. Springer (2005)

\bibitem{PapadimitriouY82}
Papadimitriou, C.H., Yannakakis, M.: The complexity of restricted spanning tree
  problems. J. {ACM}  \textbf{29}(2),  285--309 (1982)

\bibitem{RizziSS14}
Rizzi, R., Sacomoto, G., Sagot, M.: Efficiently listing bounded length
  $st$-paths. In: Proceedings of the 25th International Workshop on
  Combinatorial Algorithms ({IWOCA}~'14). Lecture Notes in Computer Science,
  vol.~8986, pp. 318--329. Springer (2014)

\bibitem{Savage82}
Savage, C.D.: Depth-first search and the vertex cover problem. Inf. Process.
  Lett.  \textbf{14}(5),  233--237 (1982)

\bibitem{Thomasse10}
Thomass{\'{e}}, S.: A $4 k^2$ kernel for feedback vertex set. {ACM} Trans.
  Algorithms  \textbf{6}(2),  32:1--32:8 (2010)

\bibitem{Thur07}
Thurley, M.: Kernelizations for parameterized counting problems. In:
  Proceedings of the 4th International Conference on Theory and Applications of
  Models of Computation ({TAMC}~'07). Lecture Notes in Computer Science,
  vol.~4484, pp. 703--714. Springer (2007)

\bibitem{Wasa16arXiv}
Wasa, K.: Enumeration of enumeration algorithms. CoRR  \textbf{abs/1605.05102}
  (2016), \url{http://arxiv.org/abs/1605.05102}

\end{thebibliography}

\end{document}